  \providecommand\BibTeX{{%
    \normalfont B\kern-0.5em{\scshape i\kern-0.25em b}\kern-0.8em\TeX}}}
\newcommand{\code}[1]{\textsc{#1}}
\newcommand{\ouralg}{\textsc{QECC}}
\newcommand{\ourheur}{\textsc{QECC-heur}}
\DeclareMathOperator{\cost}{cost}
\newcommand{\balls}{\textsf{QwickCluster}}
\newcommand{\ballslow}{\textsf{SluggishCluster}}
\newcommand{\opti}{\func{OPT}}
\newcommand{\acc}{\textsf{ACC}}
\newcommand{\eps}{\varepsilon}
\newcommand{\func}[1]{\operatorname{#1}}
\DeclareMathOperator{\similarity}{sim}
\DeclareMathOperator{\cl}{\ell}
\newcommand{\mycomment}[1]{}
\DeclareMathOperator*{\argmax}{arg\,max}
\newenvironment{prooftext}[1]{\par\noindent{\bf Proof#1.}\quad}{\nopagebreak$\qed$\\}
\newenvironment{proofof}[1]{\begin{prooftext}{ of #1}}{\end{prooftext}}
\newtheorem{problem}{Problem}
\providecommand{\poly}{{\operatorname{poly}}}
\DeclareMathOperator*{\expect}{\mathbb{E}}
\DeclareRobustCommand{\calG}[0]{{\mathcal G}}
\DeclareRobustCommand{\calS}[0]{{\mathcal S}}
\newcommand{\squishlist}{\begin{list}{$\bullet$}
  { \setlength{\itemsep}{0pt}
     \setlength{\parsep}{3pt}
     \setlength{\topsep}{3pt}
     \setlength{\partopsep}{0pt}
     \setlength{\leftmargin}{1.5em}
     \setlength{\labelwidth}{1em}
     \setlength{\labelsep}{0.5em} } }
\newcommand{\squishend}{
  \end{list}  }
\newcommand{\spara}[1]{\smallskip\noindent{\bf #1}}
\begin{document}

%%
%% The "title" command has an optional parameter,
%% allowing the author to define a "short title" to be used in page headers.
\title{Query-Efficient Correlation Clustering}

%%
%% The "author" command and its associated commands are used to define
%% the authors and their affiliations.
%% Of note is the shared affiliation of the first two authors, and the
%% "authornote" and "authornotemark" commands
%% used to denote shared contribution to the research.

\author{David Garc\'ia--Soriano}
\email{d.garcia.soriano@isi.it}
\affiliation{\institution{ISI Foundation} \city{Turin} \country{Italy}}

\author{Konstantin Kutzkov}
\email{kutzkov@gmail.com}
\affiliation{\institution{Amalfi Analytics} \city{Barcelona} \country{Spain}}

\author{Francesco Bonchi}
\email{francesco.bonchi@isi.it}
\affiliation{\institution{ISI Foundation, Turin, Italy}}
\affiliation{\institution{Eurecat, Barcelona, Spain}}

\author{Charalampos Tsourakakis}
\email{ctsourak@bu.edu}
\affiliation{\institution{Boston University} \country{USA}}

%%
%% By default, the full list of authors will be used in the page
%% headers. Often, this list is too long, and will overlap
%% other information printed in the page headers. This command allows
%% the author to define a more concise list
%% of authors' names for this purpose.
\renewcommand{\shortauthors}{D. Garc\'ia--Soriano, K. Kutzkov, F. Bonchi, and C. Tsourakakis}

%%
%% The abstract is a short summary of the work to be presented in the
%% article.
\begin{abstract}
\emph{Correlation clustering} is arguably the most natural formulation of clustering. Given $n$ objects and a pairwise similarity measure, the goal is to cluster the objects so that, to the best possible extent, similar objects are put in the same cluster and dissimilar objects are put in different clusters.

A main drawback of correlation clustering is that
it requires as input the $\Theta(n^2)$ pairwise similarities. This is often infeasible to compute or even just to store.
In this paper we study \emph{query-efficient} algorithms for correlation clustering.  Specifically, we devise a  correlation clustering algorithm that, given a budget of $Q$
queries, attains a solution whose expected number of disagreements is at most $3\cdot \opti + O(\frac{n^3}{Q})$, where $\opti$ is the optimal cost for the instance.  Its running
time is $O(Q)$, and can be easily made non-adaptive (meaning it can specify all its queries at the outset and make them in parallel) with the same guarantees.
Up to constant factors, our algorithm yields a provably optimal trade-off between the number of queries $Q$ and the worst-case
error attained, even for adaptive algorithms.  

Finally, we perform an experimental study of our proposed method on both synthetic and real data,  showing the scalability and the accuracy of our algorithm.

%We also present a practical implementation of our algorithm that often yields smaller cost, especially in terms of the recall of positive edges in the input graph.

%Finally, we perform an experimental study of both of our algorithms on both synthetic and real-world graphs, showing their scalability and accuracy for a wide range of graph sizes, noise levels, and imbalance parameters in the ground clustering.

%\textbf{Keywords}: active clustering, local computation, sublinear algorithms, data reconstruction, property testing.

\mycomment{
    \emph{Correlation clustering} is arguably the most natural formulation of clustering. Given $n$ objects and a pairwise similarity measure, the goal is
    to cluster the objects so that, to the best possible extent, similar objects are put in the same cluster and dissimilar objects are put in different clusters.

    Despite its theoretical appeal, the practical relevance of correlation clustering still remains largely unexplored. This is mainly due to the fact
    that it
    requires the $\Theta(n^2)$ pairwise similarities as input. In large datasets this is infeasible to compute or even just to store.

    In this paper we initiate the investigation into \emph{local} algorithms  for correlation clustering.  In \emph{local correlation clustering} we are
    given the identifier of a single object and we want to return the cluster to which it belongs in some globally consistent near-optimal clustering, using a small number of similarity queries.

    Local algorithms for correlation clustering open the door to \emph{sublinear-time} algorithms, which are particularly useful when the similarities
    between items are costly to compute, as it is often the case in many practical application domains.
    They also imply $(i)$ distributed and streaming clustering algorithms,  $(ii)$ constant-time estimators and testers for cluster edit
    distance, and $(iii)$ property-preserving parallel reconstruction algorithms for clusterability.% in the model of Ailon \emph{et al.} (Algorithmica 2008).

    Specifically, we devise a local clustering algorithm attaining a $(3, \eps)$-approximation (a solution with cost at most $3\cdot \opti + \eps n^2$, where $\opti$ is the optimal cost). Its running time is $O(1/\eps^2)$ independently of the dataset size. If desired, an explicit approximate clustering for all $n$ objects can be produced in time
    $O(n/\eps)$ (which is provably optimal).
    %Our main tool is the analysis of a sequential procedure for
    %finding constant-sized ``almost-dominating'' sets of vertices in any graph.
    We also provide a fully additive $(1,\eps)$-approximation with local query
    complexity $\poly(1/\eps)$ and time complexity $\expt{1/\eps}$. The explicit clustering can be found in time $ n\cdot \poly(1/\eps)
    + 2^{\poly(1/\eps)}$. %In general we obtain an $(1,\eps)$ approximation to the best $k$-clustering with $\poly(k/\eps)$ queries and explicit running time $n \cdot \poly(k/\eps) + 2^{\poly(k/\eps)}$.
    The latter yields the fastest polynomial-time approximation scheme for
    correlation clustering known to date.
    %, improving the results of Bansal \emph{et al.}. (FOCS'02) and Giotis and Guruswami (SODA'06).
}

\end{abstract}

\begin{CCSXML}
<ccs2012>
   <concept>
       <concept_id>10003752.10003809.10003635</concept_id>
       <concept_desc>Theory of computation~Graph algorithms analysis</concept_desc>
       <concept_significance>500</concept_significance>
       </concept>
   <concept>
       <concept_id>10003752.10003809.10003636.10003812</concept_id>
       <concept_desc>Theory of computation~Facility location and clustering</concept_desc>
       <concept_significance>500</concept_significance>
       </concept>
   <concept>
       <concept_id>10003752.10010070.10010071.10010286</concept_id>
       <concept_desc>Theory of computation~Active learning</concept_desc>
       <concept_significance>500</concept_significance>
       </concept>
 </ccs2012>
\end{CCSXML}

\ccsdesc[500]{Theory of computation~Graph algorithms analysis}
\ccsdesc[500]{Theory of computation~Facility location and clustering}
\ccsdesc[500]{Theory of computation~Active learning}

%%
%% Keywords. The author(s) should pick words that accurately describe
%% the work being presented. Separate the keywords with commas.
\keywords{correlation clustering, active learning, query complexity, algorithm design}

%% A "teaser" image appears between the author and affiliation
%% information and the body of the document, and typically spans the
%% page.
%\begin{teaserfigure}
%  \includegraphics[width=\textwidth]{sampleteaser}
%  \caption{Seattle Mariners at Spring Training, 2010.}
%  \Description{Enjoying the baseball game from the third-base
%  seats. Ichiro Suzuki preparing to bat.}
%  \label{fig:teaser}
%\end{teaserfigure}

%%
%% This command processes the author and affiliation and title
%% information and builds the first part of the formatted document.
\maketitle

\section{Introduction}
\label{sec:intro}
\emph{Correlation clustering}~\cite{correlation_clustering} (or \emph{cluster editing}) is a prominent clustering framework where   we are given a set $V=[n]$ and a symmetric pairwise similarity function~$\similarity: \binom{V}{2}
\rightarrow \{ 0,1\}$, where $\binom{V}{2}$ is the set of unordered pairs of elements of $V$. The goal is to cluster the items in such a way that, to
the best possible extent, similar objects are put in the same cluster and dissimilar objects are put in different clusters. Assuming that cluster identifiers are represented by natural numbers, a clustering $\cl$ is a function $\cl:V \rightarrow \mathbb{N}$, and each cluster is a maximal set of vertices sharing the same label. Correlation clustering aims at minimizing the following cost:

\begin{equation}
\label{equation:correlation-clustering} 	
\cost(\cl) =
 \sum_{\substack{(x,y) \in \binom{V}{2}, \\ \cl(x)=\cl(y)}} (1-\similarity(x,y)) \mspace{5.0mu} +\mspace{-15.0mu}
 \sum_{\substack{(x,y) \in \binom{V}{2}, \\ \cl(x)\not=\cl(y)}} \similarity(x,y).
%\mspace{-20.0mu} \sum_{\substack{(x,y) \in \binom{V}{2}, \\ \cl(x)=\cl(y)}} (1-\similarity(x,y)) \mspace{5.0mu} +
%\mspace{-15.0mu} \sum_{\substack{(x,y) \in \binom{V}{2}, \\ \cl(x)\not=\cl(y)}} \similarity(x,y).
\end{equation}

The intuition underlying the above problem definition is that
if two objects $x$ and $y$ are dissimilar and are assigned to the same cluster we should pay a cost of $1$, i.e., the amount of their dissimilarity. Similarly, if $x,y$ are similar and they are assigned to different clusters we should pay also cost $1$, i.e., the amount of their similarity $\similarity(x,y)$.
The correlation clustering framework naturally extends to non-binary, symmetric function, i.e.,  $\similarity: \binom{V}{2}\rightarrow [0,1]$. In this paper we focus on the binary case; the general non-binary case can be efficiently reduced to this case at a loss of only a constant factor in the approximation~{\cite[Thm. 23]{correlation_clustering}}.
The binary setting can be viewed very conveniently through graph-theoretic lenses: the $n$ items correspond to the vertices of a \emph{similarity graph} $G$, which is a complete undirected graph with edges labeled ``+'' or ``-''.
An edge $e$ causes a \emph{disagreement} (of \emph{cost} 1) between the similarity graph and a
clustering when it is a ``+'' edge connecting vertices in different clusters, or a ``--''
edge connecting vertices within the same cluster.
If we were given a \emph{cluster graph} \cite{cluster_editing}, i.e., a graph whose set of positive edges is the union
of vertex-disjoint cliques, we would be able to produce a perfect (i.e., cost 0) clustering simply by computing the connected components of the positive graph.
However, similarities will generally be inconsistent with one another, so incurring a certain cost is unavoidable.
Correlation clustering aims at minimizing such cost.
The problem may be viewed as  the task of finding the equivalence relation that most closely resembles a given symmetric relation.
The correlation clustering problem is
NP-hard~\cite{correlation_clustering, cluster_editing}.

Correlation clustering is particularly appealing for the task of clustering structured objects, where the similarity function is domain-specific. A typical application is clustering web-pages based on similarity scores: for each pair of pages we have a score between 0 and 1, and we would like to cluster the pages so that
pages with a high similarity score are in the same cluster, and pages with a low similarity score are in different clusters.
The technique is applicable to a multitude of problems in different domains, including duplicate detection and  similarity joins~\cite{duplicate_detection,corr_weighted}, spam detection~\cite{spam_filter, cc_tutorial}, co-reference resolution~\cite{correference}, biology~\cite{clustering_genes,BonchiGU13}, image segmentation~\cite{image_segmentation}, social networks~\cite{chromatic_clustering}, and clustering aggregation~\cite{clustering_aggregation}.
A key feature of correlation clustering is that it does not require the number of clusters as part of the input; instead it automatically finds the optimal number, performing model selection.

Despite its appeal, the main practical drawback of correlation clustering is the fact that, given $n$ items to be clustered,
$\Theta(n^2)$ similarity computations are needed to prepare the similarity graph that serves as input for the algorithm.  In addition to the obvious algorithmic cost involved with
$\Theta(n^2)$ queries, in certain applications  there is an additional type of cost that may render correlation clustering algorithms impractical.  Consider the following
motivating real-world scenarios.  In biological sciences, in order to produce a network of interactions between a set of biological entities (e.g., proteins), a highly trained
professional has to devote time and costly resources (e.g., equipment) to perform  tests between all ${n \choose 2}$ pairs of entities. In entity resolution, a task central to data
integration and data cleaning \cite{wang2012crowder}, a crowdsourcing-based approach performs queries to workers  of the form ``does the record $x$ represent the same entity as the
record $y$?''. Such queries to workers involve a monetary cost, so it is desirable to reduce their number. In both scenarios  developing clustering tools that use fewer than ${n \choose 2}$ queries is of major interest.  This is the main motivation behind our work. At a high level we answer the following question:

\smallskip

\begin{mdframed}[innerbottommargin=5pt,innertopmargin=1pt,innerleftmargin=6pt,innerrightmargin=6pt,backgroundcolor=gray!10,roundcorner=10pt]
\begin{problem}\label{prob1}
How to design a correlation clustering algorithm that outputs a good approximation in a \emph{query-efficient} manner: i.e., given a budget of $Q$ queries, the algorithm is allowed to learn the specific value of $\similarity(i, j) \in \{0,1\}$ for up to $Q$ queries $(i, j)$ of the algorithm's choice.
\end{problem}
\end{mdframed}

\spara{Contributions.} The main contributions of this work are summarized as follows:

\begin{itemize}
  \item We design a computationally efficient randomized algorithm $\ouralg$ that, given
a budget of $Q$ queries, attains a solution whose expected number of disagreements is at most $3\cdot \opti + O(\frac{n^3}{Q})$, where $\opti$ is the optimal cost of the
correlation clustering instance (Theorem~\ref{thm:ub}).
We can achieve this via a \emph{non-adaptive} algorithm (Theorem~\ref{thm:ub2}).%, meaning that all its queries can be specified at the outset and can thus be performed in parallel.

  \item We show (Theorem~\ref{thm:lb}) that up to constant factors, our algorithm is optimal, even for adaptive algorithms: any algorithm making $Q$ queries must make at least
$\Omega(\opti + \frac{n^3}{Q})$ errors.% for some instance.

  \item We give a simple, intuitive heuristic modification of our algorithm $\ourheur$ which helps reduce the error  of the algorithm in practice (specifically the recall of
        positive edges), thus partially
bridging the constant-factor gap between our lower and upper bounds.

  \item We present an experimental study of our two algorithms, compare their performance with a baseline based on affinity propagation, and study their sensitivity to parameters such
as graph size, number of clusters, imbalance, and noise.

\end{itemize}

\section{Related work}\label{sec:related}
We review briefly the related work that lies closest to our  paper.

\spara{Correlation clustering.}  The correlation clustering problem is NP-hard~\cite{correlation_clustering, cluster_editing} and, in its minimizing disagreements formulation used above, is also
APX-hard~\cite{cluster_qualitative}, so we do not expect a polynomial-time approximation scheme.
Nevertheless, there are constant-factor approximation algorithms~\cite{correlation_clustering,cluster_qualitative,balls}.
Ailon et al.~\cite{balls} present $\balls$, a
simple, elegant 3-approximation algorithm. They improve the approximation ratio to $2.5$ by utilizing an LP relaxation of the problem; the best approximation factor known to date
is $2.06$, due to Chawla et al.~\cite{near_opt}. The interested reader may refer to the extensive survey due to Bonchi, Garc\'ia-Soriano and Liberty \cite{cc_tutorial}.

\spara{Query-efficient algorithms for correlation clustering.} Query-efficient correlation clustering  has received less attention.  There exist two categories of algorithms, non-adaptive and adaptive. The
former choose their queries before-hand, while the latter can select the next query  based on the response to previous queries.

In an earlier preprint~\cite{local_corr} we initiated the study of query-efficient correlation clustering. Our work there focused on a stronger local model which requires answering cluster-id queries quickly, i.e., outputting a cluster label for each given vertex by querying at most $q$ edges per vertex.
Such a
$q$-query local algorithm %allows answering same-cluster queries with $2q$ edge queries, and
allows a global clustering of the graph with $q n$ queries; hence upper bounds for local
clustering imply upper bounds for global clustering, which is the model we consider in this paper. % (whereas lower bounds for global clustering imply lower bounds for local clustering).
The algorithm from~\cite{local_corr} is non-adaptive, and the upper bounds we present here (Theorems~\ref{thm:ub} and~\ref{thm:ub2}) may be
recovered by setting
$\epsilon = \frac nQ$ in~\cite[Thm. 3.3]{local_corr}.
A matching lower bound  was also proved in~\cite[Thm. 6.1]{local_corr}, but the proof therein applied only to non-adaptive algorithms.
In this paper we present a self-contained analysis of the algorithm from~\cite{local_corr} in the global setting (Theorems~\ref{thm:ub} and~\ref{thm:ub2}) and
strengthen the lower bound so that it applies also to adaptive
algorithms (Theorem~\ref{thm:lb}). Additionally, we perform an experimental study of the algorithm.

Some of the results from~\cite{local_corr} have been rediscovered  several years later (in a weaker form) 
    by Bressan, Cesa-Bianchi, Paudice, and Vitale~\cite{acc}. They study  the problem of query-efficient correlation clustering (Problem~1) in the adaptive setting, and provide a query-efficient algorithm, named $\acc$.
The performance guarantee they obtain in~\cite[Thm. 1]{acc} is asymptotically the same that had already been proven in~\cite{local_corr}, but it has worse constant factors and is attained via an
adaptive algorithm. They also modify the lower bound proof from~\cite{local_corr} to make it adaptive~\cite[Thm. 9]{acc}, and present some new results concerning the
cluster-recovery properties of the algorithm.%$\acc$ and exponential-time query-efficient algorithms.

%In terms of techniques, the $\acc$ algorithm from~\cite{acc} is very similar to ours, the only difference being the addition of
In terms of techniques, the only difference between our algorithm $\ouralg$ and the $\acc$ algorithm from~\cite{acc} is that the latter adds
a check that discards pivots when no neighbor
is found after inspecting a random sample of size $f(n-1)=Q/(n-1)$.      This additional check is
unnecessary from a theoretical viewpoint  (see Theorem~\ref{thm:ub}) and it has the disadvantage that it necessarily results in an adaptive algorithm.
%In addition, such check necessarily results in an adaptive algorithm.%, while our algorithm can be made non-adaptive (Theorem~\ref{thm:ub2}).
Moreover, the analysis of ~\cite{acc} is significantly more complex than ours, because they need to adapt the proof of the approximation guarantees of the $\balls$
algorithm from
~\cite{balls} to take into account the additional check, whereas we simply take the approximation guarantee as given and argue that stopping $\balls$ after $k$
pivots have been selected only incurs an expected additional cost of $n^2/(2k)$ (Lemma~\ref{lem:indep}).

\mycomment{
\spara{Other sublinear-time clustering algorithms} (which give approximate answer in time sublinear in the input size) for geometric data sets have also been investigated~\cite{testing_clustering,sublinear_approx_cluster,clustering_similarity,sublinear_clustering,sublinear_clustering2}.
    Many of these find implicit representations of the clustering they output.  There is a natural implicit
    representation for most of this problems, e.g., the set of $k$ cluster centers.
    By contrast, in correlation clustering there is no clear way to define a clustering for the whole graph based on a small set of
    vertices. %The only sublinear-time algorithm known for correlation clustering is the aforementioned result of ~\cite{fixed_clusters}; it runs in time~$O(n)$, but the multiplicative constant hidden in the notation has an exponential dependence on the approximation parameter.
}

\mycomment{
Sublinear clustering algorithms for geometric data sets have been investigated~\cite{testing_clustering,sublinear_approx_cluster,sublinear_clustering,sublinear_clustering2}.
    Many of these find implicit representations of the clustering they output.  There is a natural implicit
    representation for most of this problems, e.g., the set of $k$ cluster centers.
    By contrast, in correlation clustering there is no clear way to define a clustering for the whole graph based on a small set of
    vertices. %The only sublinear-time algorithm known for correlation clustering is the aforementioned result of ~\cite{fixed_clusters}; it runs in time~$O(n)$, but the multiplicative constant hidden in the notation has an exponential dependence on the approximation parameter.
    The literature on \emph{active clustering} also contains algorithms with sublinear query complexity (see, e.g., \cite{active_clustering}); many of
    them are heuristic or do not apply to correlation clustering. Ailon \emph{et al.}~\cite{active_queries} obtain algorithms for minimizing disagreements with sublinear query
    complexity, but their running time is exponential in $n$.

%    We also note that one may also use the results from~\cite{fixed_clusters} to derive a $Q$-query algorithm for correlation clustering with error $O(n^3/Q)$, but the running time of such an
%    algorithm would be exponential in $Q / n$.

\spara{Correlation clustering.}
    Minimizing disagreements is the same as maximizing agreements for exact algorithms, but the two tasks differ with regard to
    approximation. Following~\cite{fixed_clusters}, we refer to these two problems as $\code{MaxAgree}$ and $\code{MinDisagree}$, while  $\code{MaxAgree}[k]$ and $\code{MinDisagree}[k]$ refer to the variants of the problem with a bound $k$ on the number of clusters.
    Not surprisingly $\code{MaxAgree}$ and $\code{MinDisagree}$
    are $\NP$-complete~\cite{correlation_clustering, cluster_editing}; the same holds for their bounded
    counterparts, provided that $k \ge 2$. Therefore approximate solutions are of interest. For $\code{MaxAgree}$, there is
    a (randomized) \PTAS:
    %As the maximum number of agreements is $\Omega(n^2)$, the $\PTAS$ is obtained through algorithms that approximate the optimal number of agreements to within an $\eps n^2$ additive error.
    the first such result was due to Bansal \emph{et al.}~\cite{correlation_clustering} and ran in time $n^2 \exp{(O(1/\eps))}$, later
    improved to $n \cdot 2^{\poly(1/\eps)}$ by Giotis and Guruswami~\cite{fixed_clusters}. The latter also presented a $\PTAS$ for $\code{MaxAgree}[k]$ that runs in time $n \cdot k^{O(\eps^{-3} \log(k/\eps))}$.
    In contrast, $\code{MinDisagree}$ is $\APX$-hard~\cite{cluster_qualitative}, so we do not expect a
    \PTAS. Nevertheless, there are constant-factor approximation algorithms~\cite{correlation_clustering,cluster_qualitative,balls}.
    The best factor ($2.5$) was given by Ailon \emph{et al.}~\cite{balls}, who also present a simple, elegant algorithm that achieves
    a slightly weaker expected approximation ratio of $3$, called $\balls$ (see
    Section~\ref{sec:main}). %A $\PTAS$ for $\code{MinDisagree}[k]$ was presented in ~\cite{fixed_clusters}, with running time $n^{O(9^k/\eps^2)} \log n$.
    %The time was improved to $n^2 \cdot 2^{O(k^6/\eps^2)}$ by Karpinski and Schudy~\cite{lt_gb}.
    For $\code{MinDisagree}[k]$, $\PTAS$ appeared in ~\cite{fixed_clusters} and~\cite{lt_gb}.
    There is also work on correlation clustering on incomplete
    graphs~\cite{correlation_clustering,cluster_qualitative,clustering_sdp,fixed_clusters,corr_weighted}.

    \spara{Sublinear clustering algorithms.}
    %Sublinear-time algorithms have been a very active area of recent research; see the excellent surveys~\cite{sublinear,sublinear3,f_sur}.
    Sublinear clustering algorithms for geometric data sets are known~\cite{testing_clustering,sublinear_approx_cluster,clustering_similarity,sublinear_clustering,sublinear_clustering2}.
    Many of these find implicit representations of the clustering they output.  There is a natural implicit
    representation for most of this problems, e.g., the set of $k$ cluster centers. % (This is also related to the concept of a \emph{coreset}, i.e., a subset of the input such that one can get a good approximation to the original clustering problem by solving the  problem on the coreset).
    By contrast, in correlation clustering there may be no clear way to define a clustering for the whole graph based on a small set of
    vertices. The only sublinear-time algorithm known for correlation clustering is the aforementioned result of ~\cite{fixed_clusters};
    it runs in time~$O(n)$, but the multiplicative constant hidden in the notation has an exponential dependence on the approximation parameter.

    %Little appears to be known about correlation clustering in time time sublinear in the graph size.
    %The only such algorithm we are aware of comes from the aforementioned work of~\cite{fixed_clusters}. % Unfortunately it does not seem well suited to practical applications
    The literature on \emph{active clustering} also contains algorithms with sublinear query complexity (see, e.g., \cite{active_clustering}); many of
    them are heuristic or do not apply to correlation clustering. Ailon \emph{et al.}~\cite{active_queries} obtain algorithms for $\code{MinDisagree}[k]$ with sublinear query complexity, but the running time of their solutions is exponential in $n$.
}

\section{Algorithm and analysis}\label{sec:main}
%\enlargethispage{\baselineskip}
\mycomment{
Before presenting our algorithm,  we need to describe the \balls\ procedure of Ailon et al.~\cite{balls}. It selects a random pivot, creates a cluster with it and its positive
neighborhood, removes the cluster, and iterates on the induced subgraph remaining. Essentially, it finds a maximal independent set in the positive graph in random order.
In~\cite{balls}, the authors show that the expected cost of the clustering found by $\balls$ is at most $3\opti$.
\begin{algorithm}\label{alg:balls}
\begin{algorithmic}[0]
\Require $G = (V, E)$, a complete graph with ``+,-'' edge labels
\State $R \gets V$ \Comment Unclustered vertices so far
\While {$R \neq \emptyset$}
\State Pick a pivot $v$ from $R$ uniformly at random.
\State Output cluster $C = \{v\} \cup \Gamma_G^+(v) \cap R$.
\State $R \gets V \setminus C$
\EndWhile
\caption{\balls}
\end{algorithmic}
\end{algorithm}
}

Before presenting our algorithm, we describe in greater detail the elegant algorithm due to Ailon et al. \cite{balls} for correlation clustering, as it lies close to our proposed method.

       \spara{\balls\ algorithm.}  The \balls\ algorithm selects a random pivot $v$, creates a cluster with $v$ and its positive
neighborhood, removes the cluster, and iterates on the induced  remaining subgraph. Essentially it finds a maximal independent set in the positive graph in random order.
The elements in this set serve as cluster centers (pivots) in the order in which they were found.
In the pseudocode below, $\Gamma_G^+(v)$ %  = \{u \in V \mid (u, v) \in E^+ \}$ 
denotes the set of vertices to which there is a positive edge in $G$ from $v$.

\begin{algorithm}\label{alg:balls}
\begin{algorithmic}[0]
\Require $G = (V, E)$, a complete graph with ``+,-'' edge labels
\State $R \gets V$ \Comment Unclustered vertices so far
\While {$R \neq \emptyset$}
\State Pick a pivot $v$ from $R$ uniformly at random.
\State Output cluster $C = \{v\} \cup \Gamma_G^+(v) \cap R$.
\State $R \gets V \setminus C$
\EndWhile
\caption{\balls}
\end{algorithmic}
\end{algorithm}

\noindent When the graph is clusterable, $\balls$ makes no mistakes. In~\cite{balls}, the authors show that the expected cost of the clustering found by $\balls$ is at most
$3\opti$, where $\opti$ denotes the optimal cost.

%\spara{Preliminaries.}
%All our graphs are undirected and simple. For a vertex $v$, $\Gamma^+(v)$ is the set
%of positive edges incident with $v$; similarly define $\Gamma^-(v)$.
%We extend this notation to sets of vertices in the obvious manner.
%The  \emph{distance} between two graphs $G=(V,E)$ and $G'=(V,E')$ is  $|E \oplus E'|$.  Their fractional distance is their distance divided by $n^2$ (note this is in the interval $[0, 1/2)$).  Two graphs are \emph{$\eps$-close} to each other if their distance is at most most $\eps n^2$.  A \emph{$k$-clusterable} graph is a union of at most $k$ vertex-disjoint cliques.  A graph is \emph{clusterable} if it is $k$-clusterable for some $k$.

\spara{$\ouralg$.} Our algorithm $\ouralg$ (Query-Efficient Correlation Clustering) runs $\balls$ until the query budget $Q$ is complete, and then outputs singleton clusters for the remaining unclustered vertices.
\begin{algorithm}\label{alg:ours}
\begin{algorithmic}[0]
\Require $G = (V, E)$; query budget $Q$
\State $R \gets V$ \Comment Unclustered vertices so far
\While {$R \neq \emptyset \wedge Q \ge |R| - 1$}
\State Pick a pivot $v$ from $R$ uniformly at random.
\State Query all pairs $(v, w)$ for $w \in R\setminus\{v\}$ to determine $\Gamma_G^+(v) \cap R$.
\State $Q \gets Q - |R| + 1$
\State Output cluster $C = \{v\} \cup \Gamma_G^+(v) \cap R$.
\State $R \gets V \setminus C$
\EndWhile
\State Output a separate singleton cluster for each remaining $v \in R$.
\caption{\ouralg}
\end{algorithmic}
\end{algorithm}
The following subsection is devoted to the proof of our main result, stated next.
\begin{theorem}\label{thm:ub}
Let $G$ be a graph with $n$ vertices.
For any $Q > 0$, Algorithm $\ouralg$ finds a clustering of $G$ with expected cost at most $3\cdot \opti + \frac{n^3}{2 Q}$ making at most $Q$ edge queries.
It runs in time $O(Q)$ assuming unit-cost queries.
\end{theorem}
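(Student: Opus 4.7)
My plan is to couple an execution of $\ouralg$ with a run of $\balls$ using the same random pivots, invoke the $3$-approximation of $\balls$ from~\cite{balls}, and then separately bound the extra cost caused by the premature stop through a short auxiliary lemma on the expected number of positive edges remaining among the still-unclustered vertices.

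First I would set up the coupling. With identical random choices, $\ouralg$ and $\balls$ agree on the first $k$ iterations, where $k$ is the (random) number of rounds that $\ouralg$ executes before its budget is exhausted; in particular, the vertices of $V\setminus R_k$ land in the same clusters under both algorithms. Any pair $\{u,v\}$ with at most one endpoint in $R_k$ is therefore placed in the same configuration by both runs and contributes equally to both costs. Only pairs with both endpoints in $R_k$ differ: $\ouralg$ splits them into singletons and pays exactly the number of positive edges inside $R_k$ (call this $E_+(R_k)$), whereas $\balls$ pays a nonnegative amount. Hence
\[
\expect[\cost(\ouralg)] \;\le\; \expect[\cost(\balls)] + \expect[E_+(R_k)] \;\le\; 3\cdot\opti + \expect[E_+(R_k)].
\]

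The heart of the proof is then the auxiliary lemma: if $\balls$ is run for a deterministic $K$ iterations, $\expect[E_+(R_K)] \le n^2/(2K)$. The key identity, coming from the uniform choice of pivot, is
\[
\expect\!\bigl[\,|R_i|-|R_{i+1}|\;\big|\;R_i\bigr] \;=\; 1 + \frac{2E_+(R_i)}{|R_i|},
\]
because the pivot contributes itself plus its positive neighborhood in $R_i$, and the expected positive degree of a uniform vertex of $R_i$ equals $2E_+(R_i)/|R_i|$. Rearranging and summing over $i=0,\dots,K-1$ telescopes to
\[
\sum_{i=0}^{K-1}\expect\bigl[E_+(R_i)/|R_i|\bigr] \;=\; \tfrac12\bigl(n-\expect[|R_K|]-K\bigr) \;\le\; \tfrac{n-K}{2}.
\]
Using $R_K\subseteq R_i$ to deduce $E_+(R_K)\le E_+(R_i)$, and $|R_i|\le n$ to lower-bound each summand by $\expect[E_+(R_K)]/n$, one obtains $K\cdot\expect[E_+(R_K)]/n \le (n-K)/2$, so $\expect[E_+(R_K)]\le n^2/(2K)$.

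Finally, I would connect $K$ to the query budget. Each iteration consumes $|R_i|-1 < n$ queries, so $\ouralg$ completes at least $K = \lfloor Q/n\rfloor$ iterations (or terminates early with $R=\varnothing$, in which case the extra cost is $0$); the inclusion $R_k\subseteq R_K$ then gives $E_+(R_k)\le E_+(R_K)$ and the lemma delivers $\expect[E_+(R_k)]\le n^3/(2Q)$, up to negligible lower-order adjustments. The $Q$-query bound is enforced by the while guard, and the running time is $O(Q)$ because each iteration performs $O(|R_i|)$ work and the queries sum to at most $Q$. The main obstacle will be the auxiliary lemma, specifically the step converting the telescoped sum $\sum_i\expect[E_+(R_i)/|R_i|]$ into the pointwise bound on $\expect[E_+(R_K)]$: the monotonicity $E_+(R_K)\le E_+(R_i)$ and the crude estimate $|R_i|\le n$ are both slightly wasteful, so a careful accounting is needed to pin down the constant $1/2$. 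Everything else is routine coupling and linearity of expectation.
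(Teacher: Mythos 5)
Your proposal is correct and your coupling/comparison step (bounding $\cost(\ouralg)-\cost(\balls)$ by the positive edges inside the leftover set $R_k$) is the same as the paper's. Where you genuinely diverge is the auxiliary lemma: the paper tracks the \emph{edge density} $\alpha_i = 2|E(G_i)|/n^2$ and proves $\expect[\alpha_i]\le\expect[\alpha_{i-1}](1-\expect[\alpha_{i-1}])$ via a Cauchy--Schwarz argument (Lemma~\ref{lem:del_edges}) followed by Jensen's inequality for the concave map $x\mapsto x(1-x)$, then inducts to get $\expect[\alpha_i]<1/(i+1)$; it also needs a $\ballslow$ coupling device to deal with repeated pivot draws. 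You instead track $|R_i|$ directly and exploit the exact identity $\expect\bigl[|R_i|-|R_{i+1}|\mid R_i\bigr]=1+2E_+(R_i)/|R_i|$, telescope, and then use monotonicity $E_+(R_K)\le E_+(R_i)$ with the crude bound $|R_i|\le n$. This is shorter, avoids Cauchy--Schwarz, Jensen, and the $\ballslow$ artifact, and it is a nice observation that an exact conditional identity on $|R_i|$ does the job where the paper uses a lower bound on deleted edges. The price is a small constant-factor slack: your version of the lemma yields $\expect[E_+(R_K)]\le n^2/(2K)$ rather than the paper's $n^2/(2(K+1))$, and combined with $K=\lfloor Q/n\rfloor$ this gives something like $n^3/(2(Q-n))$, not quite $n^3/(2Q)$. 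This is easily repaired: extend the telescoping sum to include the index $i=K$ (every term is nonnegative, and $E_+(R_K)/|R_K|\ge E_+(R_K)/n$ still holds, with the convention that the term vanishes if $R_K=\varnothing$), which recovers the $K+1$ in the denominator and, taking $K=\lfloor Q/(n-1)\rfloor$ so that $K+1>Q/(n-1)>Q/n$, yields the stated $n^3/(2Q)$ exactly. One further point to make explicit: when some $R_i$ becomes empty before step $K$, the identity as you wrote it has a $0/0$; the clean fix is to define the summand as $0$ there, note $E_+(R_K)=0$ too, and observe that the telescope then contributes a $\Pr[R_i\neq\varnothing]$ term rather than a deterministic $1$ for each $i$---this can only make the right-hand side of your bound smaller, so the conclusion survives.
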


\subsection{Analysis  of $\ouralg$}
%\enlargethispage{\baselineskip}
For simplicity, in the rest of this section we will identify a complete ``+,-'' labeled graph $G$ with its graph of \emph{positive} edges $(V, E^+)$, so that
queries correspond to querying a pair of vertices for the existence of an edge. The set of (positive) neighbors of $v$ in a graph $G = (V, E)$ will be denoted $\Gamma(v)$; a similar
notation is used for the set $\Gamma(S)$ of positive neighbors of a set $S \subseteq V$.
The cost of the optimum clustering for $G$ is denoted $\opti$.
When $\ell$ is a clustering, $\cost(\ell)$ denotes the cost (number of disagreements) of this clustering,
     defined by~\eqref{equation:correlation-clustering} with $\similarity(x, y) = 1$ iff $\{x, y\} \in E$.

In order to analyze $\ouralg$, we need to understand how early stopping of $\balls$ affects  the accuracy of the clustering found.
For any non-empty graph~$G$ and pivot $v \in V(G)$, let $N_v(G)$ denote the subgraph of $G$
resulting from removing all edges incident to $\Gamma(v)$ (keeping all vertices).
Define a random sequence $G_0, G_1, \ldots$ of graphs by $G_0 = G$ and $G_{i+1} =
N_{v_{i+1}}(G_i)$, where $v_1, v_2, \ldots$ are chosen independently and uniformly at random from $V(G_0)$. Note
        that $G_{i+1} = G_i$ if at step $i$ a vertex is chosen for a second time.

 The following lemma is key:
\begin{lemma}\label{lem:del_edges}
Let $G_i$ have average degree $\tilde{d}$. When going from $G_i$ to $G_{i+1}$, the number of edges
decreases in expectation by at least $\binom{\tilde{d}+1}{2}$.
%, and the number of degree-0 vertices increases in expectation by at least $\tilde{d} + 1$.
\end{lemma}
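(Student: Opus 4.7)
The plan is to use linearity of expectation over the edges of $G_i$. Fix $G_i$, let $E = E(G_i)$, and for $u \in V$ write $\Gamma[u] = \{u\} \cup \Gamma(u)$ for the closed neighborhood of $u$ in $G_i$ and $d_u = |\Gamma(u)|$ for its degree. A preliminary observation is that removing every edge incident to $\Gamma(v)$ is the same as removing every edge incident to $\Gamma[v]$: the only edges touching $v$ itself already have their other endpoint in $\Gamma(v)$, so they are counted either way. Consequently, an edge $e = \{a,b\} \in E$ disappears in the transition $G_i \to G_{i+1}$ iff at least one of $a,b$ lies in $\Gamma[v_{i+1}]$, which by symmetry of adjacency ($u \in \Gamma[v] \Leftrightarrow v \in \Gamma[u]$) is equivalent to $v_{i+1} \in \Gamma[a] \cup \Gamma[b]$. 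Since $v_{i+1}$ is uniform on $V$ with $|V|=n$,
\[
\Pr[e \text{ removed}] \;=\; \frac{|\Gamma[a] \cup \Gamma[b]|}{n} \;\ge\; \frac{|\Gamma[a]| + |\Gamma[b]|}{2n} \;=\; \frac{d_a + d_b + 2}{2n},
\]
using only $|A \cup B| \ge (|A|+|B|)/2$.

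Next I would sum over all $e \in E$. The standard double-counting identity $\sum_{\{a,b\}\in E}(d_a + d_b) = \sum_{v} d_v^2$ (each $v$ contributes $d_v$ to each of its $d_v$ incident edges), combined with $2|E|=n\tilde d$, turns linearity of expectation into
\[
\expect\bigl[\,|E(G_i)| - |E(G_{i+1})|\,\bigr] \;\ge\; \frac{1}{2n}\Bigl(\sum_{v} d_v^2 \;+\; 2|E|\Bigr) \;=\; \frac{1}{2n}\sum_{v} d_v^2 \;+\; \frac{\tilde d}{2}.
\]
To close, I would apply Cauchy--Schwarz (equivalently, convexity of $x \mapsto x^2$) to get $\sum_{v} d_v^2 \ge (\sum_{v} d_v)^2/n = n\tilde d^2$. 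Plugging this in yields
\[
\expect\bigl[\,|E(G_i)| - |E(G_{i+1})|\,\bigr] \;\ge\; \frac{\tilde d^2 + \tilde d}{2} \;=\; \binom{\tilde d + 1}{2},
\]
which is exactly the lemma.

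The argument is essentially two lines of counting plus one convexity step, so I do not anticipate any real obstacle. The single subtlety worth spelling out is the identification of ``edges incident to $\Gamma(v)$'' with ``edges incident to $\Gamma[v]$'', which holds because any edge incident to $v$ in $G_i$ must have its other endpoint in $\Gamma(v)$. Everything else reduces to a handshake-type identity and Jensen's inequality applied to the degree sequence.
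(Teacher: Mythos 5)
Your proof is correct and follows essentially the same route as the paper's: both bound the removal probability of an edge $\{a,b\}$ by $(d_a + d_b + 2)/(2n)$ (you via $|\Gamma[a]\cup\Gamma[b]|\ge\max$, the paper via $|\Gamma(a)\cup\Gamma(b)|\ge 1+\max(d_a,d_b)$, which is the same bound), then apply linearity of expectation, the handshake identity $\sum_{\{a,b\}\in E}(d_a+d_b)=\sum_v d_v^2$, and Cauchy--Schwarz. The only cosmetic difference is your explicit use of closed neighborhoods $\Gamma[\cdot]$, which neatly packages the $+1$ term the paper carries separately.
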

\begin{proof}
Let $V = V(G_0)$, $E = E(G_i)$ and let $d_u = |\Gamma(u)|$ denote the degree of $u \in V$ in $G_i$.
%The claim on the number of degree-0 vertices is easy: the chosen pivot $v$ and all its neighbours will become vertices of degree 0 in $G_{i+1}$, so the average increase in the number of degree-0 vertices is at least $\expect_{v\in V(G_0)} (1 + d_v) = \tilde{d} + 1$.
%We now prove the claim on the number of edges.
Consider an edge $\{u, v\} \in E$. It is deleted if the chosen pivot $v_i$ is an element of $\Gamma(u)
\cup \Gamma(v)$ (which contains $u$ and $v$). Let $X_{uv}$ be the 0-1 random variable associated with this event, which occurs with probability
$$ \expect[X_{uv}] = \frac{|\Gamma(u) \cup \Gamma(v)|}{n} \ge \frac{1 + \max(d_u, d_v)}{n} \ge
\frac{1}{n} + \frac{d_u + d_v}{2n}. $$
Let $D = \sum_{u < v \mid {\{u,v\}} \in E} X_{uv}$ be the number of edges deleted (we assume an ordering of $V$ to avoid double-counting edges).
By linearity of expectation,
%\begin{small}
\begin{eqnarray*}
% \nonumber to remove numbering (before each equation)
 \expect[D] &=& \sum_{\substack{u < v\\ \{u, v\} \in E}} \expect[X_{uv}] = \frac{1}{2} \sum_{\substack{u, v \in V \\ \{u, v\} \in E}} \expect[X_{uv}] \\
    &\ge& \frac{1}{2}  \sum_{\substack{u, v\\ \{u, v\} \in E}}\left(\frac{1}{n} + \frac{d_u+d_v}{2n}\right)\\
   &=&  \frac{\tilde{d}}{2} + \frac{1}{4n} \sum_{\substack{u, v\\ \{u, v\} \in E}} (d_u + d_v).
\end{eqnarray*}
%\end{small}
Now we compute
\begin{eqnarray*}
       \frac{1}{4n} \sum_{\substack{u, v\\ \{u, v\} \in E}} (d_u + d_v)
    &=& \frac{1}{2n} \sum_{\substack{u, v\\ \{u, v\} \in E}} d_u
    = \frac{1}{2n} \sum_{u} d_u^2 \\
    &=& \frac{1}{2} \expect_{u \sim V} [d_u^2]
    \ge \frac{1}{2} \left(\expect_{u \sim V} [d_u]\right)^2
  = \frac{1}{2} \tilde{d}^2,
\end{eqnarray*}
where in the last line, $\sim$ denotes uniform sampling and we used the Cauchy-Schwarz inequality.
%where we used the Cauchy-Schwarz inequality in the last line.
Hence
$ \expect[D] \ge \frac{\tilde{d}}{2} + \frac{\tilde{d}^2}{2} = \binom{\tilde{d}+1}{2}. $
\end{proof}

\begin{lemma}\label{lem:indep}
%Let $G = (V, E)$ be a graph and $Q$ be an ordered sample of $r$ independent vertices uniformly chosen from~$V$, with or without replacement.  Let $P = \Call{IndependentSet}{Q}$.  Then the expected number of edges of $G$ not incident with an element of $P \cup \Gamma(P)$ is less than $\frac{n^2}{2r}$.
Let $G$ be a graph with $n$ vertices and let $P = \{v_1, \ldots, v_r\}$ be the first $r$ pivots chosen by running $\balls$ on $G$.
Then the expected number of positive edges of $G$ not incident with an
element of $P \cup \Gamma(P)$
is less than $\frac{n^2}{2 (r + 1)}$.
\end{lemma}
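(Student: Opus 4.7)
The natural approach is to let $H_r$ denote the subgraph of $G$ induced on the set $R_r = V\setminus(P\cup\Gamma(P))$ of vertices still unclustered after the first $r$ $\balls$ pivots, and to set $M_r = |E(H_r)|$. Since an edge is ``not incident with an element of $P\cup\Gamma(P)$'' precisely when both endpoints lie in $R_r$, the quantity $M_r$ is exactly what the lemma asks us to bound. The key observation is that, conditional on $H_r$, the $(r{+}1)$-st iteration of $\balls$ samples a pivot uniformly from $V(H_r)=R_r$ and removes its closed neighborhood in $H_r$, which at the level of edge sets coincides with a single step of the process analyzed in Lemma~\ref{lem:del_edges}, applied to $H_r$ in place of $G$. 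Invoking Lemma~\ref{lem:del_edges} in this way yields
\[
\expect[M_{r+1}\mid H_r] \;\le\; M_r - \binom{\tilde d_r+1}{2},
\]
with $\tilde d_r = 2M_r/|R_r|$ the average degree of $H_r$. Bounding $\binom{\tilde d_r+1}{2}\ge \tilde d_r^2/2$ and using $|R_r|\le n$ weakens this to $\expect[M_{r+1}\mid H_r]\le M_r - 2M_r^2/n^2$.

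Taking expectations and applying Jensen's inequality ($\expect[M_r^2]\ge\mu_r^2$, where $\mu_r:=\expect[M_r]$) turns this into the deterministic recurrence $\mu_{r+1}\le\mu_r - 2\mu_r^2/n^2$. I would linearize it by substituting $g_r := n^2/\mu_r$: the elementary inequality $1/(1-x)\ge 1+x$ for $x\in[0,1)$ converts the quadratic recurrence into $g_{r+1}\ge g_r+2$, valid as long as $\mu_r<n^2/2$. Since $\mu_0=|E(G)|\le \binom{n}{2}<n^2/2$, this hypothesis propagates by induction, and the strict inequality $g_0 = n^2/|E(G)|\ge 2n/(n-1)>2$ bootstraps to $g_r>2(r+1)$. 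Inverting produces the desired bound $\mu_r<n^2/(2(r+1))$. The degenerate cases $|E(G)|=0$ or $\balls$ terminating in fewer than $r$ iterations are trivial, since then $M_r=0$.

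The only real subtlety I expect to encounter is in the first step: Lemma~\ref{lem:del_edges} is phrased with respect to a fixed ambient graph $G_0$ with pivots drawn uniformly from $V(G_0)$, whereas $\balls$ samples from the shrinking set $R_r$. We bridge this by reapplying the lemma with $H_r$ playing the role of $G_0$, which is legitimate because the next $\balls$ pivot is uniform on $V(H_r)=R_r$ and because removing the cluster $\{u\}\cup\Gamma_{H_r}(u)$ from $H_r$ has the same effect on the edge set as removing every edge incident to $\Gamma_{H_r}(u)$. Once this identification is made, the remainder of the argument is a routine recurrence manipulation.
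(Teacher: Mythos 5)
Your proof is correct, and after the substitution $\alpha_r = 2\mu_r/n^2$ your recurrence $\mu_{r+1} \le \mu_r - 2\mu_r^2/n^2$ is identical to the paper's $\expect[\alpha_{r}] \le \expect[\alpha_{r-1}](1-\expect[\alpha_{r-1}])$; solving it via $g_r = n^2/\mu_r$ rather than by induction on $\expect[\alpha_r] < 1/(r+1)$ is purely cosmetic. The genuine divergence is in how you bridge the mismatch between Lemma~\ref{lem:del_edges} --- which draws pivots uniformly from a \emph{fixed} ground set $V(G_0)$ --- and $\balls$, which draws them from the shrinking residual set $R_r$. The paper introduces an auxiliary process $\ballslow$ that always picks from the full original $V$ (harmlessly skipping already-clustered vertices), couples it with $\balls$, and argues that the unclustered-edge count after $r$ draws of $\ballslow$ dominates that after $r$ pivots of $\balls$; Lemma~\ref{lem:del_edges} then applies to $\ballslow$ verbatim with a single fixed $G_0 = G$. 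You instead re-instantiate the lemma at each step with $H_r$ playing the role of $G_0$, so its uniform pivot matches $\balls$'s uniform-on-$R_r$ pivot directly, and then give back the advantage by crudely weakening $|R_r| \le n$. This is legitimate --- the lemma's proof is a one-step computation depending only on the pivot being uniform over the ambient vertex set --- and it buys you a shorter argument that dispenses entirely with $\ballslow$ and the coupling, at the small cost of having to observe that the lemma really is a per-step statement that can be re-instantiated with a changing ground set. Your handling of the remaining details (identifying cluster removal with the edge deletion performed by $N_v$, Jensen's via concavity, maintaining $\mu_r < n^2/2$, and the degenerate cases) is all sound.
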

\begin{proof}
Recall that at each iteration $\balls$ picks a random pivot from $R$. This selection is equivalent to picking a random pivot $v$ from the original set of vertices $V$ and
discarding it if $v \notin R$, repeating until some $v \in R$ is found, in which case a new pivot is added.
Consider the following modification of $\balls$, denoted $\ballslow$, which picks
a pivot $v$ at random from $V$ but always increases the counter $r$ of pivots found, even if $v \in R$ (ignoring the cluster creation step if $v \notin R$). % (but does not create a new cluster if $v \notin R$).
We can couple both algorithms into
a common probability space where each point $\omega$ contains a sequence of randomly selected vertices and each algorithm picks the next one in sequence.
\mycomment{
; $\ballslow$ always increases the
pivot counter $r$ while $\balls$ only increases it if $v_r \in R$, otherwise moves to the next pivot candidate in $\omega$.
}
For
any $\omega$, whenever the first $r$ pivots of $\ballslow$ are $S = (v_1, \ldots, v_r)$, then the first
$r'$ pivots of $\balls$ are the sequence $S'$ obtained from $S$ by removing previously appearing elements, where $r' = |S'|$. Hence $|V \setminus (S \cup \Gamma(S))| = |V
\setminus (S' \cup \Gamma(S'))|$ and $r' \le r$.
Thus the number of edges not incident with the first $r$ pivots and their neighbors in $\ballslow$ stochastically dominates
the number of edges not incident with the first $r$ pivots and their neighbors in $\ballslow$, since both numbers are decreasing with $r$.

Therefore it is enough to prove the claim for $\ballslow$.
Let $n = |V(G_0)|$ and define $\alpha_i \in [0,1]$  by
$ \alpha_i = \frac{2 \cdot |E(G_i)|}{n^2} .$
%Let $\widetilde{V}(G) = \{v \in V(G) \mid \deg(v) > 0 \}$ and define the ``actual size'' $s(G)$ of a
%graph by $ s(G) = 2 \cdot |E(G)| + |\widetilde{V}(G)| = \sum_{v \in \widetilde{V}(G)} (1 + \deg(v)). $
%Let $n = |V(G_0)|$ and define $\alpha_i \in [0,1]$ by {
%$\alpha_i = \frac{s(G_i)}{n^2}.$ % and $$\beta_i = \frac{|\widetilde{V}(G_i)|}{n} \ge \alpha_i.$$
%}
We claim that
for all $i \ge 1$ the following inequalities hold:
\begin{align}
%\expect[ \beta_i \mid v_1,\ldots,v_{i-1}] &\le \beta_{i-1} - \alpha_{i-1},\label{eq:beta_i}\\
\expect[ \alpha_i \mid G_0,\ldots,G_{i-1}] &\le \alpha_{i-1} (1 -
        \alpha_{i-1})\label{eq:cond_alpha_i},\\
\expect[ \alpha_i ] &\le \expect[\alpha_{i-1}] (1 - \expect[\alpha_{i-1}]) \label{eq:alpha_i},\\
\expect[ \alpha_i ] &< \frac{1}{i+1} \label{eq:alpha_i2}.
\end{align}
Indeed, $G_{i}$ is a random function of $G_{i-1}$ only, and
the average degree of $G_{i-1}$ is $\widetilde{d}_{i-1} = \alpha_{i-1} n$ so, by Lemma~\ref{lem:del_edges},
    $$\expect[ 2 \cdot |E(G_{i})| \mid G_{i-1} ] \le \alpha_{i-1} n^2 - 2 \cdot \frac{1}{2} \widetilde{d}_{i-1}^2 = n^2 \alpha_{i-1} (1 - \alpha_{i-1}),$$
    proving~\eqref{eq:cond_alpha_i}.
    Now~\eqref{eq:alpha_i} now follows from Jensen's inequality: since
     $$ \expect[\alpha_i] = \expect\big[ \expect[\alpha_i \mid G_0, \ldots, G_{i-1}] \big] \le \expect[
     \alpha_{i-1} (1 - \alpha_{i-1})] $$
and the function $g(x) = x (1 - x)$ is concave in $[0,1]$, we have
$$ \expect[\alpha_i] \le \expect[g(\alpha_{i-1})] \le g(\expect[\alpha_{i-1}]) = \expect[\alpha_{i-1}] (1 -
        \expect[\alpha_{i-1}]). $$

Finally we prove $\expect[\alpha_i] < 1/(i+1) \; \forall i\ge1$. %, which implies the conclusion of the lemma. We know that
For $i = 1$, we have:
$$\expect[\alpha_1] \le g(\alpha_0) \le  \max_{x \in [0,1]} g(x) = g\left(\frac{1}{2}\right) = \frac{1}{4} < \frac{1}{2}.$$
For $i > 1$, observe that $g$ is increasing
on $[0, 1/2]$ and
$$ g\left(\frac{1}{i}\right) = \frac{1}{i} - \frac{1}{i^2} \le \frac{1}{i} - \frac{1}{i (i + 1)} = \frac{1}{i + 1}, $$
so~\eqref{eq:alpha_i2} follows from~\eqref{eq:alpha_i} by induction on $i$:
$$\expect[ \alpha_{i-1} ] < \frac{1}{i} \implies  \expect[ \alpha_i ] \le g\left(\frac{1}{i}\right) \le \frac{1}{i + 1}. $$
Therefore $\expect[ |E(G_{r})| ] = \frac{1}{2} \expect[ \alpha_r ] n^2 \le \frac{n^2}{2 (r + 1)}$, as we wished to show.
\end{proof}

We are now ready to prove Theorem~\ref{thm:ub}:

\smallskip 

\begin{proofof}{Theorem~\ref{thm:ub}}
Let $\opti$ denote the cost of the optimal clustering of $G$ and let $C_r$ be a random variable denoting the clustering obtained by stopping $\balls$ after $r$ pivots are
found (or running it to completion if it finds $r$ pivots or less), and putting all unclustered
vertices into singleton clusters.
Note that whenever $C_i$ makes a mistake on a negative edge, so does $C_j$ for $j \ge i$;
on the other hand, every mistake on a positive edge by $C_i$ is either a mistake by $C_j$ ($j \ge i$) or the edge is not incident to any of the vertices clustered in the first $i$
rounds. By Lemma~\ref{lem:indep}, there are at most $\frac{n^2}{2 (i + 1)}$ of the latter in expectation.
Hence $\expect[ \cost(C_i) ] - \expect[  \cost(C_n) ] \le \frac{n^2}{2(i + 1)}$.

Algorithm $\ouralg$ runs for $k$ rounds, where $k \ge \lfloor \frac{Q}{n - 1} \rfloor > \frac{Q}{n} - 1$ because each pivot uses $|R| - 1 \le n - 1$ queries.
Then $$\expect[ \cost(C_k) ] - \expect[ \cost(C_n) ] < \frac{n^2}{2(k + 1)} < \frac{n^3}{2 Q}.$$
On the other hand, we have $\expect[ \cost(C_n) ] \le 3 \cdot \opti$ because of the expected 3-approximation guarantee of $\balls$ from~\cite{balls}.
Thus $\expect[ \cost(C_k) ] \le 3 \opti + \frac{n^3}{2 Q}$, proving our approximation guarantee.

Finally, the time spent inside each iteration of the main loop is dominated by the time spent making queries to vertices in $R$, since this number also bounds
the size of the cluster found. Therefore the running time of $\ouralg$ is $O(Q)$.
\end{proofof}

%\spara{On adaptivity.}
\subsection{A non-adaptive algorithm.}
%It is interesting to note that adaptivity does not help for this problem.
Our algorithm $\ouralg$ is adaptive in the way we have chosen to present it: the queries made when picking a second pivot depend on the result of the queries made for
the first pivot.
However, this is not necessary: we can instead query for the neighborhood of a random sample $S$
of size $\frac{Q}{n-1}$. If we use the elements of $S$ to find pivots,
the same  analysis shows that the output of this variant meets the exact same error bound of $3\opti + n^3/(2Q)$.
For completeness, we include pseudocode for the adaptive variant of $\ouralg$ below (Algorithm 3).

In practice the adaptive variant we have presented in Algorithm~2 will run closer to the query budget, choosing more pivots and reducing the error somewhat
below the theoretical bound,
because it does not ``waste'' queries between a newly found pivot and the neighbors of previous pivots.
Nevertheless, in settings where the similarity computations can be performed in parallel, it may become advantageous to use Non-adaptive $\ouralg$.
Another benefit of the non-adaptive variant is that it gives a one-pass streaming algorithm for correlation clustering that uses only $O(Q)$ space and processes edges in
arbitrary order.

\begin{theorem}\label{thm:ub2}
For any $Q > 0$, Algorithm Non-adaptive $\ouralg$ finds a clustering of $G$ with expected cost at most $3\cdot \opti + \frac{n^3}{2 Q}$ making at most $Q$ non-adaptive edge queries.
It runs in time $O(Q)$ assuming unit-cost queries.
\end{theorem}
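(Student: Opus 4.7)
The plan is to observe that Non-adaptive $\ouralg$ is essentially an explicit implementation of the $\ballslow$ process already analyzed in the proof of Lemma~\ref{lem:indep}, so the bulk of the argument carries over with almost no modification. Concretely, I would first describe the algorithm precisely: set $k = \lfloor Q/(n-1) \rfloor$, sample candidate pivots $v_1,\ldots,v_k$ independently and uniformly at random from $V$, and query all pairs $(v_i,w)$ with $w \in V\setminus\{v_i\}$; this uses at most $k(n-1) \le Q$ queries and is fully non-adaptive, since the sample is chosen before any answer is seen. The clustering stage then processes $v_1,\ldots,v_k$ in order: whenever $v_i$ is still unclustered, it becomes a pivot that absorbs the still-unclustered elements of $\Gamma(v_i)$; if $v_i$ has already been absorbed, it is skipped. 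Whatever vertices remain at the end are emitted as singletons. Because every edge touching a $v_i$ has been queried in the first phase, the clustering stage itself performs no further queries.

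Next I would argue that this procedure coincides exactly, round for round, with running $\ballslow$ for $k$ rounds on $G$: $\ballslow$ likewise draws a vertex uniformly from $V$ at each step and forms a cluster iff that vertex is still unclustered. Using the coupling between $\balls$ and $\ballslow$ already set up in the proof of Lemma~\ref{lem:indep}, the clustering produced by Non-adaptive $\ouralg$ has the same distribution as the clustering obtained by running $\balls$ until either it terminates or its candidate-pivot counter reaches $k$, and then making singletons out of any leftover vertices. This is precisely the random variable $C_k$ analyzed in the proof of Theorem~\ref{thm:ub}.

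From here the analysis of Theorem~\ref{thm:ub} applies verbatim. Lemma~\ref{lem:indep} bounds the expected number of positive edges with no endpoint in a pivot or a pivot's neighborhood after $k$ rounds by $\frac{n^2}{2(k+1)}$; each such edge contributes at most one unit to $\cost(C_k) - \cost(C_n)$, while $\expect[\cost(C_n)] \le 3\opti$ by the $\balls$ guarantee of~\cite{balls}. Since $k + 1 > Q/(n-1)$, we get $\frac{n^2}{2(k+1)} < \frac{n^2(n-1)}{2Q} < \frac{n^3}{2Q}$, so $\expect[\cost(C_k)] \le 3\opti + \frac{n^3}{2Q}$. The running time is $O(Q)$ because the query phase makes $O(Q)$ queries and the pivot phase does constant work per queried edge.

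The only conceptual subtlety, which I would spell out explicitly, is that non-adaptivity is compatible with the analysis precisely because $\ballslow$ draws its sequence of candidate pivots independently of the graph; hence the full list $v_1,\ldots,v_k$ and the associated queries can be fixed up front without changing the distribution of the output. No genuinely new probabilistic argument is needed, which is why the bound matches Theorem~\ref{thm:ub} exactly.
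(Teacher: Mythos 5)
Your proposal is correct and follows the paper's approach: the paper likewise observes that the error bound ``proceeds exactly as in the proof of Theorem~\ref{thm:ub}'' via the coupling with $\ballslow$ and Lemma~\ref{lem:indep}, and only needs to account for the non-adaptive query count and running time. The only cosmetic differences are that the paper takes $k = \max\{t \le n : (2n-1-t)t \le 2Q\}$ (avoiding double-counting pairs between sampled vertices, hence slightly more pivots than your $\lfloor Q/(n-1)\rfloor$ for the same budget), and your identification of the output with $C_k$ is, strictly speaking, with ``$\ballslow$ run for $k$ rounds'' rather than ``$\balls$ run until $k$ pivots are found'' — but Lemma~\ref{lem:indep}'s proof bounds the remaining edges for exactly that process, so the argument goes through.
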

\begin{proof}
The number of queries it makes is $ S = (n-1)+(n-2)+\ldots(n-k) = \frac{2n-1-k}{2} k \le Q. $
Note that $\frac{n-1}{2} k \le S \le Q \le (n - 1)k$.
The proof of  the error bound proceeds exactly as in the proof of Theorem~\ref{thm:ub} (because $k \ge \frac{Q}{n-1}$).
The running time of the querying phase of Non-adaptive $\ouralg$ is $O(Q)$ and,
assuming a hash table is used to store query answers, the expected running time of the second phase is bounded by $O(n k) = O(Q)$,
         because $k \le \frac{2Q}{n-1}$.
\end{proof}
Another interesting consequence of this result (coupled with our lower bound, Theorem~\ref{thm:lb}), is that adaptivity does not help for correlation clustering (beyond possibly a constant factor), in stark contrast to
other problems where an exponential separation is known between the query complexity of adaptive and non-adaptive algorithms (e.g.,~\cite{fi,obdds_lb}).

\algnewcommand{\LineComment}[1]{\State \(\triangleright\) #1}

\begin{algorithm}\label{alg:nonadaptive}
\begin{algorithmic}[0]
\Require $G = (V, E)$; query budget $Q$
\State $k \gets \max\{ t \le n \mid (2n-1-t) t \le 2Q \}.$
\State Let $S = (s_1, \ldots, s_k)$ be a uniform random sample from $V$
\State \;(with or without replacement)
\State
\LineComment Querying phase: find $\Gamma_G^+(v)$ for each $v \in S$
\For { each $v \in S$, $w \in V$, $v < w$ }
\State Query $(v, w)$
\EndFor
\State
\LineComment Clustering phase
\State $R \gets V$
\State $i \gets 1$
\While {$R \neq \emptyset \wedge i \le k$}
    \If {$s_i \in R$}
        \State Output cluster $C = \{s_i\} \cup \Gamma_G^+(s_i) \cap R$.
        \State $R \gets V \setminus C$
    \EndIf
    \State $i \gets i + 1$
\EndWhile
\State Output a separate singleton cluster for each remaining $v \in R$.
\caption{Non-adaptive \ouralg}
\end{algorithmic}
\end{algorithm}

\section{Lower bound}\label{sec:lb}
In this section we show that $\ouralg$ is essentially optimal: for any given budget of queries, no algorithm (adaptive or not) can find a solution better than that of $\ouralg$ by more than a constant factor.

\begin{theorem}\label{thm:lb}
%For any $c \ge 1$ and $n < T \le \frac{n^2}{100 c}$, any algorithm finding a clustering with expected cost at most $c \cdot \opti + T$ must make at least $\Omega( \frac{n^3}{ T c^2})$ adaptive edge similarity queries.
For any $c \ge 1$ and $T$ such that $8n < T \le \frac{n^2}{2048 c^2}$, any algorithm finding a clustering with expected cost at most $c \cdot \opti + T$ must make at least $\Omega( \frac{n^3}{ T c^2})$ adaptive edge similarity queries.
\end{theorem}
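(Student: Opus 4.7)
The plan is to apply Yao's minimax principle against a distribution of planted-partition instances. Choose $s := \lceil 8T/n \rceil$ and $k := \lfloor n/s \rfloor$; the hypotheses $8n < T \le n^2/(2048c^2)$ ensure $64 \le s \le n/(128c^2)$ and $k = \Omega(n^2/T)$. Let $\mu$ be the uniform distribution over partitions of $V$ into $k$ cliques of size $s$. Every $\pi$ in $\mathrm{supp}(\mu)$ yields a clusterable graph $G_\pi$ with $\opti(G_\pi) = 0$, so any $(c,T)$-approximation algorithm must output a clustering $\ell$ with $\expect_\mu[\cost(\ell)] \le T$; by Yao's principle it suffices to prove the query lower bound for deterministic algorithms against $\mu$.

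Call a planted cluster $C_i$ \emph{resolved} by $\ell$ if some output cluster $O_j$ contains at least $s/2$ vertices of $C_i$, and \emph{unresolved} otherwise. Writing $x_j := |C_i \cap O_j|$ and $M := \max_j x_j$, the intra-$C_i$ positive-edge cost equals $(s^2 - \sum_j x_j^2)/2$; unresolved means $M < s/2$, so $\sum_j x_j^2 \le M \sum_j x_j < s^2/2$, forcing this contribution to exceed $s^2/4$. Hence $\expect_\mu[\cost(\ell)] \le T$ bounds the expected number of unresolved clusters by $4T/s^2$, so the expected number $r$ of resolved clusters satisfies $r \ge k - 4T/s^2 = \Omega(k) = \Omega(n^2/T)$. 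Within each resolved cluster, the subgraph of positive-answered queries restricted to the dominant output cluster must connect at least $s/2$ of $C_i$'s vertices and therefore contain at least $s/2 - 1$ positive-answered intra-$C_i$ edges---an output cluster matching that many vertices of $C_i$ without a positive-query certificate is, by the exchangeability of $\mu$, correct only with probability exponentially small in $s$, so such ``guesses'' add only a negligible term to the expected cost. Summing over resolved clusters gives $\expect_\mu[\#\text{positive queries}] \ge r(s/2-1) = \Omega(rs)$.

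It remains to upper-bound $\expect_\mu[\#\text{positive queries}]$ by $O(Qs/n)$, since this combined with the previous lower bound yields $Q \ge \Omega(rn) = \Omega(n^3/T) \ge \Omega(n^3/(Tc^2))$ and completes the proof. For every pair $(u,v)$, the posterior process $M_t(u,v) := \Pr[\pi(u) = \pi(v) \mid H_t]$ is a bounded martingale with $M_0 = (s-1)/(n-1)$, so optional stopping at the (random) first time the algorithm queries $(u,v)$ yields $\Pr[(u,v)\text{ queried and positive}] \le (s-1)/(n-1)$. The global bound then follows from a phase-based charging analogous to the analysis of the $\balls$ pivot strategy: partition the query sequence into maximal phases terminating at each positive answer, and show that each phase has expected length $\Omega(n/s)$ under $\mu$, yielding $Q \ge \Omega(\#\text{positives} \cdot n/s)$ in expectation.

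The hard part is the adaptive version of this phase-length bound. A naive attempt to claim ``each query returns positive with probability at most $(s-1)/(n-1)$ conditional on history'' is false: by accumulating many negative answers incident to a single vertex $v$, an adaptive algorithm can push the posterior probability that $v$'s next queried edge is positive arbitrarily close to $1$. The fix is to amortise each such boosted positive against the sequence of earlier negative answers that enabled it, so the average positive rate along any sufficiently long adaptive execution remains $O(s/n)$. Making this charging rigorous---via a potential function on the posterior and care to correctly handle queries whose answers are already deducible by transitive closure of previous positives---is the technical heart of the proof and mirrors, in strengthened adaptive form, the non-adaptive lower bound of~\cite[Thm.~6.1]{local_corr}.
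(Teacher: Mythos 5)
Your construction (a uniformly random partition of $V$ into $k$ size-$s$ cliques, so that $\opti=0$) is genuinely different from the paper's, and the first half of your argument---resolved vs.\ unresolved clusters and the $\Omega(rs)$ lower bound on positive-answered queries---is a reasonable skeleton. But the proof has a real gap at the step you yourself flag as ``the technical heart'': you never actually establish that an adaptive algorithm sees only an $O(s/n)$ fraction of positive answers. The per-pair optional-stopping bound $\Pr[(u,v) \text{ queried and positive}] \le (s-1)/(n-1)$ is correct but summing it over all pairs yields a bound of order $ns$, which is independent of $Q$ and hence cannot close the argument. What you need is a per-query (or phase-amortized) bound, and as you note the naive conditional claim is simply false: after many negative answers incident to a fixed $v$, the posterior that the next edge out of $v$ is positive approaches $1$. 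You sketch a potential-function amortization plus a treatment of transitive closure, but neither is carried out, and each has nontrivial content (e.g., handling the correlations that arise because the clique sizes are fixed at $s$, and ruling out that the algorithm ``earns'' many vertices per resolved cluster by meta-reasoning from negatives). The ``exponentially small guess'' dismissal in the resolved-cluster lower bound is likewise asserted rather than proved and is sensitive to the same dependency issues. As written this is an incomplete proof, not a complete one.

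The paper avoids this difficulty structurally rather than by amortization. It fixes the clusters $C_1,\dots,C_k$ on a set $A$ of $(1-\alpha)n$ vertices (so these are known, and the algorithm gains nothing by querying inside $A$) and only randomizes, independently for each $v$ in a small set $B$ of size $\alpha n$, the index $r_v$ of the cluster to which $v$ attaches. This turns the problem into $|B|$ independent ``needle in a haystack'' searches over $[k]$, for which the simple sequential bound $\Pr[\{Q_{i^v_1},\dots,Q_{i^v_t}\}\cap C_{r_v}\neq\emptyset]\le t/k$ (equation~\eqref{eq:blah}) holds verbatim for adaptive algorithms, with no amortization or martingale machinery needed. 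The price is that $\opti$ is no longer zero (there are a few spurious intra-$B$ disagreements), but the paper bounds this by $\varepsilon n^2/c$, which is exactly what the $c\cdot\opti$ term absorbs. If you want to rescue your route, the cleanest fix is to adopt precisely this decoupling: make most of the structure deterministic and randomize only the labels of a thin slice of vertices, so that each vertex's hidden label is an independent uniform element of $[k]$ and adaptivity buys nothing beyond eliminating at most one candidate per query.
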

Note that this also implies that any purely multiplicative approximation guarantee needs $\Omega(n^2)$ queries (e.g. by taking $T = 10n$).
\begin{proof}
Let $\epsilon = \frac{T}{n^2}$; then $\frac{1}{n} < \epsilon \le \frac{1}{2048 c^2}$.
By Yao's minimax principle~\cite{yao_minimax}, it suffices to produce a distribution $\calG$ over graphs with the following properties:
\begin{itemize}
    \item the expected cost of the optimal clustering of $G \sim \calG$ is $\expect[\opti(G)] \le \frac{\eps n^2}{c}; $
    \item for any {deterministic} algorithm making fewer than $L/2 = \frac{n}{2048 \eps c^2}$ queries, the expected cost (over $\calG$) of the clustering produced exceeds $2\eps
    n^2 \ge c \cdot \expect[\opti(G)] + T$.
\end{itemize}

Let $\alpha = \frac{1}{4 c}$ and $k = \frac{1}{32 c \eps}$. % and $l = \frac{k^2 \eps n}{3} \ge L$.
We can assume that $c$, $k$ and $\alpha n / k$ are integral (here we use the fact that $\eps > 1/n$).
Let $A = \{1, \ldots, (1 - \alpha) n\}$ and
$B  = \{(1-\alpha) n + 1, \ldots, n\}$.

Consider the following distribution $\calG$ of graphs:
partition the vertices of $A$ into
exactly $k$ equal-sized clusters $C_1, \ldots, C_k$. The set of positive edges will be the union of the cliques
defined by $C_1, \ldots, C_k$, plus edges joining each vertex $v \in B$ to
all the elements of $C_{r_v}$ for a randomly chosen $r_v \in [k]$; $r_v$ is chosen independently of $r_w$ for all $w \neq v$.

Define the \emph{natural clustering} of a graph $G \in \calG$ by the classes $C_i' = C_i \cup \{ v \in B\mid r_v = i\}$ ($i \in [k]$). 
We view $N$ also as a graph formed by a disjoint union of the $k$ cliques 
determined by $\{C'_i\}_{i\in[k]}$.
This clustering will have a few disagreements because of the negative edges between different
vertices $v, w\in B$ with $r_v = r_{w}$. For any pair of distinct elements $v, w \in B$, this happens with probability $1/k$. The cost of the optimal clustering of $G$ is bounded by that of
the natural clustering $N$, hence
$$ \expect [\opti]  \le \expect[\cost(N)] =  \frac{\binom{\alpha n}{2}}{k} \le \frac{\alpha^2 n^2}{2k} = \frac{\eps}{c} n^2. $$

We have to show that any algorithm making $< L/2$ queries to graphs drawn from~$\calG$
produces a clustering with expected cost larger than $2 \eps n^2$.
Since all graphs in $\calG$ induce the same subgraphs on $A$ and $B$ separately, we can assume
without loss of generality that the algorithm queries only edges between $A$ and $B$. 
Note that the neighborhoods  in $G$ of every pair of vertices from the same $C_i$ are the same:
$\Gamma^+_G(u) = \Gamma^+_G(v)$  and
$\Gamma^-_G(u) = \Gamma^-_G(v)$ 
for all $u, v \in C_i$, $i \in [k]$; moreover, $u$ and $v$ are joined by a positive edge.
Therefore, if $u, v \in C_i$ but the algorithm assigns $u$ and $v$ to different clusters, either moving $u$  to $v$'s cluster or $v$ to $u$'s cluster will not decrease the cost.
%Furthermore, it is never optimal to put a singleton cluster for any $u \in B$; assigning $u$ to one of the clusters $C_1, \ldots, C_k$ instead will never increase the cost.
All in all, we can assume that the algorithm outputs $k$ clusters $C'_1, \ldots, C'_k$ with $C_i \subseteq C'_i$ for all $i$, plus (possibly) some clusters 
$C'_{k+1}, \ldots, C'_{k'}$ ($k'\ge k$) involving only elements
of $B$.

For $v \in B$, let $s_v \in [k']$ denote the cluster that the algorithm assigns $v$ to.
%; $s_v=0$ means that $v$ is not assigned to any of $C'_1, \ldots, C'_k$. 
% and let $G_v$ denote the event that the algorithm queries some pair $(u, v)$ with $u \in C_{r_v}$.
For every $v \in B$,
let $G_v$ denote the event that the algorithm queries $(u, v)$ for some $u \in C_{r_v}$ and, whenever $G_v$ does not hold,
    let us add a ``fictitious'' query to the algorithm between $v$ and some arbitrary element of $C_{s_v}$. This ensures that whenever $r_v = s_v$, the last query
of the algorithm verifies its guess and returns 1 if the correct cluster has been found. This adds at most $|B| \le n \le \frac{L}{2}$ queries in total.
Let $Q_1, Q_2, \ldots, Q_z$ be the (random) sequence of queries issued by the algorithm and let $i^v_1, i^v_2, \ldots, i^v_{T_v}$ be the indices of those queries involving a fixed vertex $v \in
B$.
Note that $r_v$ is independent of the response to all queries not involving $v$ and, conditioned on the result of all queries up to time $t < i_{T_v}$, $r_v$ is uniformly distributed
among the set $\{ i \in [k] \mid (Q_j \notin C_i \forall j < t) \}$, whose size is upper-bounded by $k-t+1$.
Therefore
$$ \Pr[ Q_{i^v_t} \in C_{r_v} \mid Q_1, \ldots, Q_{i^v_{t-1}} ] \le \frac{1}{k - t + 1}, $$
which becomes an equality  if the algorithm does not query the same cluster twice.
It follows by induction that
\begin{equation}\label{eq:blah}
\Pr[ \{Q_{i^v_1}, \ldots, Q_{i^v_t}\} \cap  C_{r_v} \neq \emptyset ] \le \frac{t}{k}.
\end{equation}

Let $M_v$ be the event that the algorithm makes more than $k/2$ queries involving $v$. %, and $G_v$ the event that the algorithm queries $(u, v)$ for some $u \in C_{r_v}$.
        The event $r_v = s_v$ is equivalent to $G_v$, i.e., the event 
$\{Q_{i^v_1}, \ldots, Q_{i^v_{T^v}}\} \cap  C_{r_v} \neq \emptyset$,
because of our addition of one fictitious query for $v$.
We have
%$$ \Pr[ r_v = s_v ] \le \Pr[ M_v ] + \Pr[ G_v \wedge \overline{M}_v ] + \Pr[ r_v = s_v \wedge \overline{M}_v \wedge \overline{G}_v ].$$
$$ \Pr[ r_v = s_v ] = 
\Pr[G_v] \le \Pr[ M_v ] + \Pr[ G_v \wedge \overline{M}_v ].$$
In other words, either the algorithm makes many queries for $v$, or it hits the correct cluster with few queries.
(Without fictitious queries, we would have to add a third term for the probability that the algorithm picks by chance the correct $s_v$.)
%(recall that we operate under the assumption that the algorithm's last query verifies its guess if no edge has been found.)
We will use the first term $\Pr[ M_v ]$ to control the expected query complexity. The second term, $\Pr[G_v \wedge \overline{M}_v]$, is bounded by $\frac{1}{2}$ by~\eqref{eq:blah}
because $T_v \le k/2$ 
whenever $\overline{M}_v$ holds.
Hence
$$ \Pr[ r_v \neq s_v ] \ge \frac{1}{2} - \Pr[ M_v ],$$

so 
$$
\expect[ |\{v \in B \mid r_v \neq s_v \}| ] =  \sum_{v \in B} \Pr[ r_v \neq s_v ]
                                                     \ge \frac{\alpha n}{2} - \left(\sum_{v \in B} \Pr[ M_v ]\right).$$
Each vertex $v \in B$ with
$s_v \neq r_v$,
causes disagreements with all of $C_{r_v}\subseteq C'_{r_v}$ and $C_{s_v}\subseteq C'_{r_v}$, introducing at least $2 |A| / k \ge n / k$ new disagreements.

If we denote by $X$ the cost of the clustering found and by $Z$ the number of queries made, we have
\begin{align*}
\expect[X] &\ge \frac{n}{k} \expect[ |\{v \in B \mid r_v \neq s_v \}| ] \\
        &\ge \frac{\alpha n^2}{2k} - \frac{n}{k} \left(\sum_{v \in B} \Pr[ M_v ]\right)\\
        &= {4\epsilon n^2} - \frac{n}{k} \left(\sum_{v \in B} \Pr[ M_v ]\right).
\end{align*}
In particular, if $\expect[X] \le 2\epsilon n^2$, then we must have
       $$ \sum_{v \in B} \Pr[ M_v ] \ge \frac{2\epsilon n^2}{n/k} = {2  \epsilon n k} = \frac{n}{16 c} .$$
But then we can lower bound the expected number of queries by
$$ \expect[Z] \ge \frac{k}{2}  \sum_{v \in B} \Pr[ M_v ] \ge \frac{n k}{32 c} = \frac{n}{1024 c^2 \epsilon} = L = \frac{n^3}{1024 c^2 T},$$
of which at most $L/2$ are the fictitious queries we added. This completes the proof.

\end{proof}

\section{A practical improvement}\label{sec:improved_algo}
As we will see in Section~\ref{sec:experiments}, algorithm $\ouralg$, while provably optimal up to constant factors, sometimes returns solutions with poor recall of positive edges
when the query budget is low. Intuitively, the reason is that, while picking a random pivot works in expectation, sometimes a low-degree pivot is chosen and all $|R|-1$ queries are spent querying
its neighbors, which may not be worth the effort for a small cluster when the query budget is tight. To entice the algorithm to choose higher-degree vertices (which would also improve the recall), we propose to bias it so that pivots are chosen with probability
proportional to their positive degree in the subgraph induced by $R$. The conclusion of Lemma~\ref{lem:indep} remains unaltered in this case, but whether this change preserves the
approximation guarantees from ~\cite{balls} on which we rely is unclear.
In practice, this heuristic modification consistently
improves the recall on all the tests we performed, as well as the total number of disagreements in most cases.

\mycomment{
We cannot afford to compute the degree of each vertex with a small number of queries, but
the following scheme is easily seen to choose each vertex $u \in R$ with probability $(1 + d_u) / (|R| + 2E)$, where $d_u$ is the degree of $u$ in the subgraph $G[R]$ induced by $R$ and $E$ is the
total number of edges in $G[R]$:
\begin{enumerate}
    \item Pick random pairs of vertices $(u, v) \in R \times R$ until $u = v$ or an edge $(u, v) \in E$ is found;
    \item Select the first  endpoint $u$ of this edge as a pivot.
\end{enumerate}
The reason we sample with probability proportional to $1+d_u$ instead of $d_u$ is to avoid infinite loops when $G[R]$ contains no edges.
}
We cannot afford to compute the degree of each vertex with a small number of queries, but
the following scheme is easily seen to choose each vertex $u \in R$ with probability $d_u / (2E)$, where $d_u$ is the degree of $u$ in the subgraph $G[R]$ induced by $R$, and $E>0$ is the
total number of edges in $G[R]$:
\begin{enumerate}
    \item Pick random pairs of vertices to query $(u, v) \in R \times R$ until an edge $(u, v) \in E$ is found;
    \item Select the first endpoint $u$ of this edge as a pivot.
\end{enumerate}
When $E = 0$, this procedure will simply run out of queries to make.

%Indeed, $u$ is chosen uniformly at random from $R$, and the probability that another randomly chosen vertex is a neighbor is exactly $d_u/(|R| - 1)$, where $d_u$ denotes the degree of $u$. It follows that this scheme

Pseudocode for $\ourheur$ is shown below.
\begin{algorithm}\label{alg:our_heur}
\begin{algorithmic}[0]
\Require $G = (V, E)$; query budget $Q$
\State $R \gets V$ \Comment Unclustered vertices so far
\While {$|R| > 1 \wedge Q \ge |R| - 1$}
\State Pick a pair $(u, v)$ from $R \times R$ uniformly at random.
\State \If { $u \neq v$ }
\State Query $(u, v)$
\State $Q \gets Q - 1$
%\State \If {$u = v$ \textbf{or} $(u, v) \in E$ }
\State \If { $(u, v) \in E$ }
\State Query all pairs $(v, w)$ for $w \in R\setminus \{u, v\}$ 
\State \; to determine $\Gamma_G^+(v) \cap R$.
\State $Q \gets Q - |R| + 2$.
\State Output cluster $C = \{v\} \cup \Gamma_G^+(v) \cap R$.
\State $R \gets V \setminus C$
\EndIf
\EndIf
\EndWhile
\State Output a separate singleton cluster for each remaining $v \in R$.
\caption{\ourheur}
\end{algorithmic}
\end{algorithm}

\section{Experiments}\label{sec:experiments}
In this section we present the results of our experimental evaluations of $\ouralg$ and $\ourheur$, on both synthetic and real-world graphs.
We view an input graph as defining the set of positive edges; missing edges are interpreted as negative edges.

{%\scriptsize
\begin{table*}[t]
    \centering
    \caption{Dataset characteristics: name, type, size and ground truth error measures.}
    \label{tab:datasets}
    \begin{tabular}{lrrrrrrrrrr}
    \toprule
    Dataset  & Type & |V| & |E| & \# clusters  & GT cost & GT precision & GT recall \\
%        \\   &     &     &         densest subgraph    &     &\\
    \midrule
    \textsc{S(2000,20,0.15,2)} & synthetic & 2,000 & 104,985 & 20 & 30,483 & 0.859 & 0.852 \\
    \textsc{Cora}    &real & 1,879 & 64,955 & 191 & 23,516 & 0.829 & 0.803 \\
    \textsc{Citeseer} &real & 3,327 & 4,552 & - & - & - & -\\
    \textsc{Mushrooms} &real & 8,123 & 18,143,868 & 2 & 11,791,251 & 0.534 & 0.683 \\
    \bottomrule
    \end{tabular}
\end{table*}
}

\subsection{Experimental setup}
\spara{Clustering quality measures.}
We evaluate the clustering $S$ produced by $\ouralg$ and $\ourheur$ in terms of total \emph{cost} (number of disagreements),
\emph{precision} of positive edges
(ratio between the number of positive edges between pairs of nodes clustered together in $S$ and the total number of pairs of vertices clustered together in $S$),
 and
\emph{recall}    of positive edges
(ratio between the number of positive edges between pairs of nodes clustered together in $S$ and the total
 number of positive
            edges in $G$).
    Although our algorithms have been designed to minimize total cost, we deem it important to consider precision and recall values to detect extreme situations in which, for example, a graph is
    clustered into $n$ singletons clusters which,  if the graph is very sparse, may have small cost, but very low recall.
All but one of the graphs $G$ we use are accompanied with a ground-truth clustering (by design in the case of synthetic graphs), which we compare against.

\spara{Baseline.}
As $\ouralg$ is the first query-efficient algorithm for correlation clustering, any baseline must be based on another clustering method.
We turn to affinity propagation methods, in which  a matrix of similarities (affinities) are given as input, and then messages about the ``availability'' and
``responsibility'' of vertices as possible cluster centers are transmitted along the edges of a graph, until a high-quality set of cluster pivots is found; see~\cite{affinity}.
We design the following query-efficient procedure as a baseline:% \textsf{QEAffinity}:
\begin{enumerate}
    \item Pick $k$ random vertices without replacement and query their complete neighborhood.
     Here $k$ is chosen as high as possible within the query budget $Q$, i.e., $$k = \argmax \{ t \mid (2n-t-1)t/2 \le Q \}.$$
    \item Set the affinity of any pair of vertices queried to 1 if there exists an edge.
    \item Set all remaining affinities to zero.
    \item Run the affinity propagation algorithm from~\cite{affinity} on the resulting adjacency matrix.
\end{enumerate}
%is no proper baseline to compare it against. (As discussed in Section~\ref{sec:intro}, the algorithm from~\cite{acc} is essentially the same as $\ouralg$, and appeared six years later.)

We also compare the quality measures for $\ouralg$ and $\ourheur$ for a range of query budgets $Q$ with those from the expected 3-approximation algorithm $\balls$ from~\cite{balls}.
While better approximation factors are possible (2.5 from~\cite{balls}, 2.06 from~\cite{near_opt}), these algorithms require writing a linear program with
$\Omega(n^3)$ constraints and all $\Omega(n^2)$ pairs of vertices need to be queried. By contrast, $\balls$ typically performs much fewer queries, making it more suitable for comparison.

 \begin{figure*}
         \centering
         \begin{tabular}{cccc}
\hspace{-4mm}\includegraphics[width=0.26\textwidth]{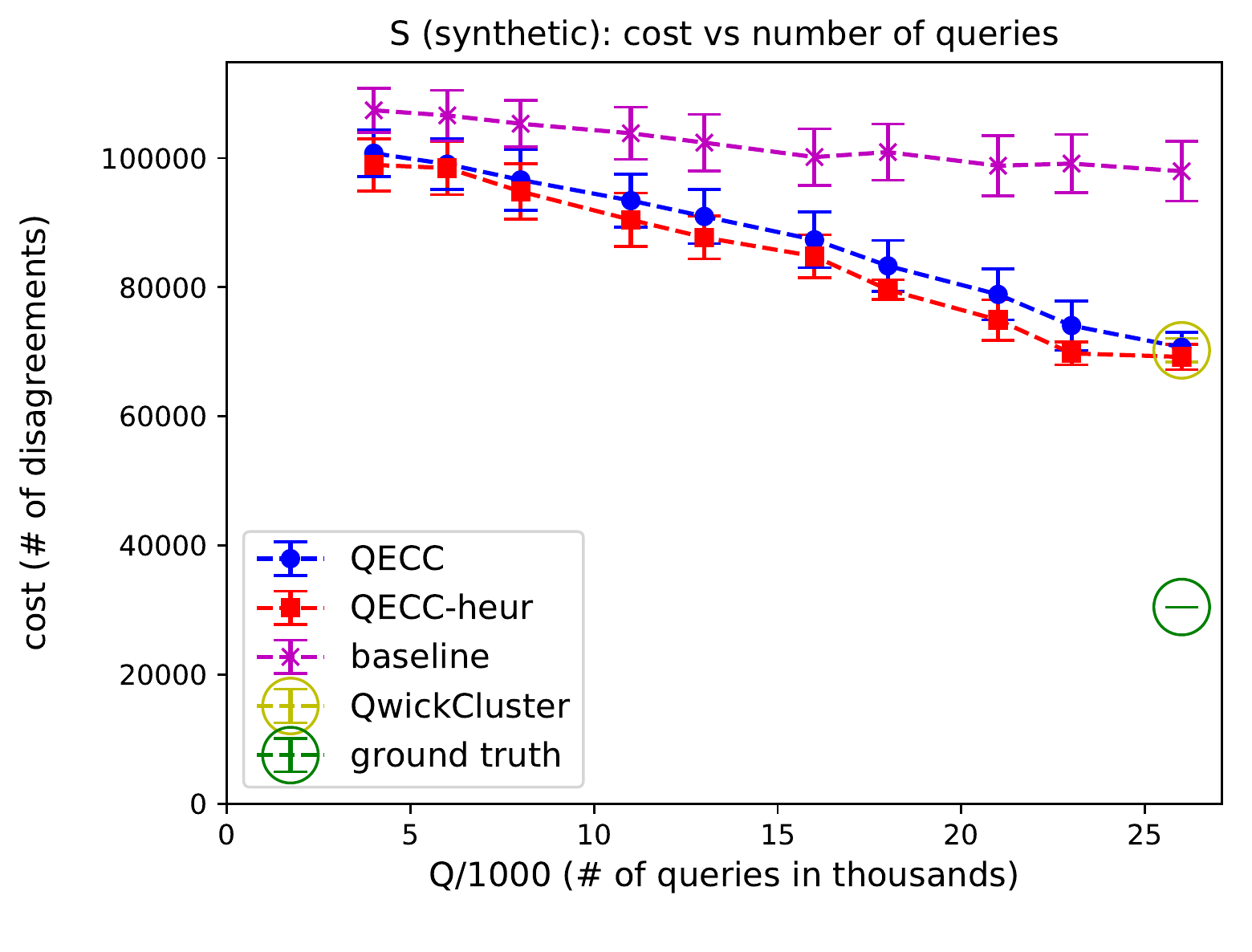} & \hspace{-4mm}\includegraphics[width=0.25\textwidth]{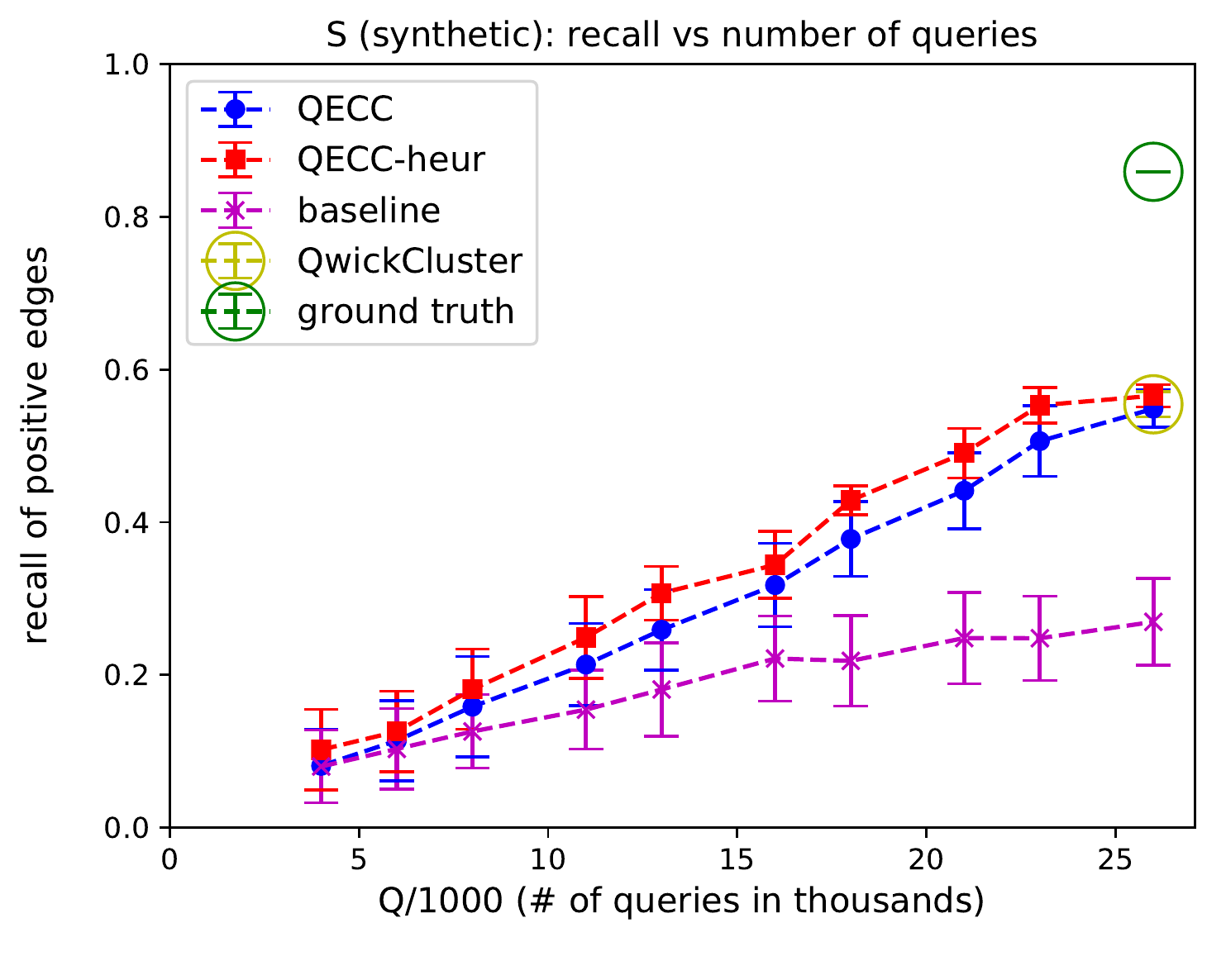} & \hspace{-4mm}\includegraphics[width=0.25\textwidth]{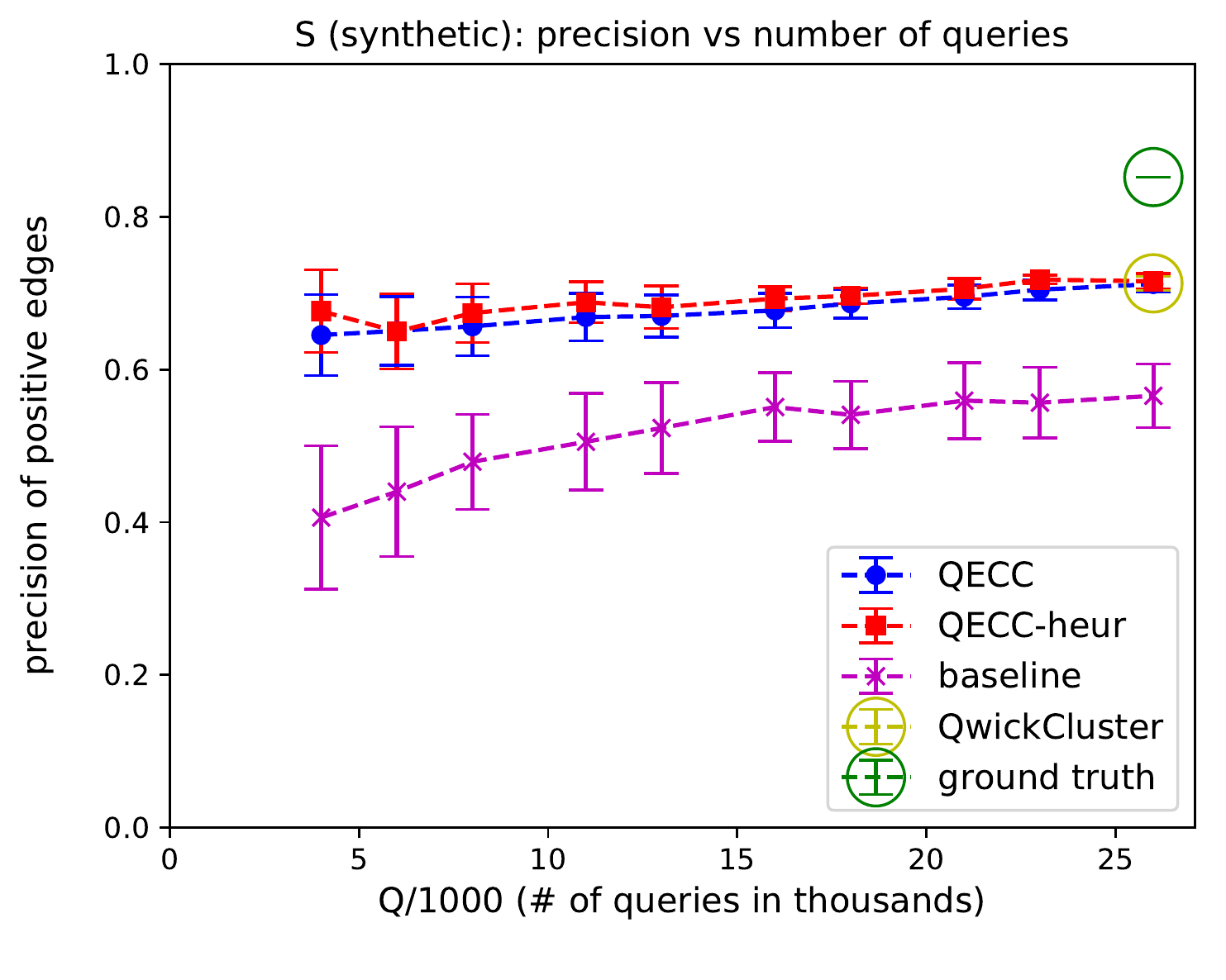} & \hspace{-4mm}\includegraphics[width=0.25\textwidth]{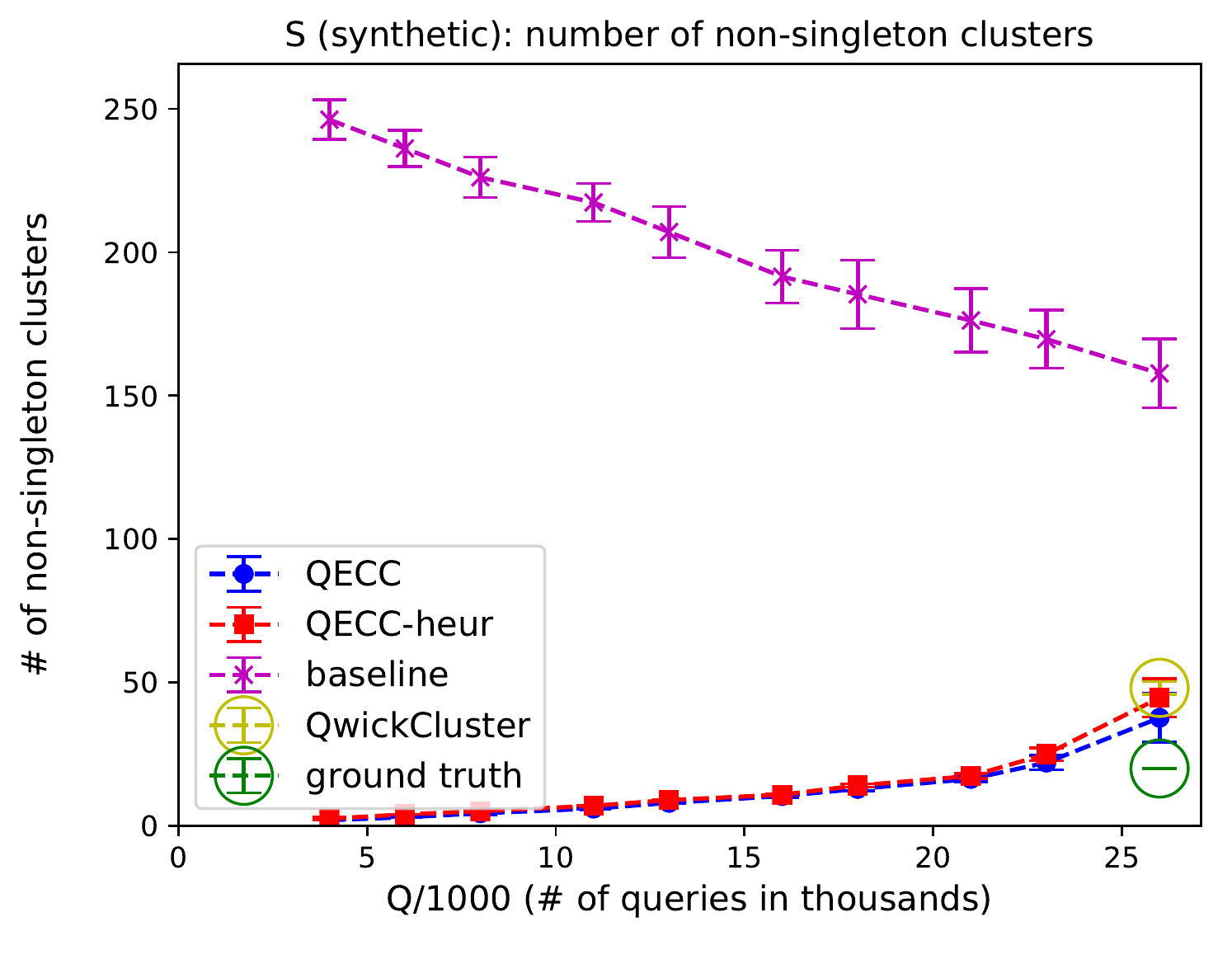} \\
\hspace{-4mm}\includegraphics[width=0.26\textwidth]{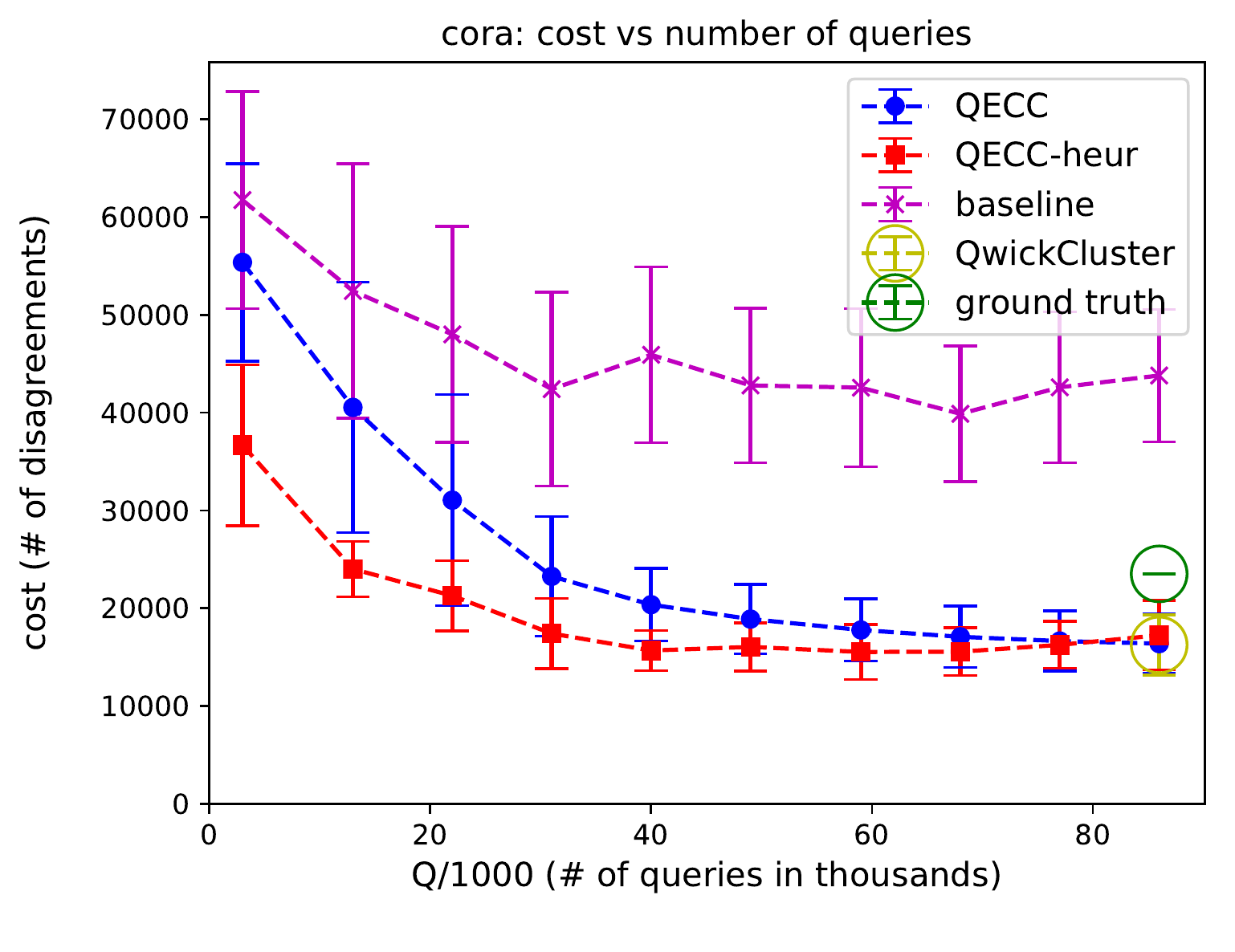} & \hspace{-4mm}\includegraphics[width=0.25\textwidth]{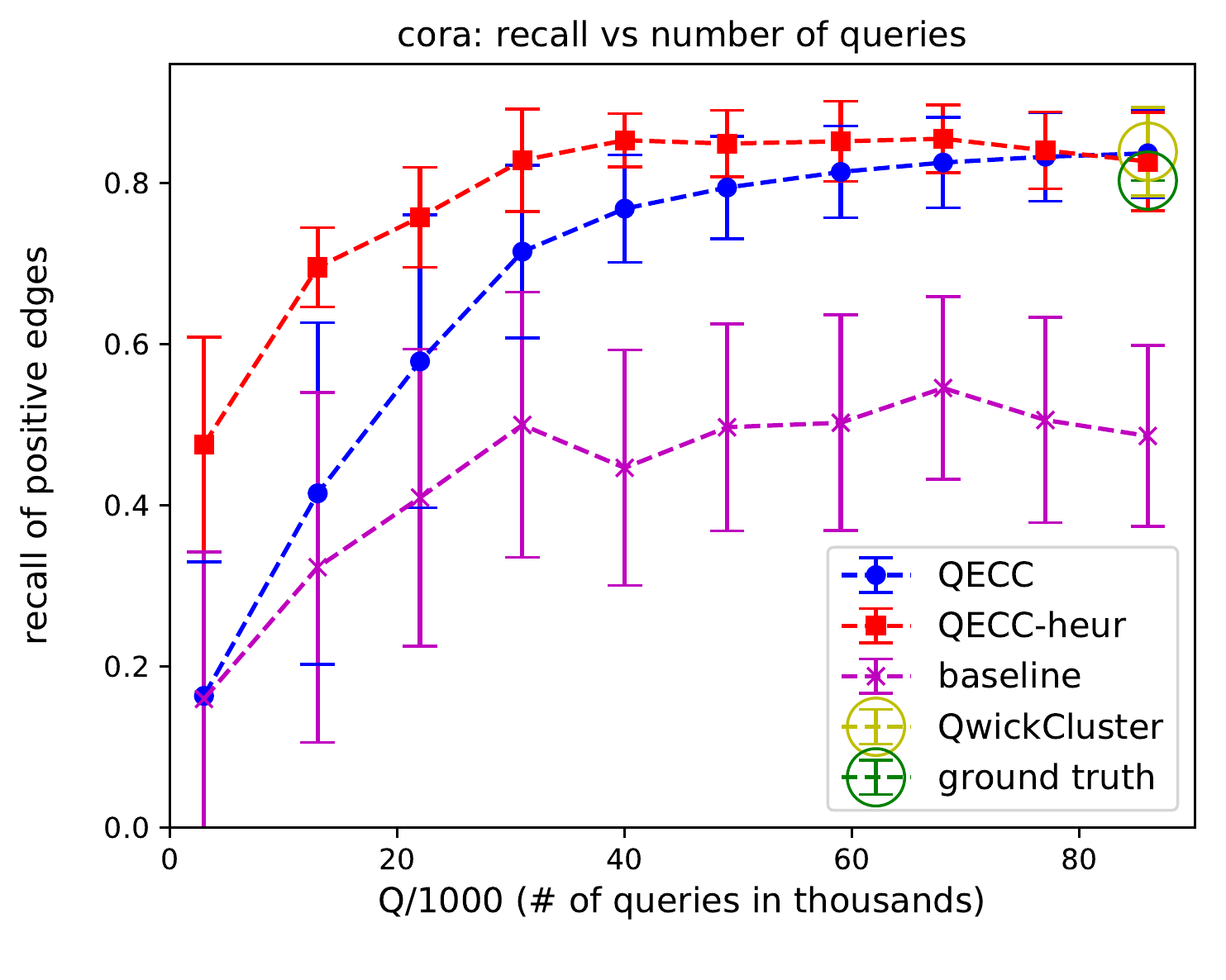} & \hspace{-4mm}\includegraphics[width=0.25\textwidth]{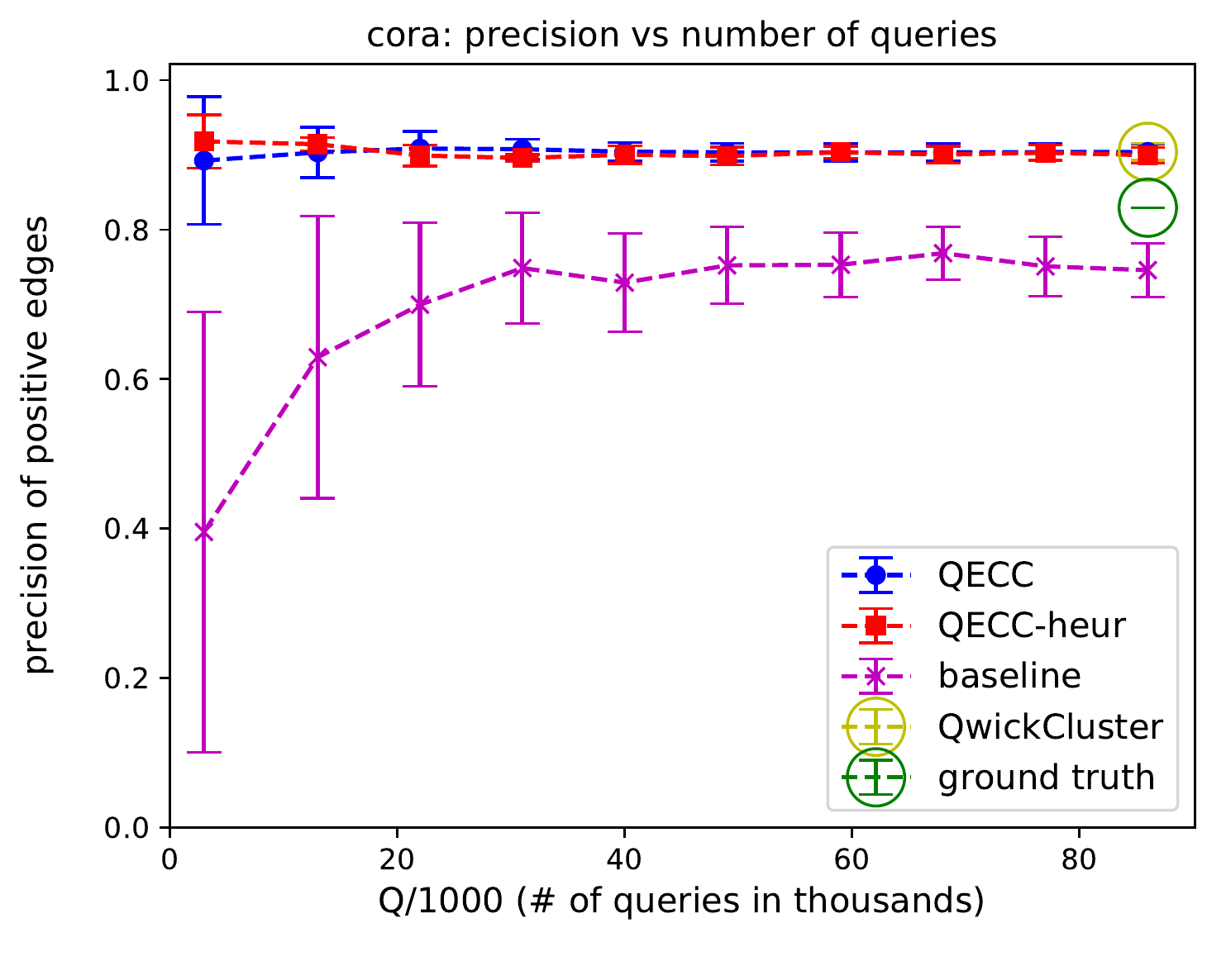} & \hspace{-4mm}\includegraphics[width=0.25\textwidth]{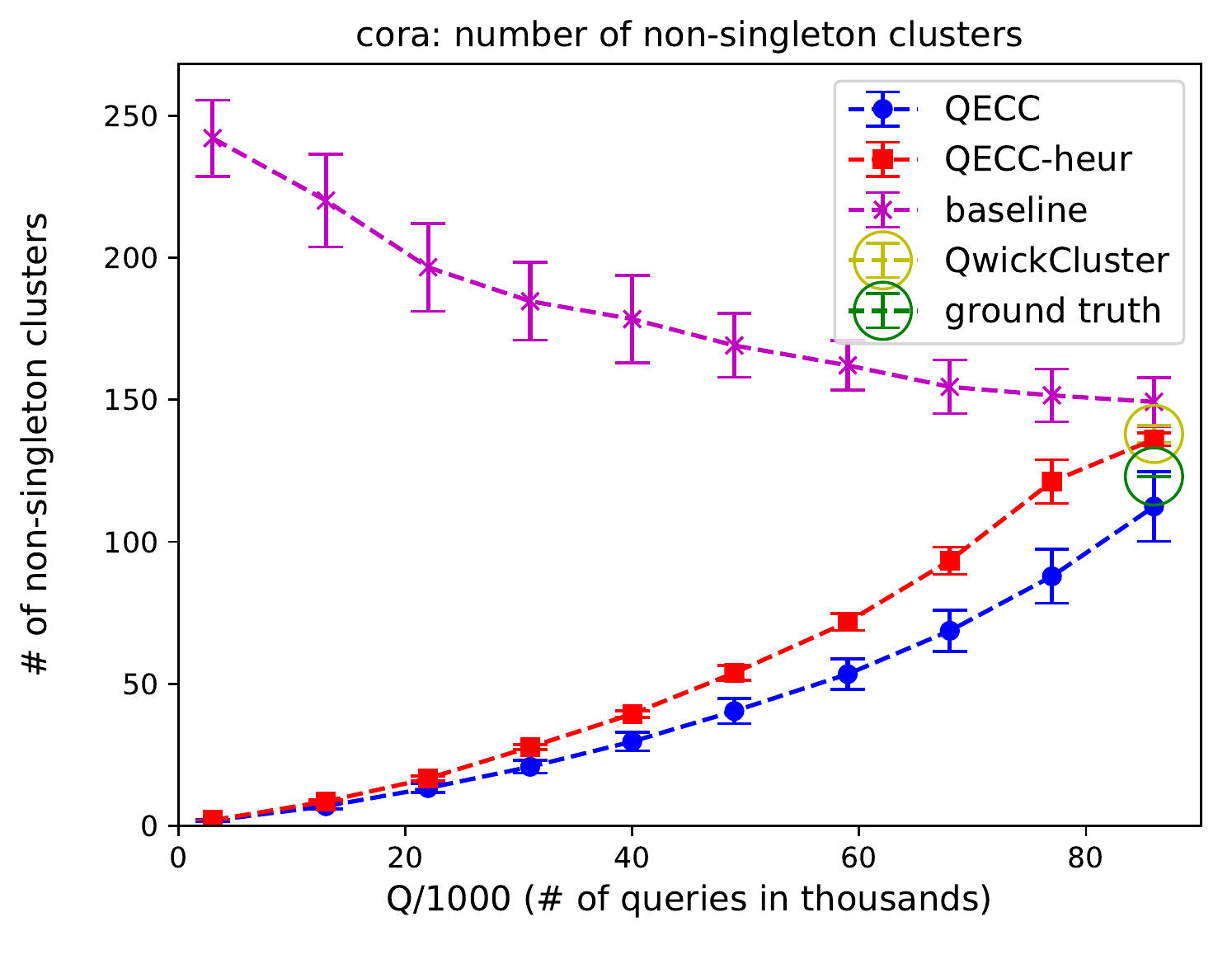} \\
\hspace{-4mm}\includegraphics[width=0.26\textwidth]{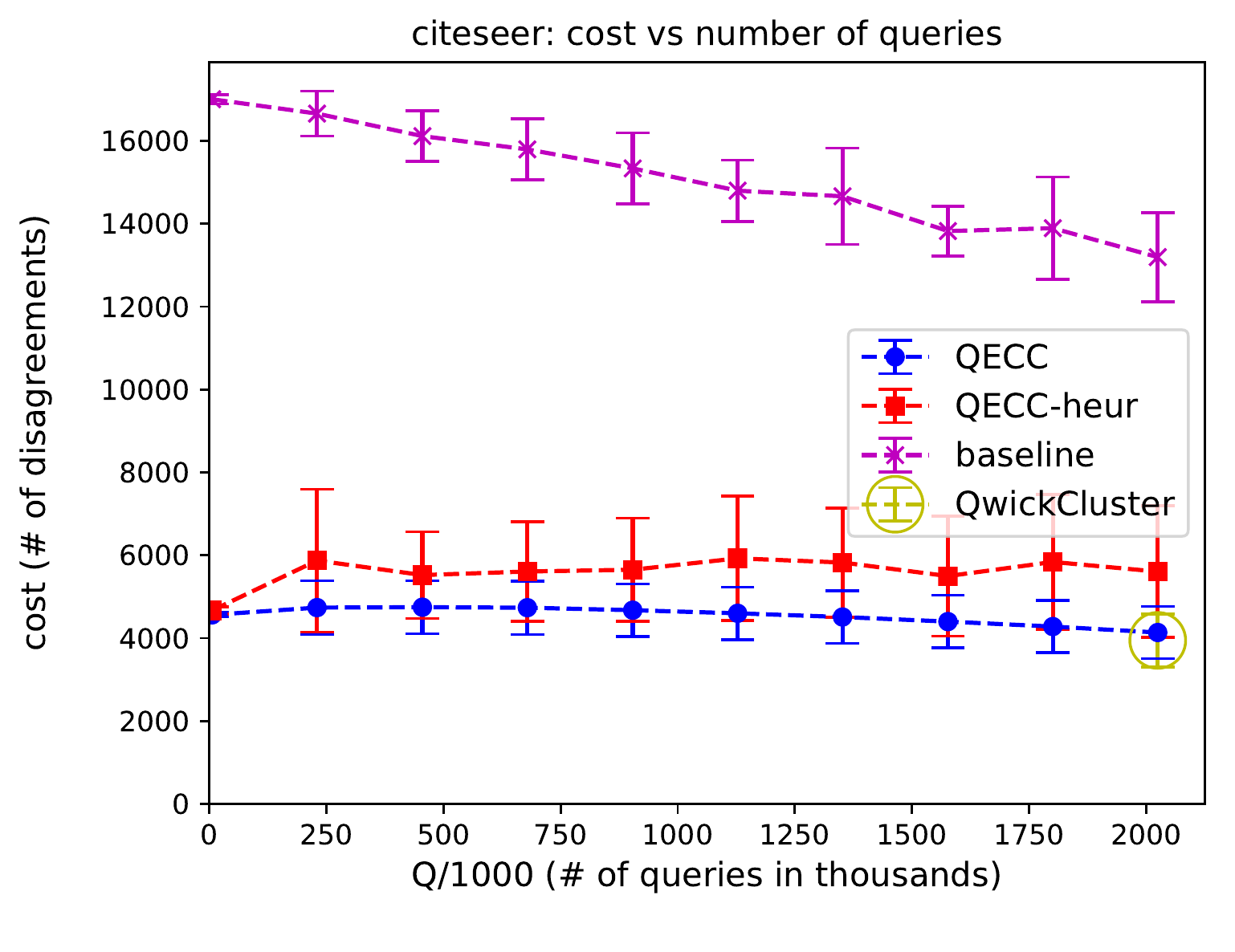} & \hspace{-4mm}\includegraphics[width=0.25\textwidth]{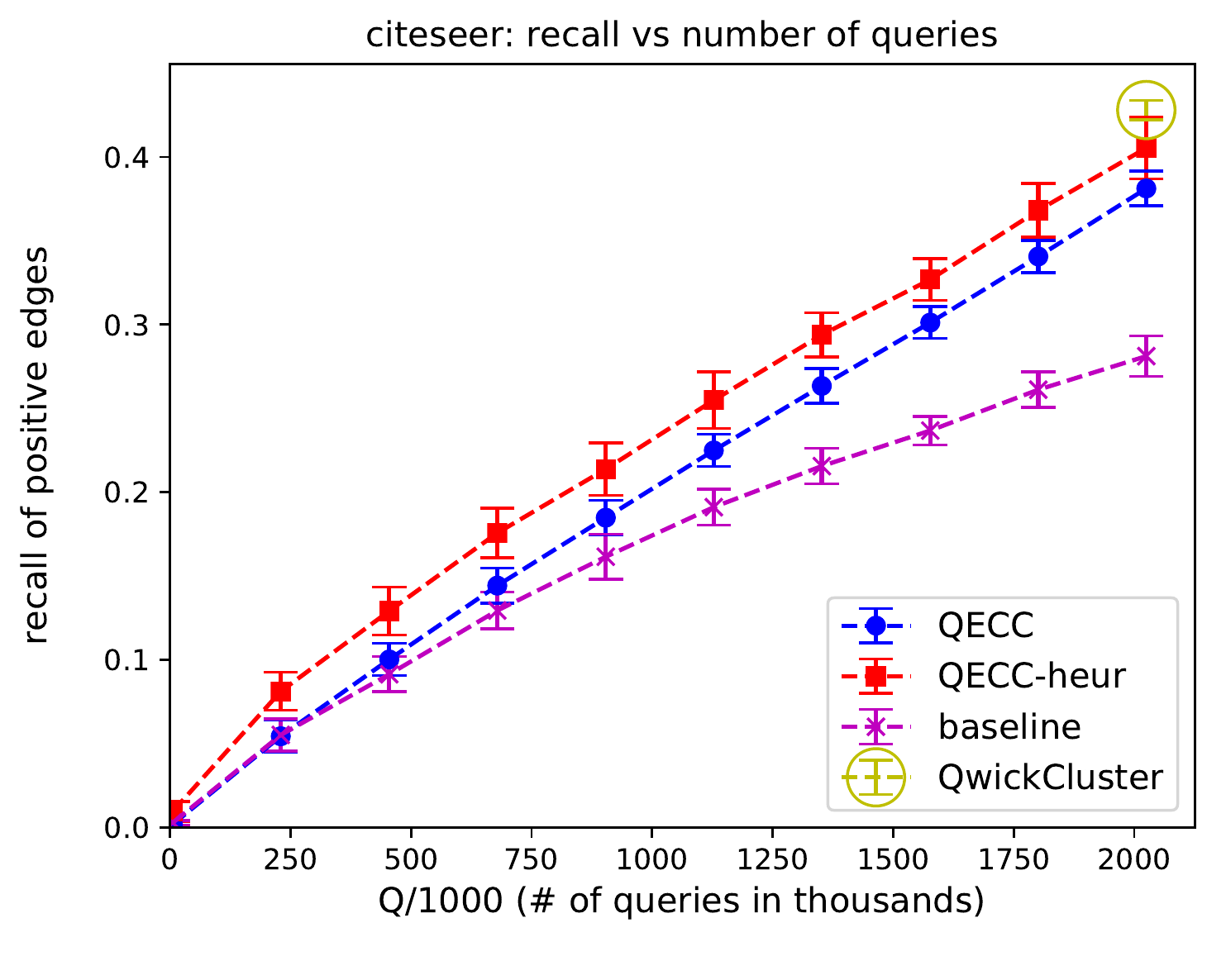} &
\hspace{-4mm}\includegraphics[width=0.25\textwidth]{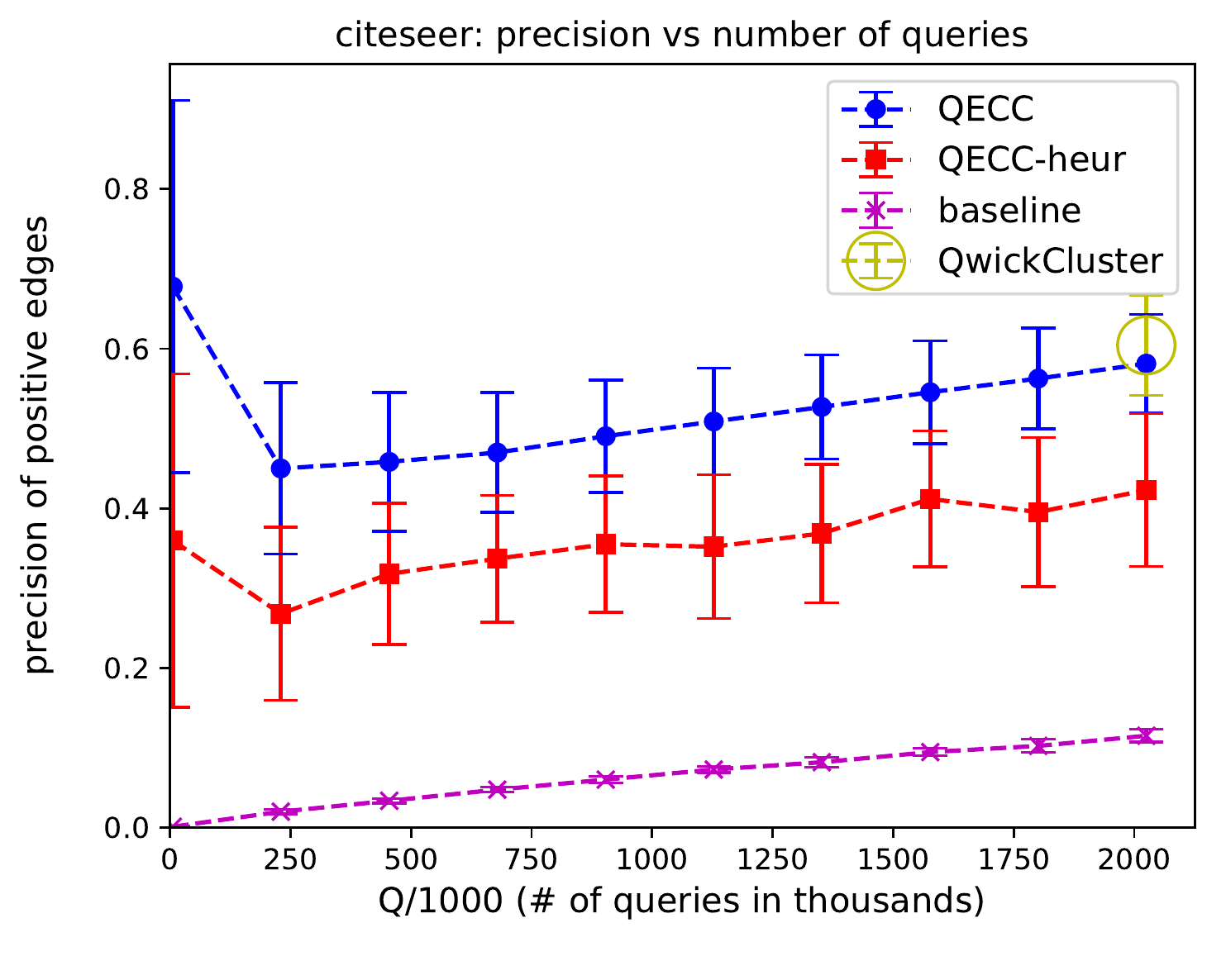} & \hspace{-4mm}\includegraphics[width=0.26\textwidth]{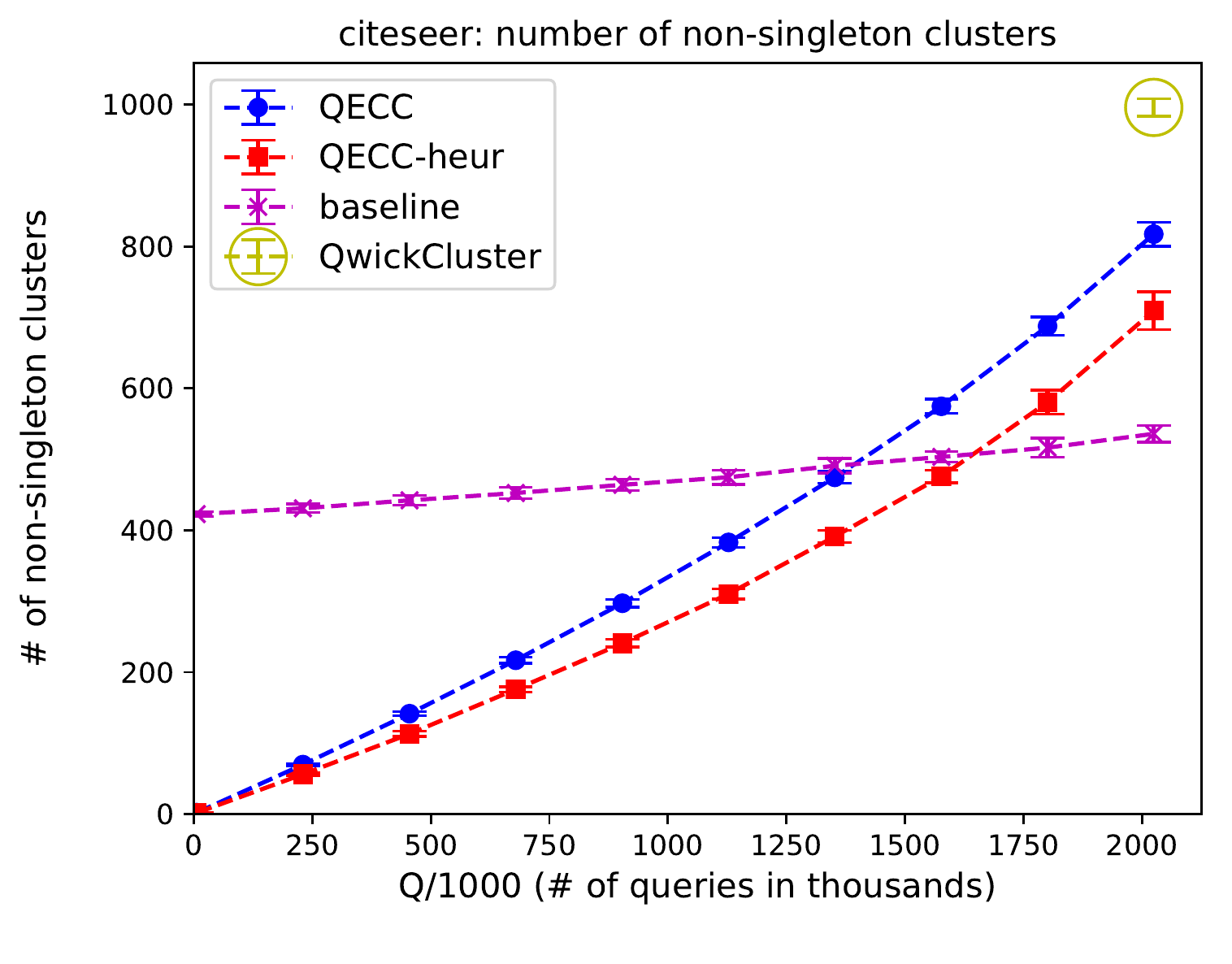}\\
\hspace{-4mm}\includegraphics[width=0.25\textwidth]{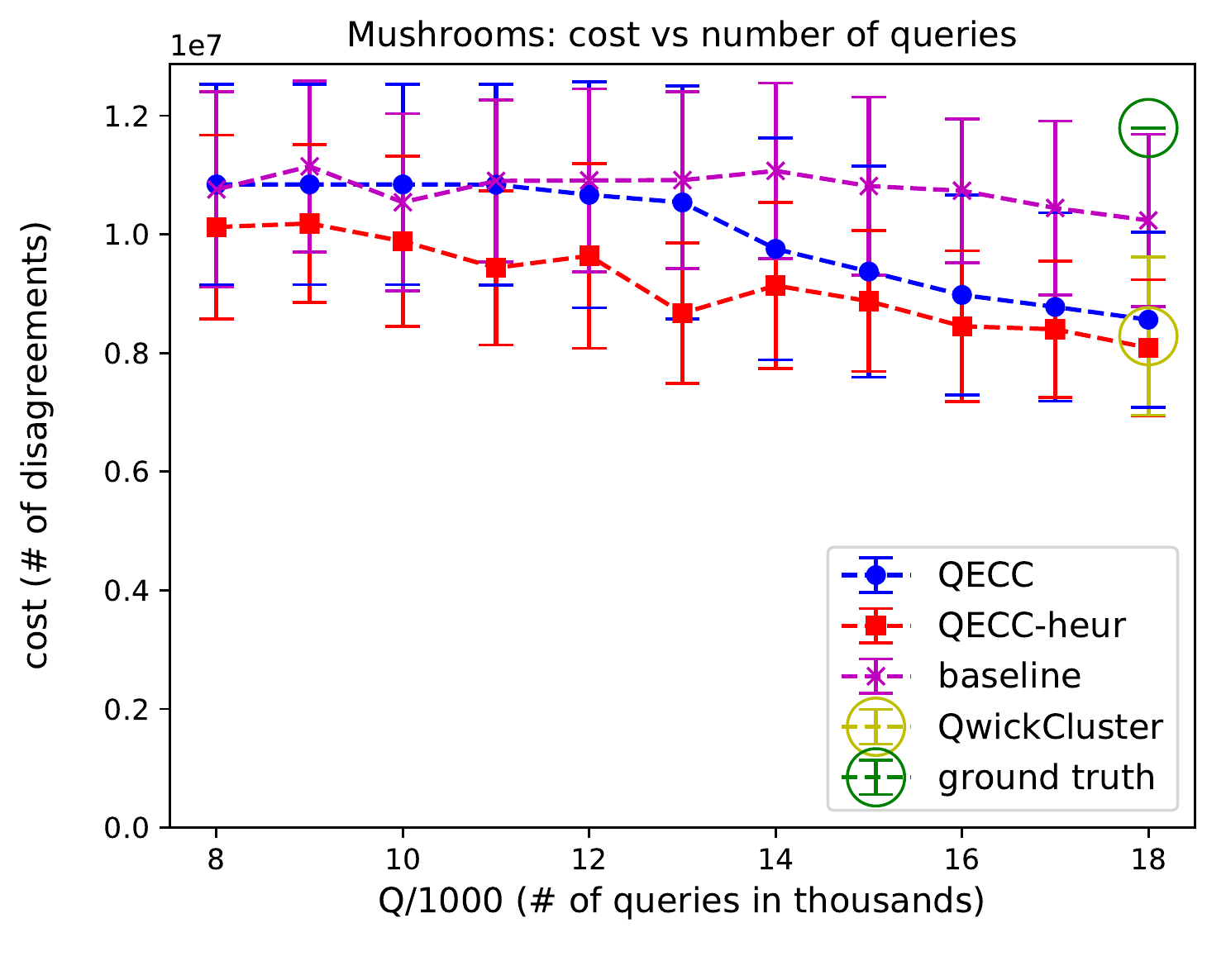} & \hspace{-4mm}\includegraphics[width=0.25\textwidth]{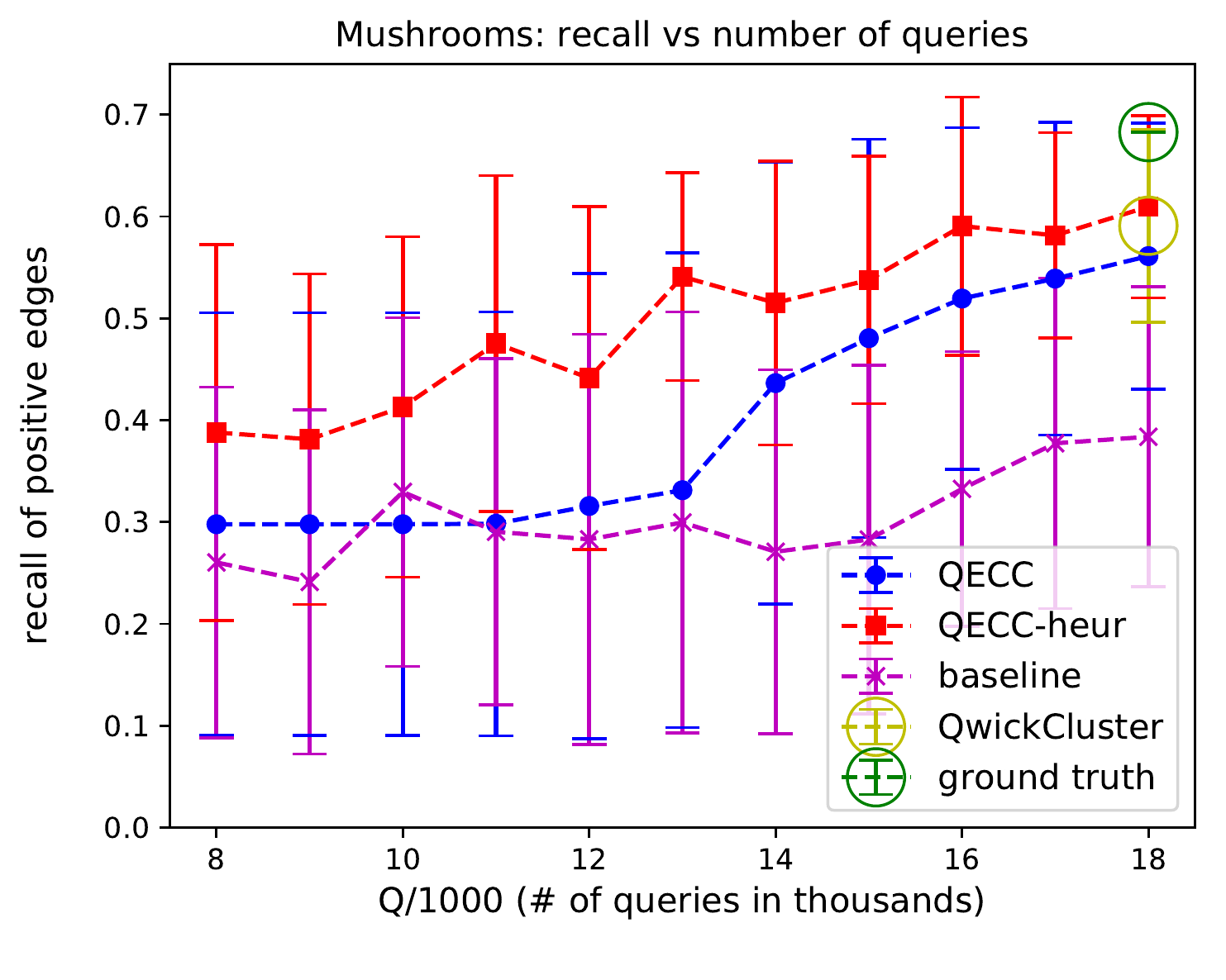} & \hspace{-4mm}\includegraphics[width=0.25\textwidth]{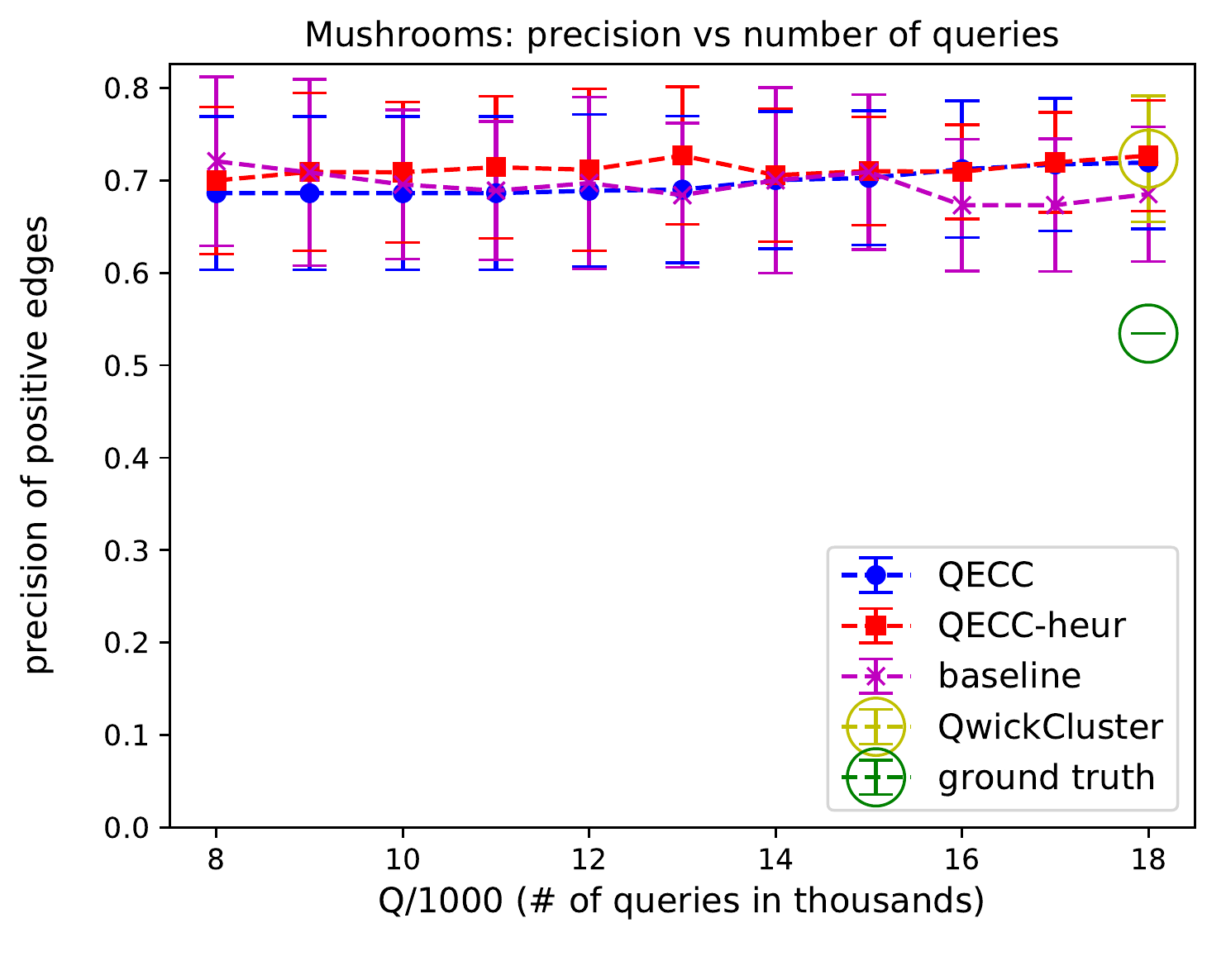} & \hspace{-4mm}\includegraphics[width=0.25\textwidth]{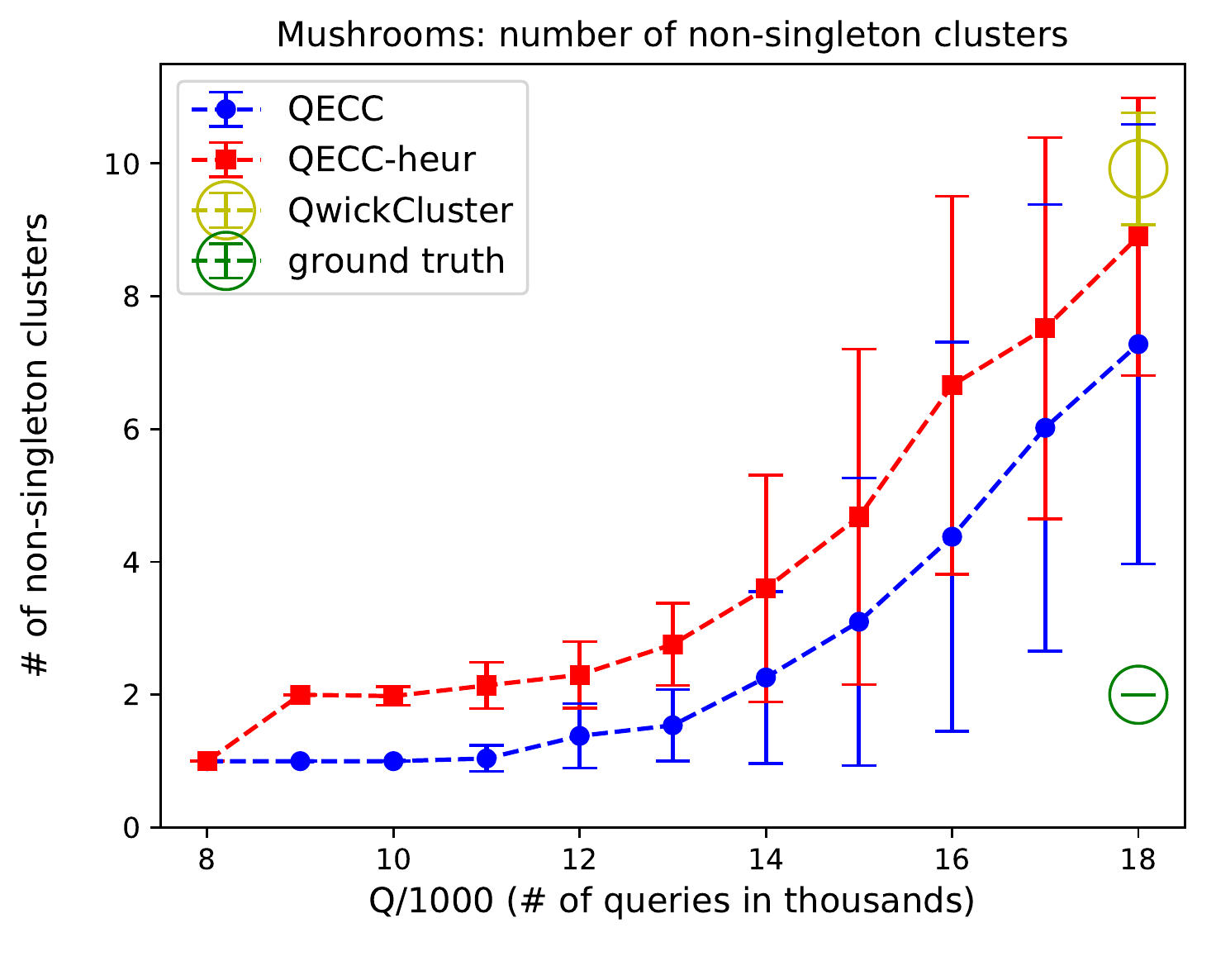}
  \end{tabular}
\caption{\label{fig:findings} Accuracy measures of our two algorithms, the baseline, and ground truth on the
    datasets of Table~\ref{tab:datasets}.
}
\end{figure*}

\spara{Synthetic graphs.}
We construct a family of synthetic graphs $\calS = \{S(n, k, \alpha, \beta)\}$, parameterized by the number of vertices $n$, number of clusters in the ground truth $k$, imbalance
factor $\alpha$, and noise rate $\beta$.
    The ground truth $T(n, k, \alpha)$ for $S(n, k, \alpha, \beta)$ consists of one clique of size $\alpha n / k$ and $k-1$ cliques of size $(1-\alpha) n / k$, all disjoint.
    To construct the input graph $S(n,k,\alpha,\beta)$, we flip the sign of every edge between same-cluster nodes in $T(n,k,\alpha)$ with probability $\beta$, and we flip the sign
    of every edge between distinct-cluster nodes with
    probability $\beta / (k-1)$. (This ensures that the number total number of positive and negative edges flipped is roughly the same.)

\mycomment{
    We construct to kinds of synthetic graphs, parameterized by the number of vertices $n$, number of clusters in the ground truth $k$, and imbalance factor $\alpha$:
    \begin{itemize}
    \item The ground truth in the first kind ($S_1$) consists of one cluster of size $\alpha n / k$ and $k-1$ clusters of size $(1-\alpha) n / k$.
    \item The ground truth in the second kind ($S_2$) consists of roughly $k = O(\log n/\alpha)$ clusters and is constructed as follows:
    we put each vertex in the first cluster independently with probability $\alpha$, then remove the cluster and proceed recursively.
    \end{itemize}
    Note that the $\alpha$ parameters are not comparable across the two kinds of synthetic graphs.

    To construct the input graph, we employ a noise model parameterized by an internal edge error ratio $\beta$ and an external edge error ratio $\gamma$. For every pair of nodes in
    the same ground-truth cluster, we make it a negative edge with probability $\beta$; and for every pair of nodes in different ground-truth clusters, we make it a positive edge with
    probability $\beta$.
}

\spara{Real-world graphs.}
For our experiments with real-world graphs, we choose three with very different characteristics:
\begin{itemize}
\item The $\textsc{cora}$ dataset\footnote{\url{https://github.com/sanjayss34/corr-clust-query-esa2019}}, where each node is a scientific
publication represented by a string determined by its
title, authors, venue, and date. Following~\cite{same_cluster}, nodes are joined by a positive edge when the Jaro string similarity between them exceeds or equals $0.5$.
\item The $\textsc{Citeseer}$ dataset\footnote{\url{https://github.com/kimiyoung/planetoid}},
a record of publication citations for Alchemy. We put an edge between two publications if one of them cites the other~\cite{embeddings}.
\item The $\textsc{Mushrooms}$ dataset\footnote{\url{https://archive.ics.uci.edu/ml/datasets/mushroom}}, including descriptions of mushrooms classified as either edible or
poisonous, corresponding to the two ground-truth clusters. Each mushroom is described by a set of features. To construct the graph, we remove the edible/poisonous feature and place an edge
between two mushrooms if they differ on at most half the remaining features. This construction has been inspired by~\cite{clustering_aggregation}, who show that high-quality clusterings can often
be obtained by aggregating clusterings based on single features.

%It consists of 1.9K nodes, 191 clusters with the largest cluster-size being 236.
%No noise is added to the input graphs since the optimal solution has non-zero cost for them.
%\item The $\textsc{newsgroups}$ dataset, comprising around 18000 newsgroups posts on 20 topics.
     %We extract TF-IDF vectors from unigram tokens, and use $1000$ queries to determinate an estimate $\tau$ to the average cosine similarity between pairs of posts. After $\tau$ has been found, we consider that there is a positive edge between two posts if the cosine similarity is equal to or larger than $\tau$.
\end{itemize}

 \begin{figure*}
         \centering
         \begin{tabular}{ccc}
\hspace{-4mm}\includegraphics[width=0.32\textwidth]{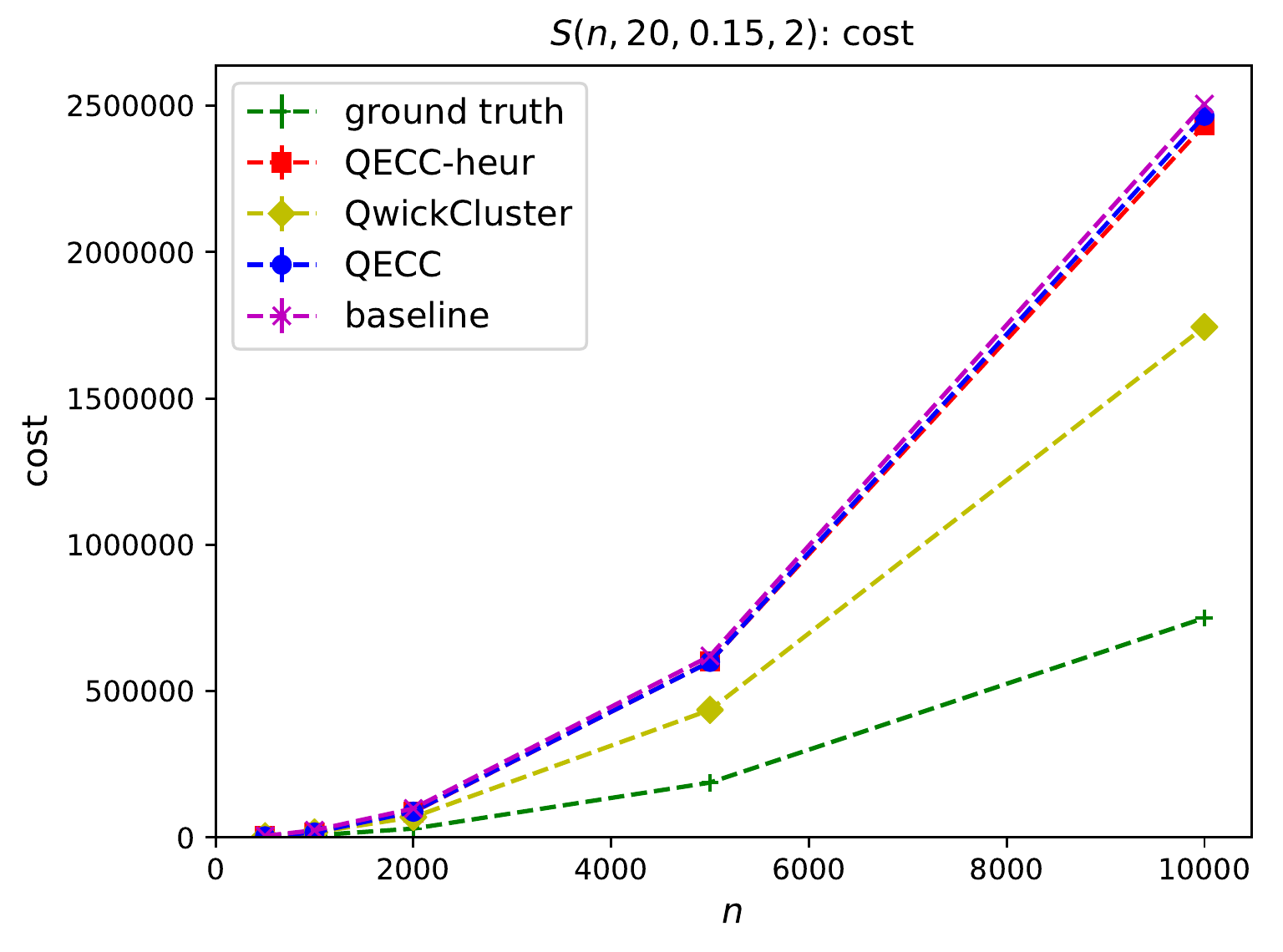} & \includegraphics[width=0.3\textwidth]{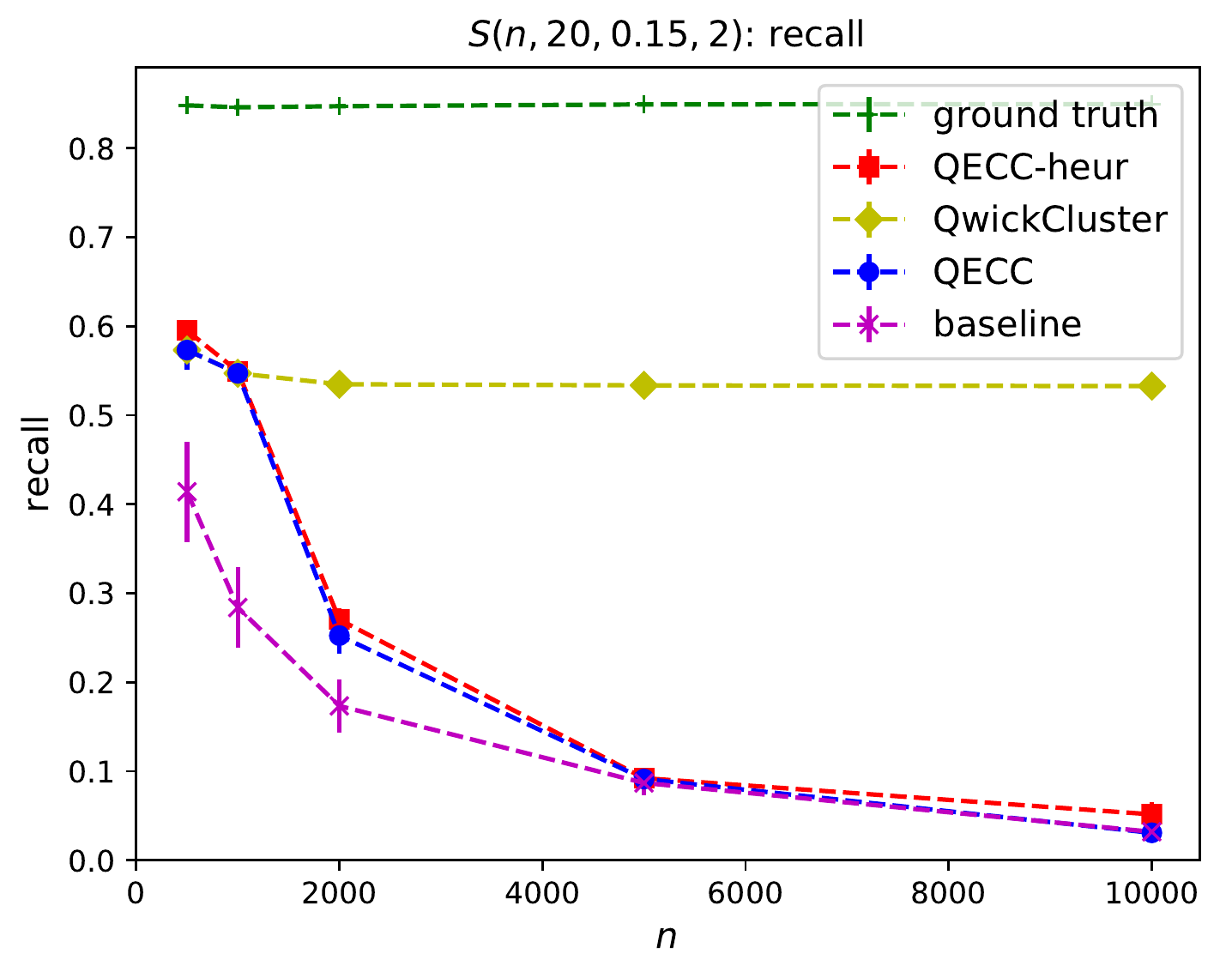} &
\includegraphics[width=0.3\textwidth]{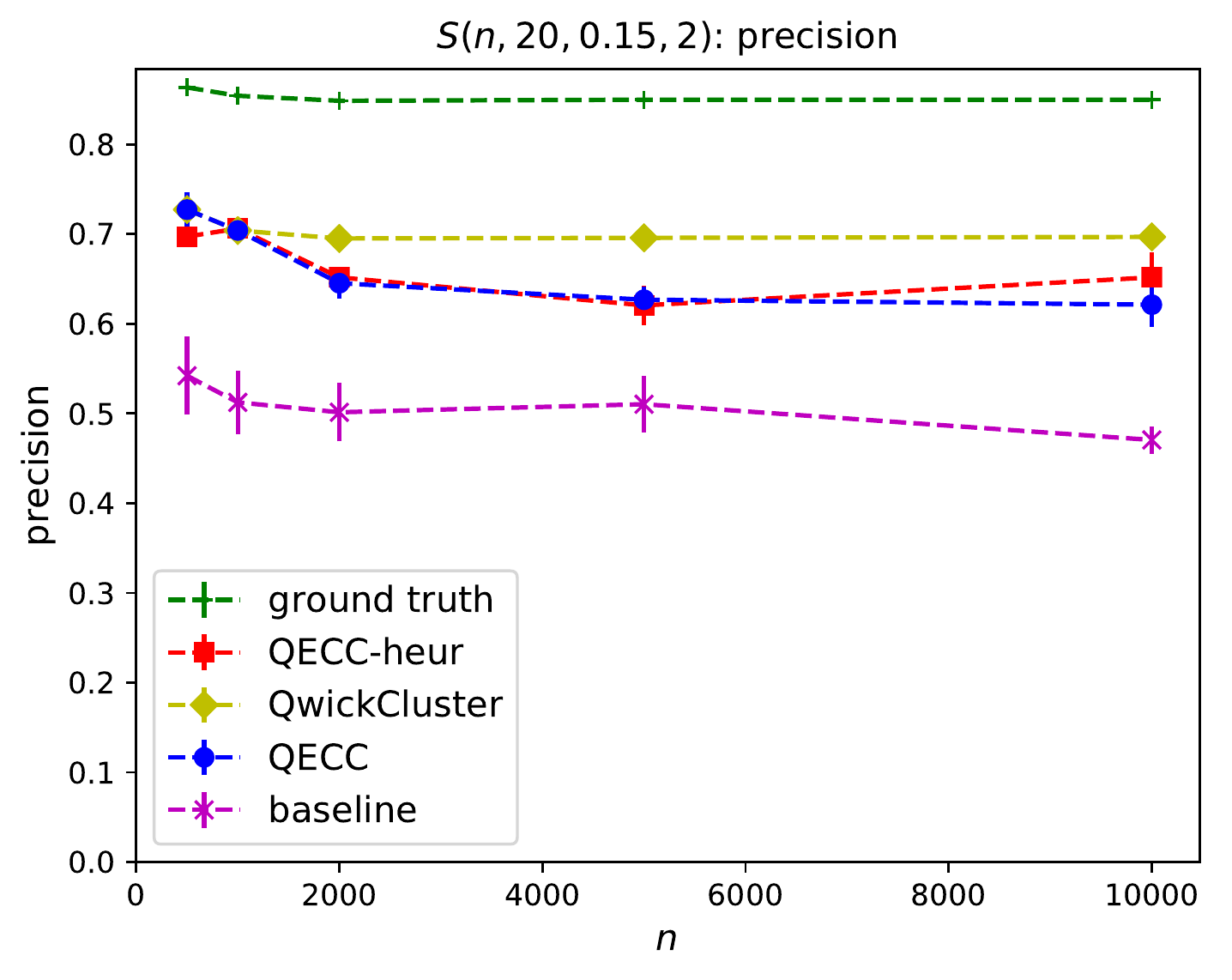} \\
\\
\hspace{-4mm}\includegraphics[width=0.32\textwidth]{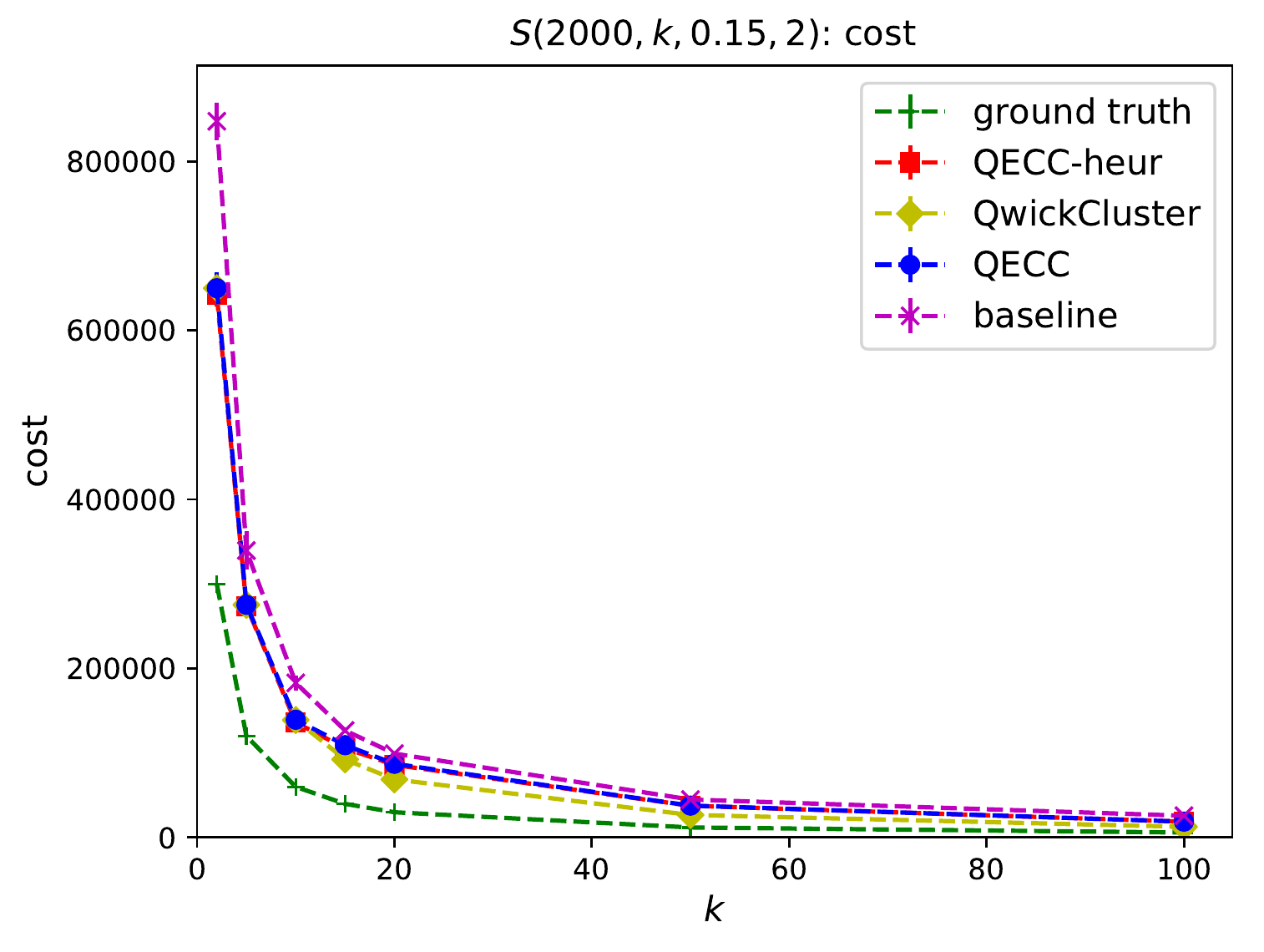} & \includegraphics[width=0.3\textwidth]{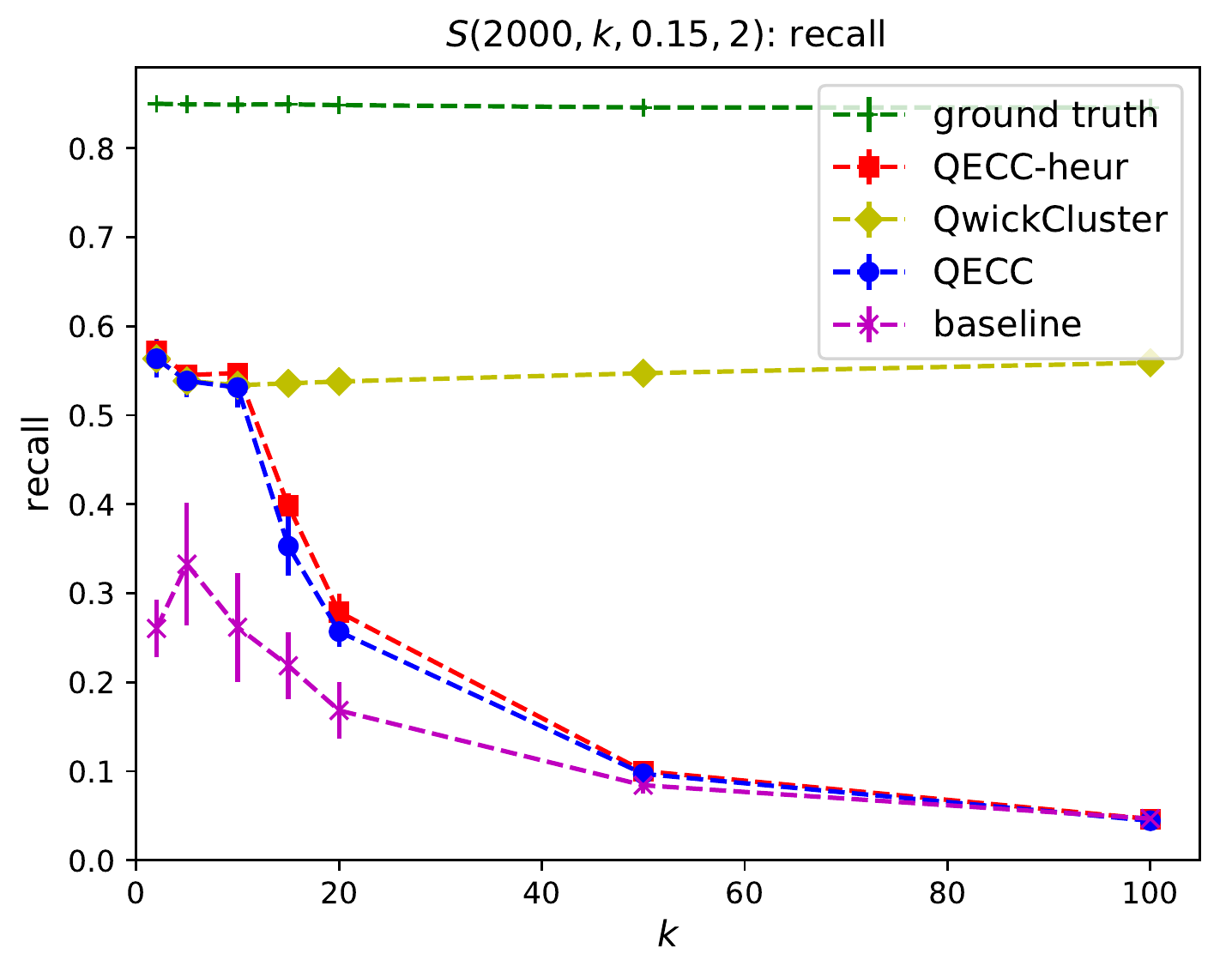} &
\includegraphics[width=0.3\textwidth]{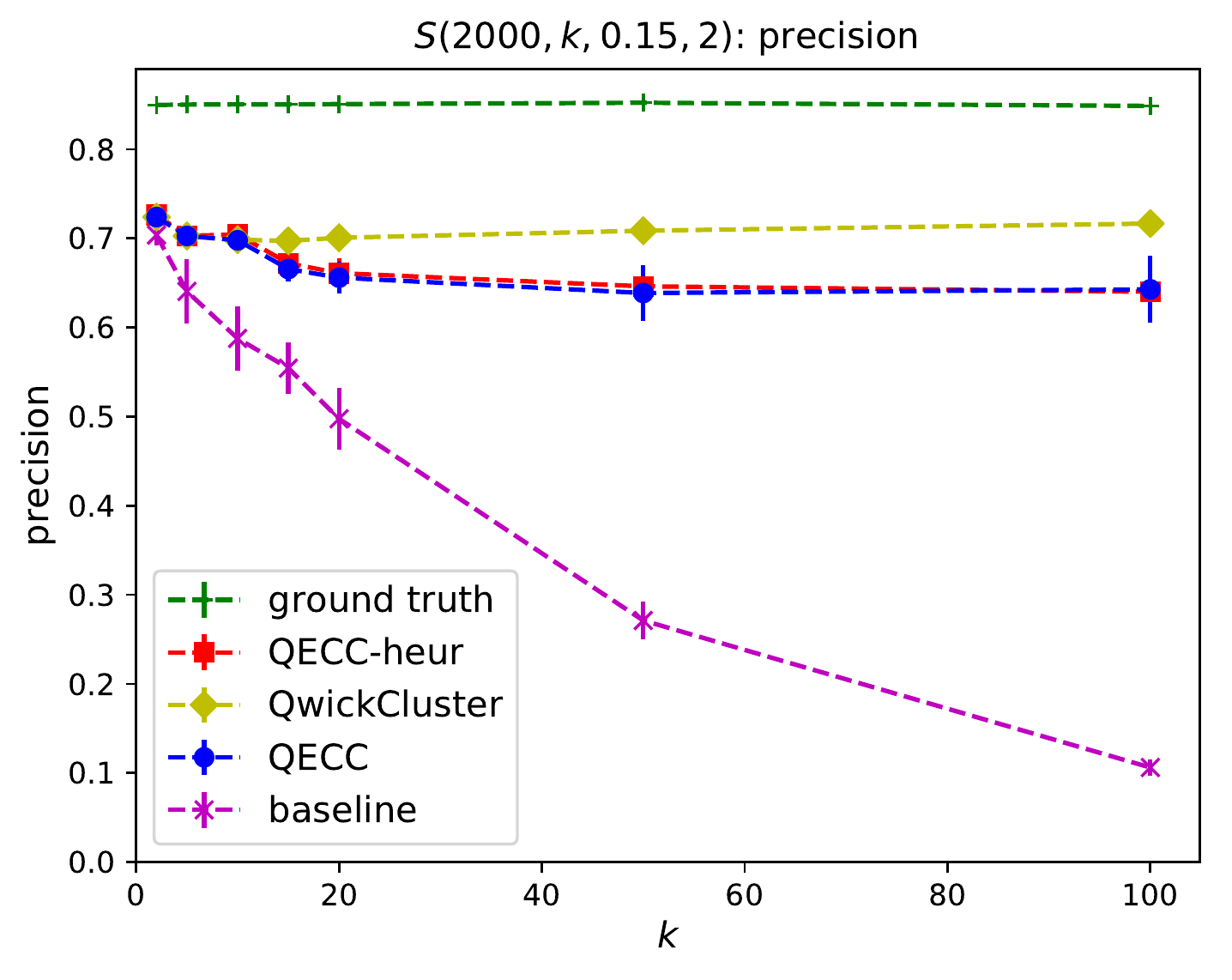} \\
\hspace{-4mm}\includegraphics[width=0.32\textwidth]{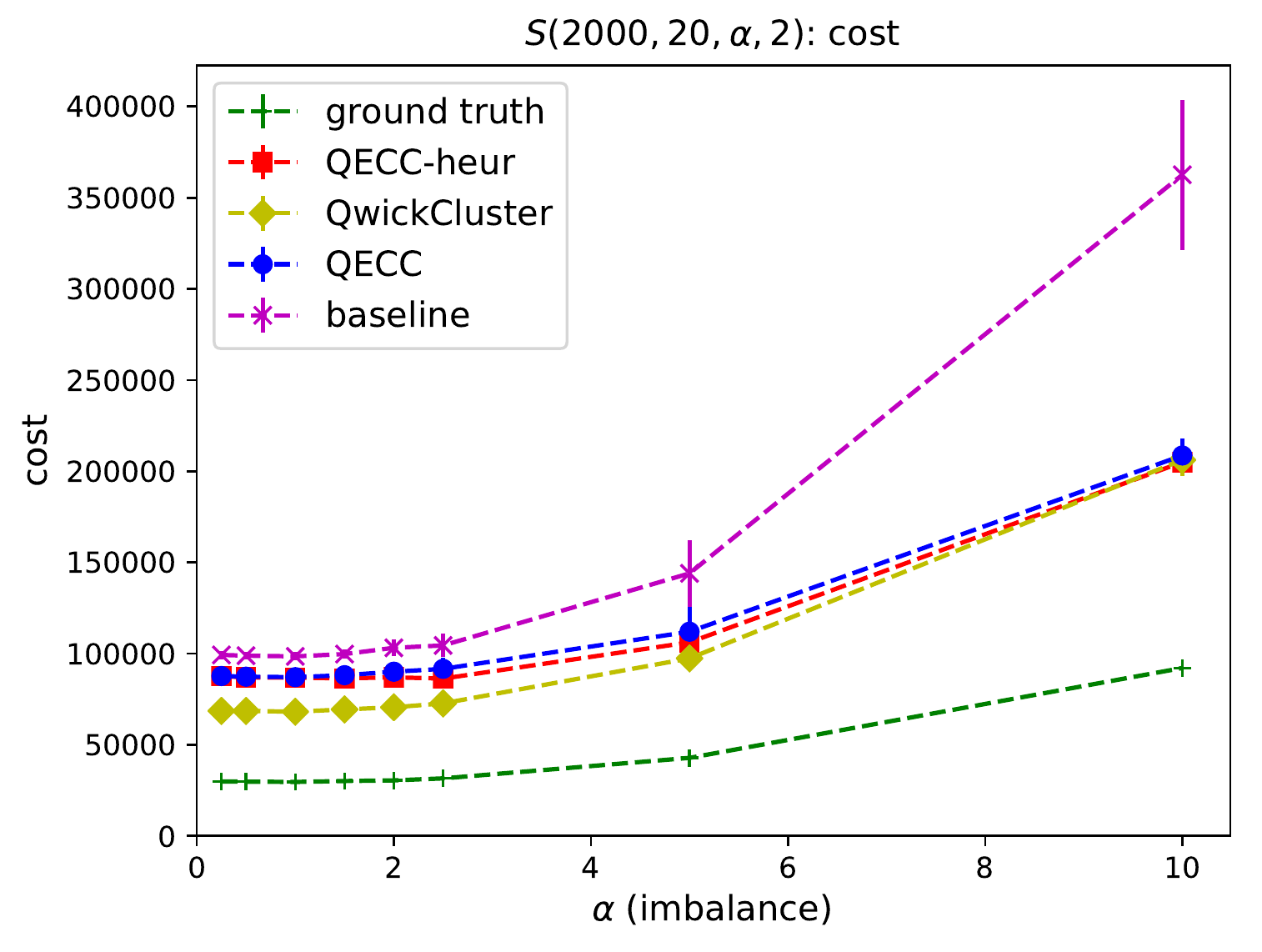} & \includegraphics[width=0.3\textwidth]{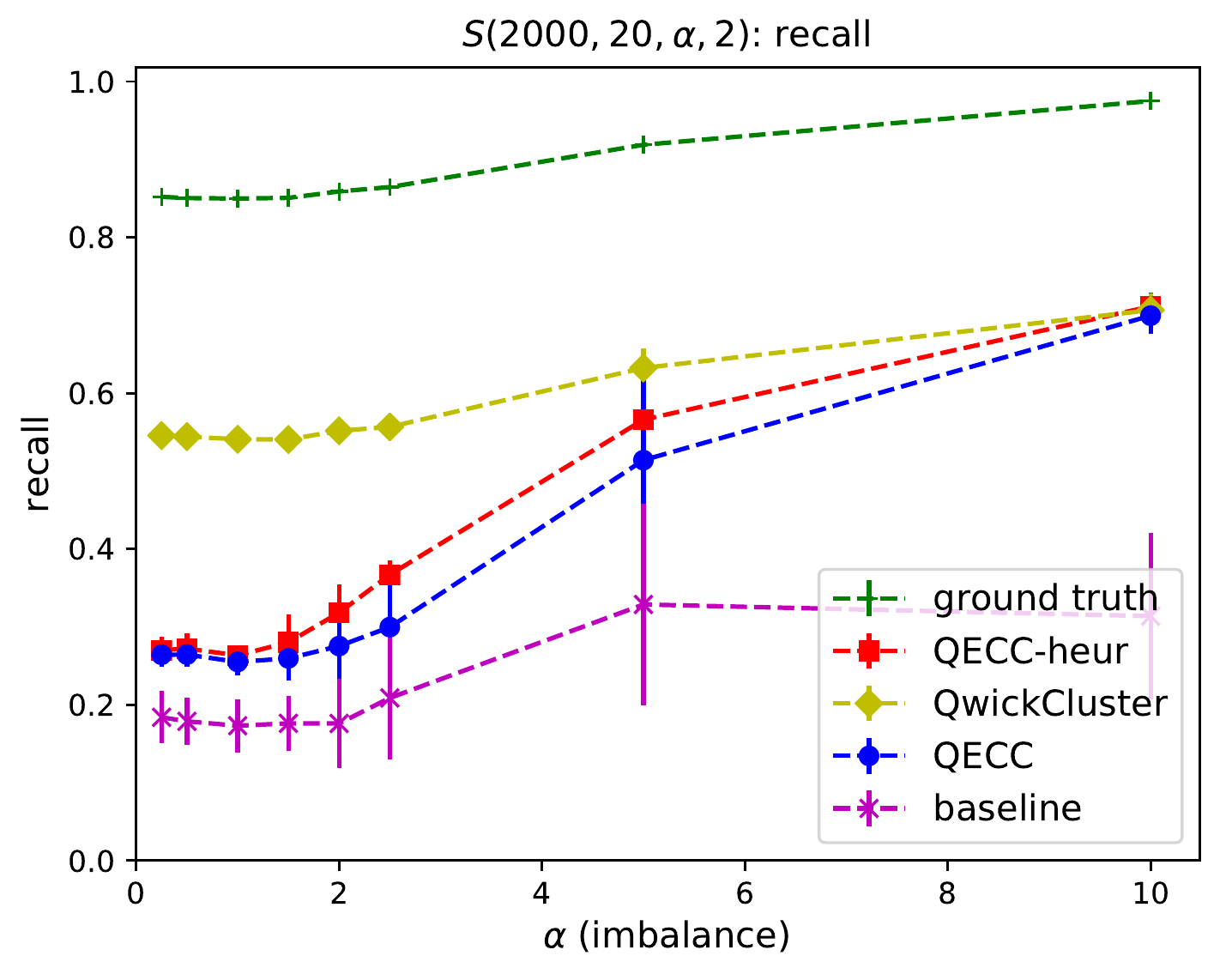}&
\includegraphics[width=0.3\textwidth]{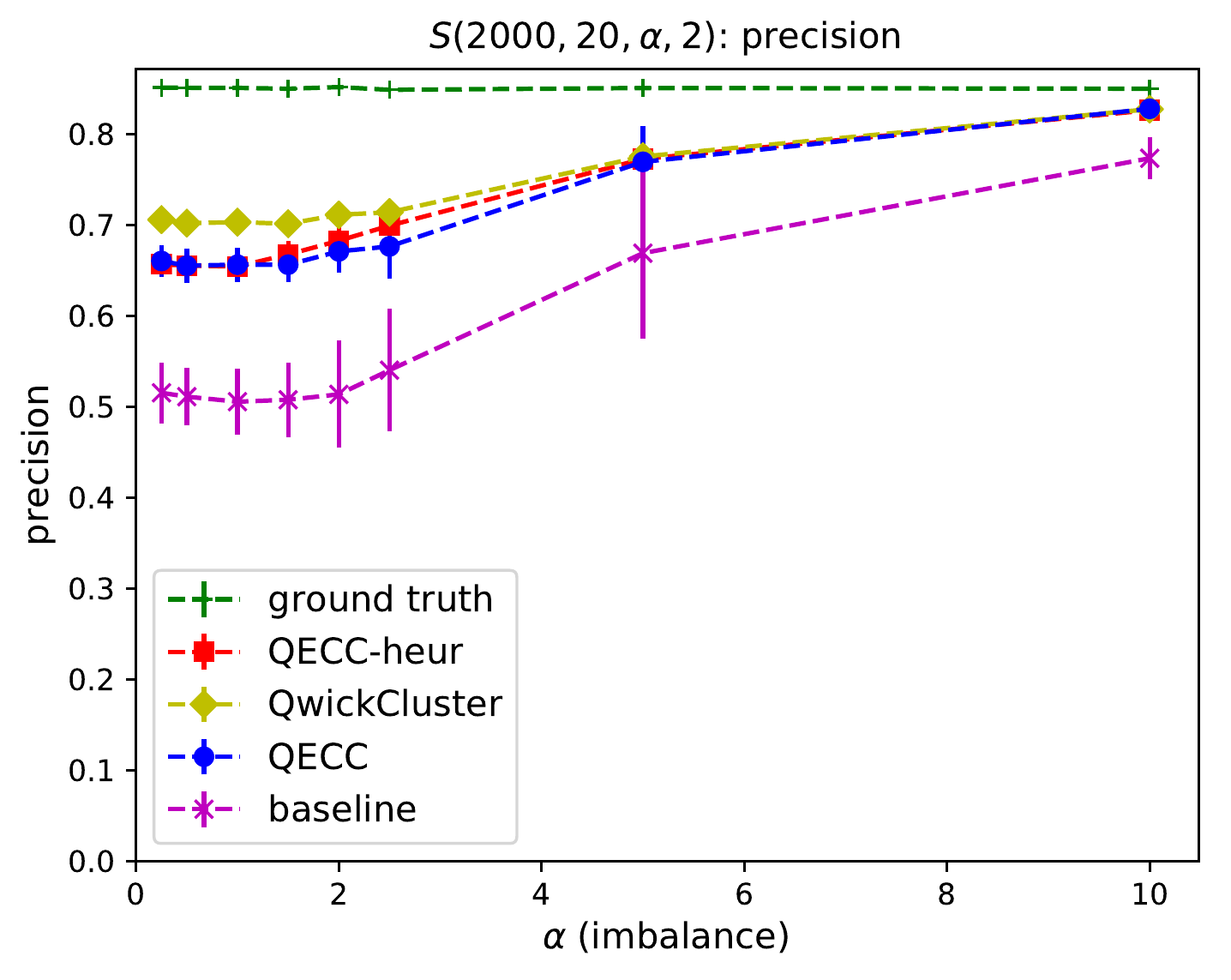}  \\
\hspace{-4mm}\includegraphics[width=0.32\textwidth]{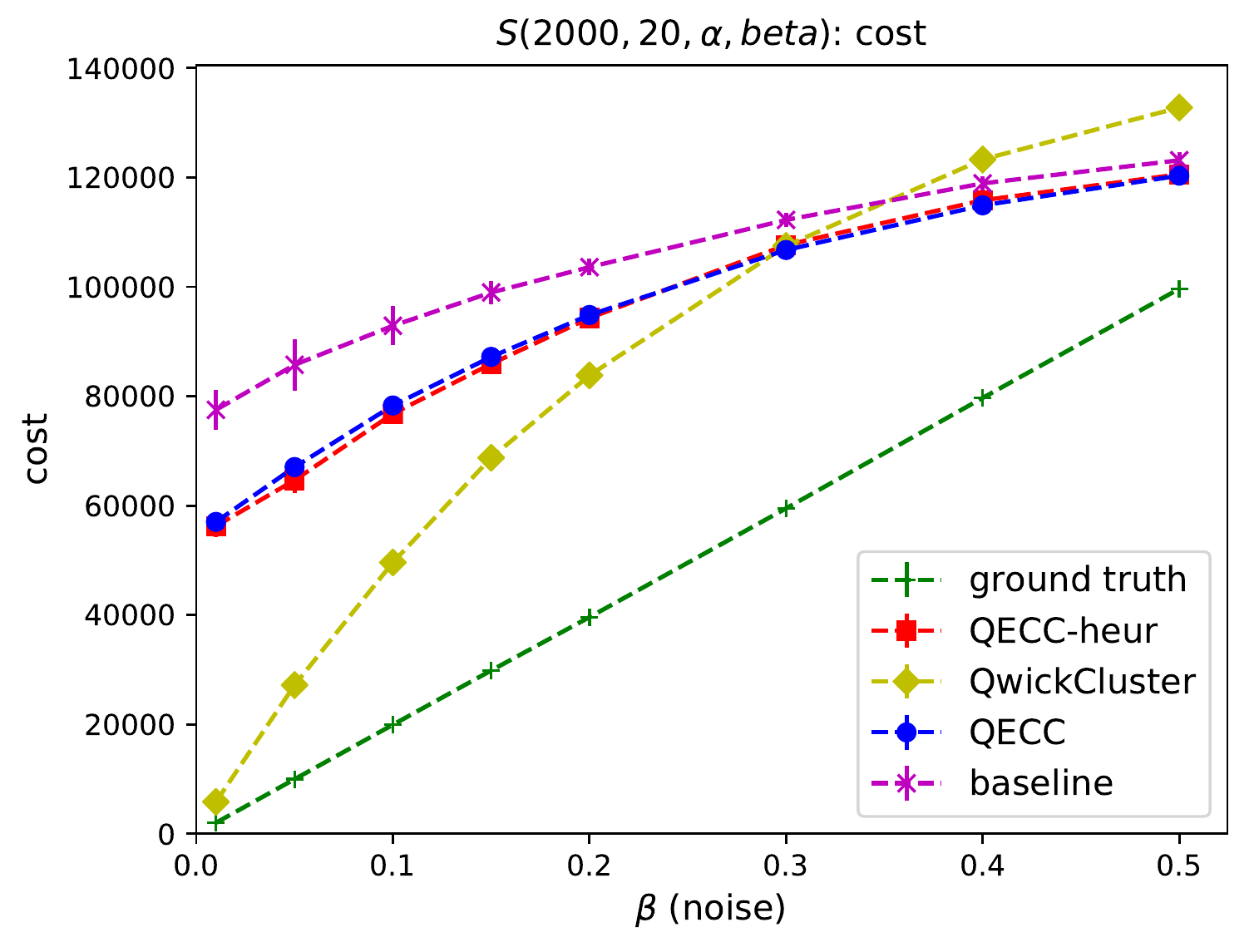} & \includegraphics[width=0.3\textwidth]{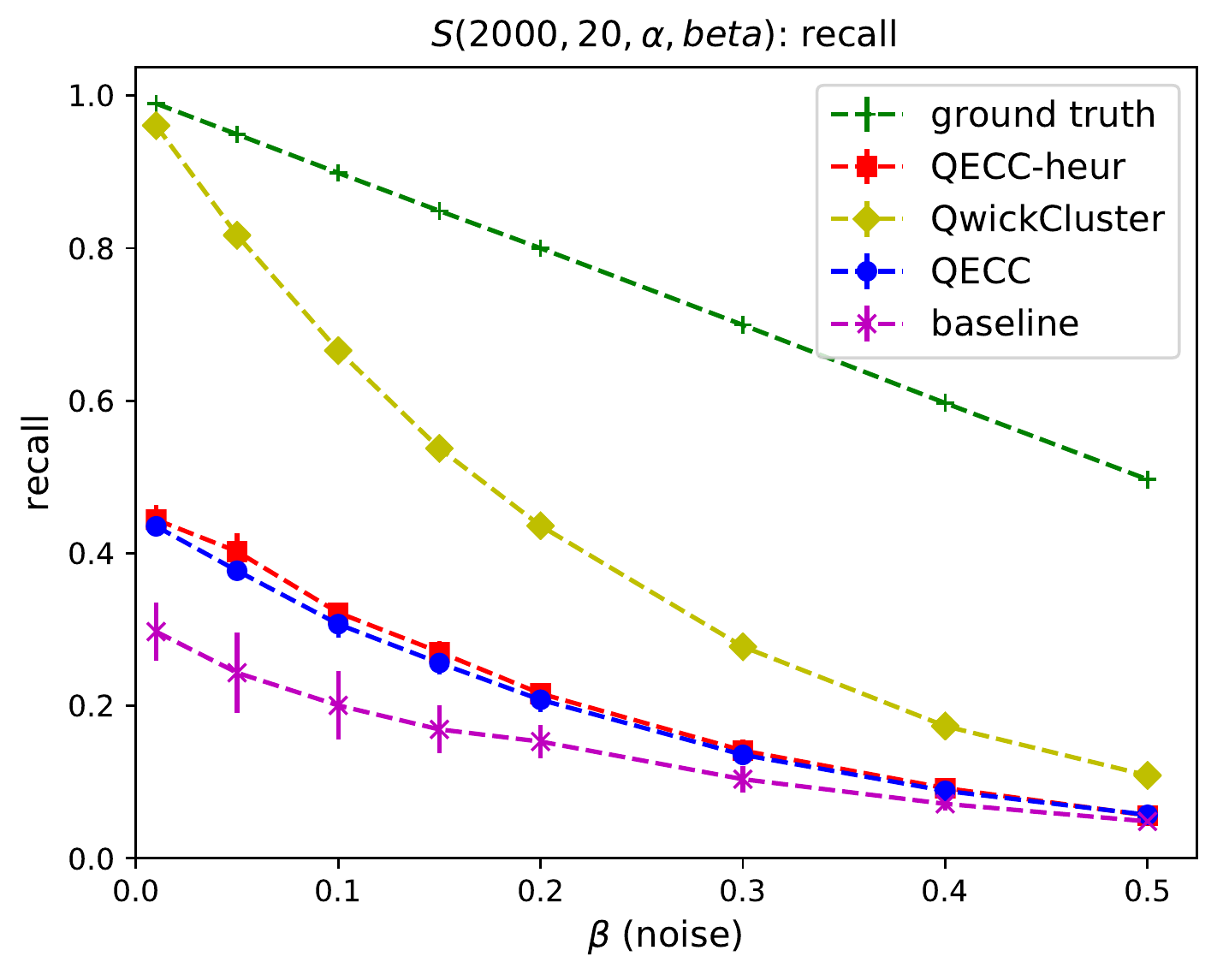}&
\includegraphics[width=0.3\textwidth]{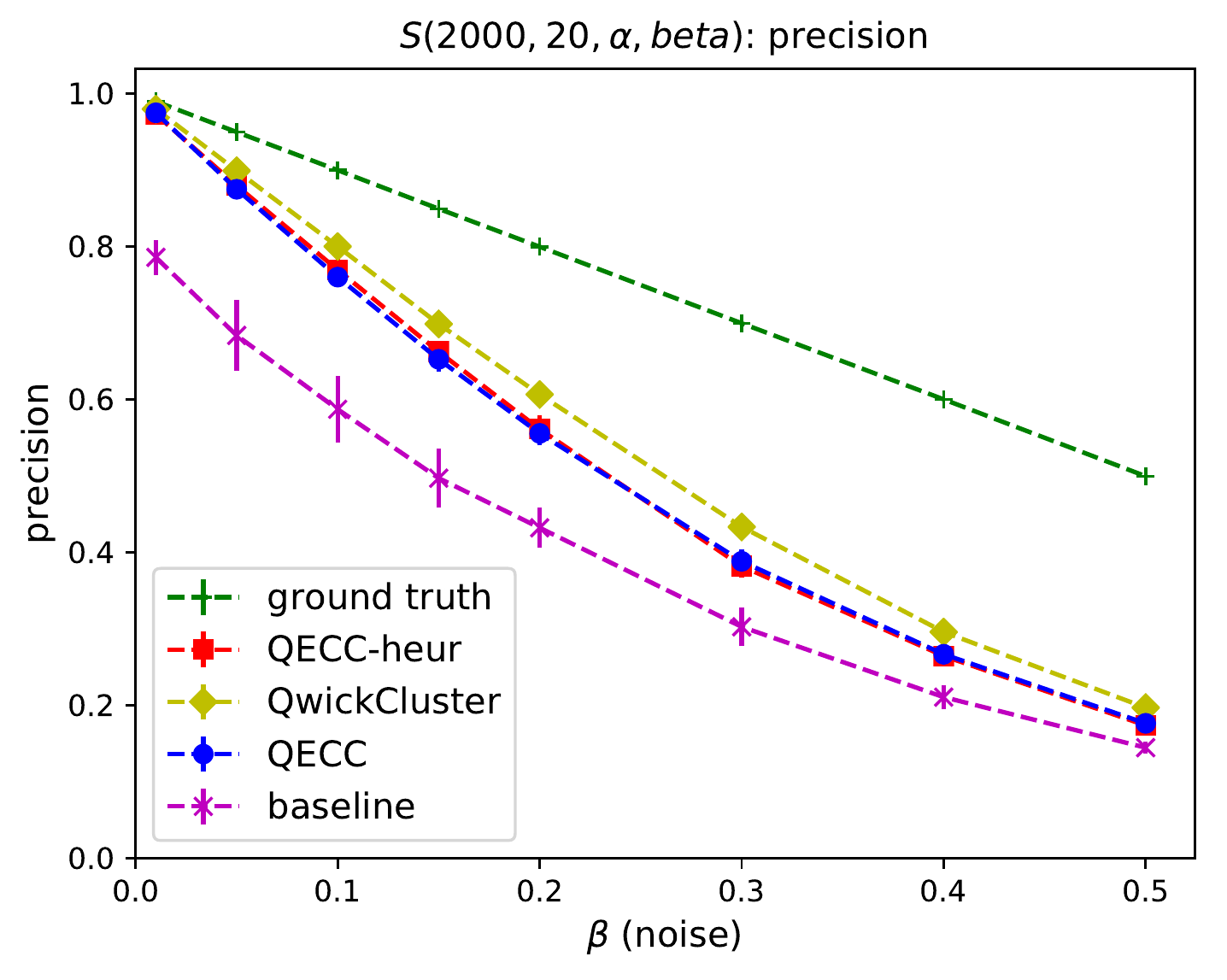}  \\
  \end{tabular}
\caption{\label{fig:variation}
    Effect of graph size $n$ (1rst row), number of clusters $k$, (2nd row), imbalance parameter $\alpha$ (3rd row) and noise parameter $\beta$ (4rth row)
    total cost and recall, for a fixed number $Q=15000$ of queries, except for $\balls$.
}
\end{figure*}

\spara{Methodology.}
All the algorithms we test are randomized, hence we run each of them $50$ times and compute the empirical average and standard deviations of the total cost, precision and
recall values. We compute the average number $A$ of queries made by $\balls$ and then run our algorithm with an allowance of queries ranging from $2n$ to $A$ at regular intervals.

We use synthetic graphs to study how cost and recall vary in terms of (1) number of nodes $n$; (2) number of clusters $k$; (3) imbalance parameter $\alpha$; (4) noise parameter $\beta$.
For each plot, we fix all remaining parameters and vary one of them.

As the  runtime for $\ouralg$ scales linearly with the number of queries $Q$, which is an input parameter, we chose not to
report detailed runtimes. We note that a simple Python implementation of our methods runs in under two seconds in all cases on an Intel i7 CPU at 3.7Ghz, and
runs faster than the affinity propagation baseline we used (as implemented in Scikit-learn).%, while a C++ implementation runs in less than $0.1$ seconds.

\subsection{Experimental results}

 \begin{figure*}[h!]
         \centering
         \begin{tabular}{ccc}
\includegraphics[width=0.32\textwidth]{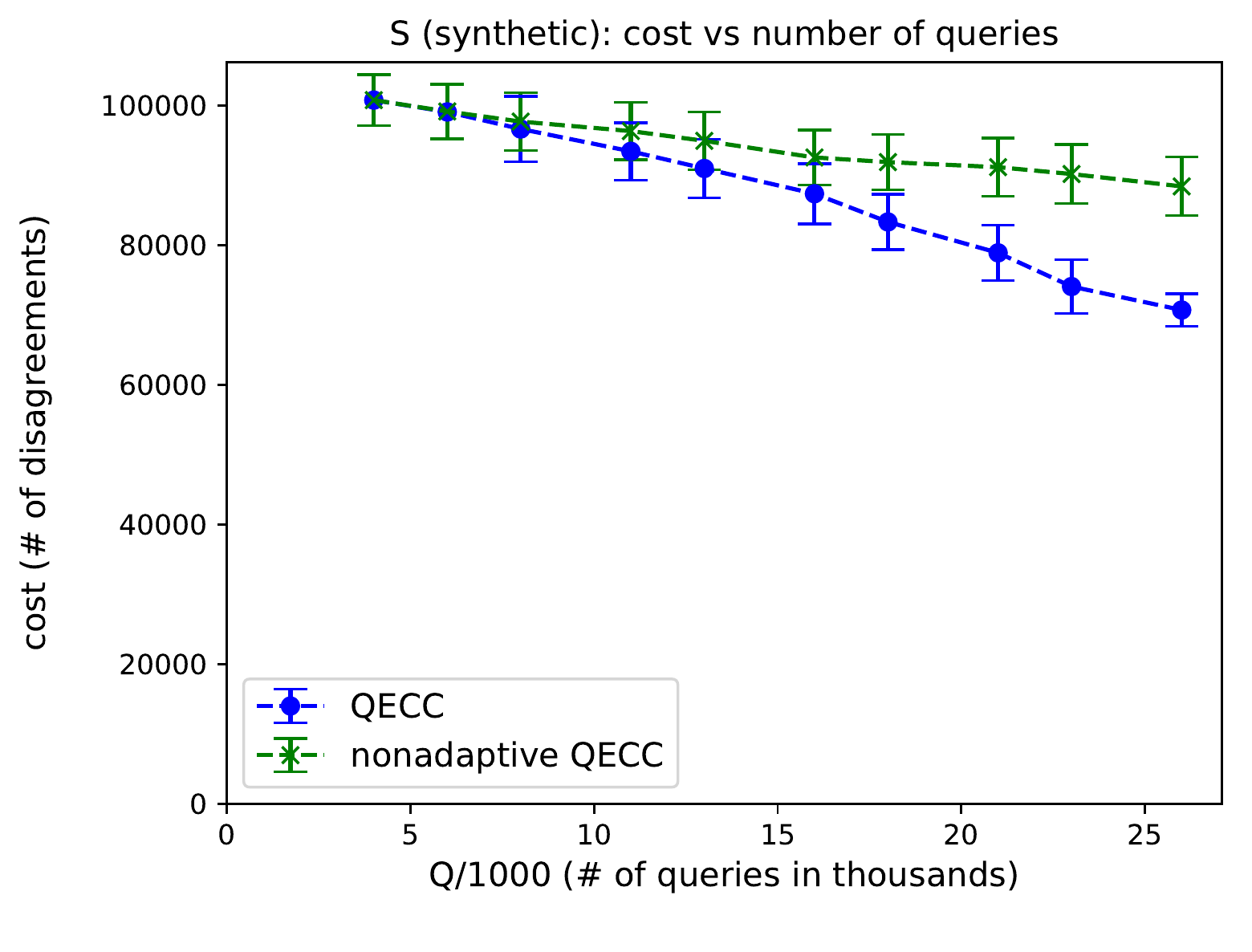} & \includegraphics[width=0.31\textwidth]{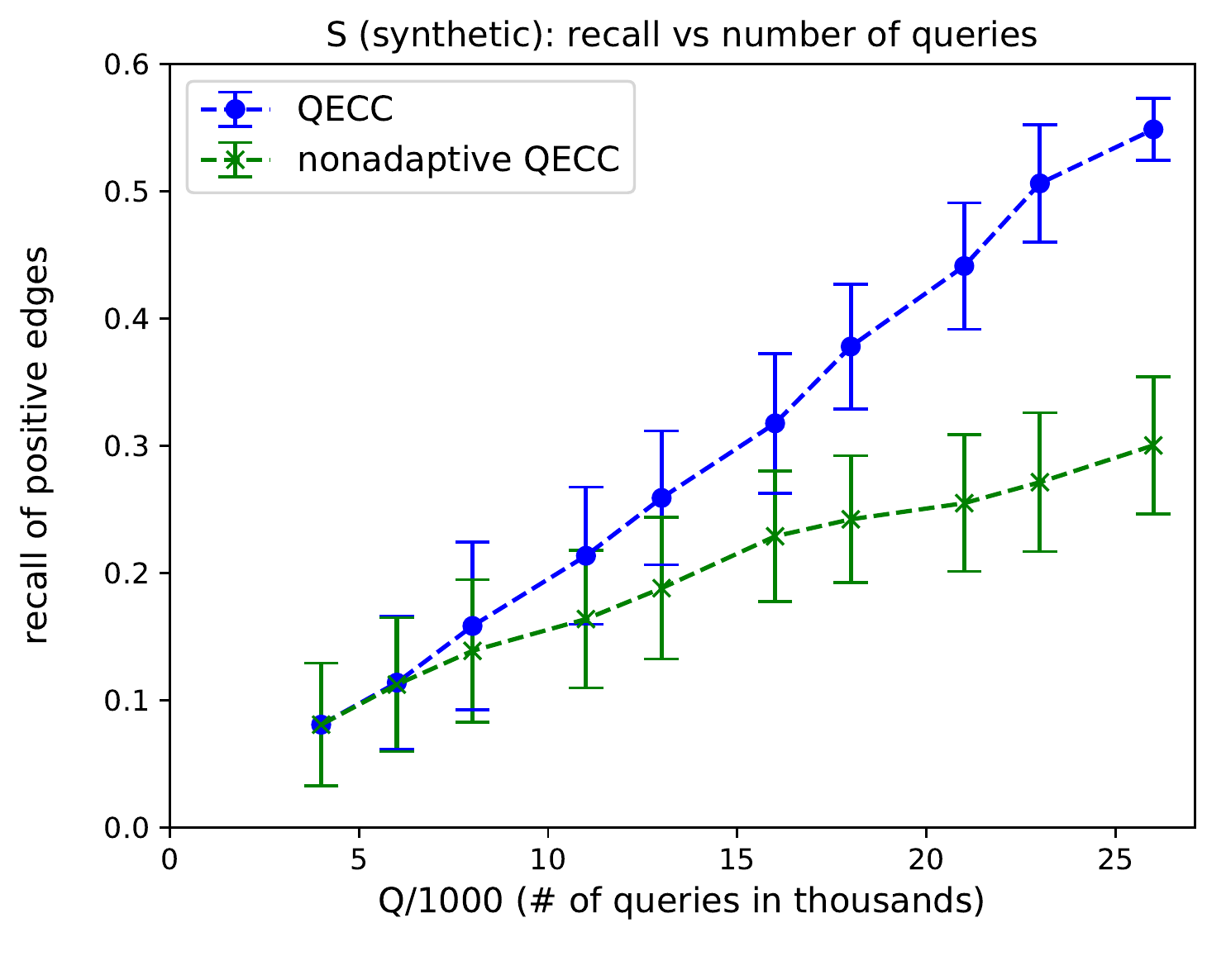} & \includegraphics[width=0.31\textwidth]{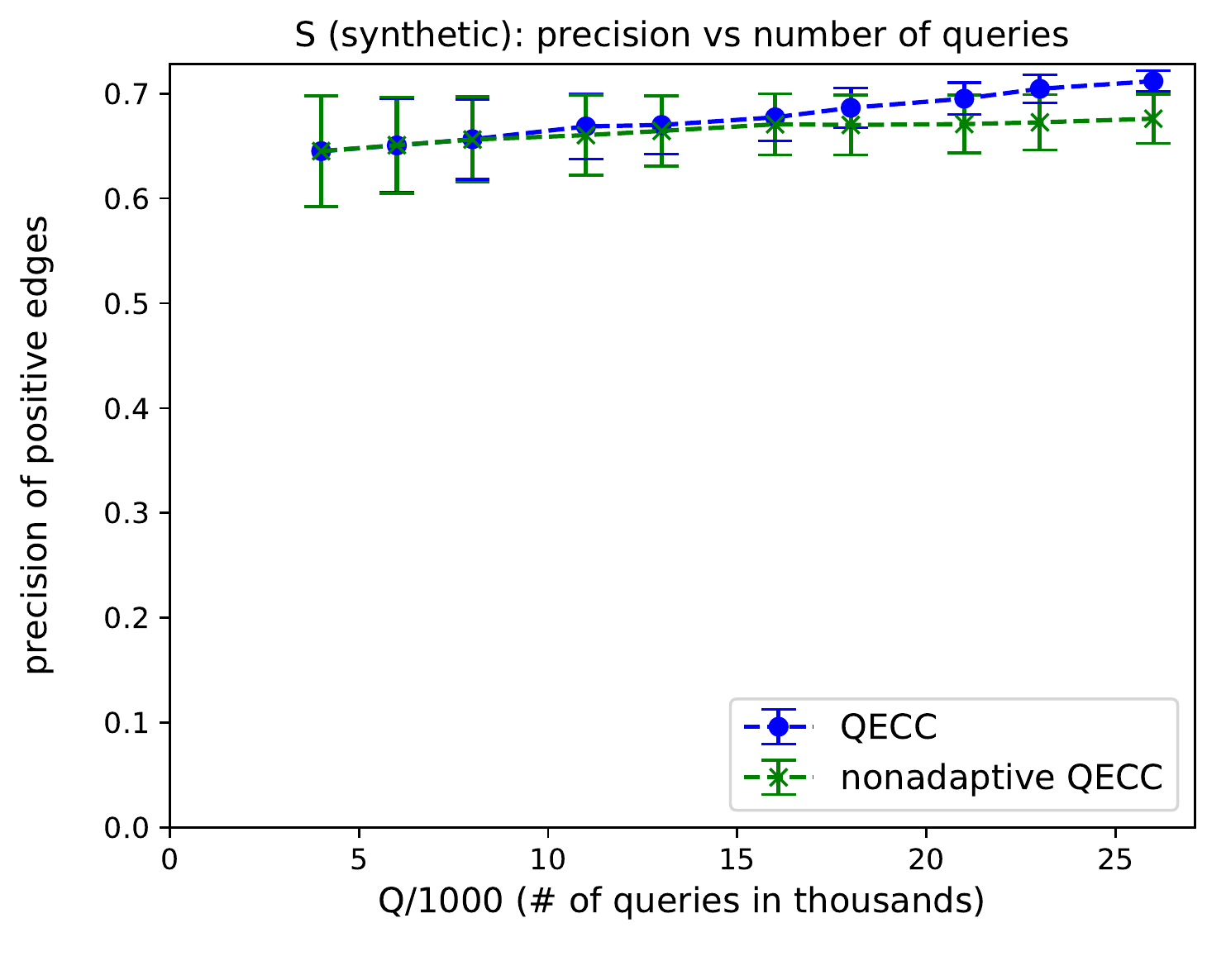}\\\
\includegraphics[width=0.32\textwidth]{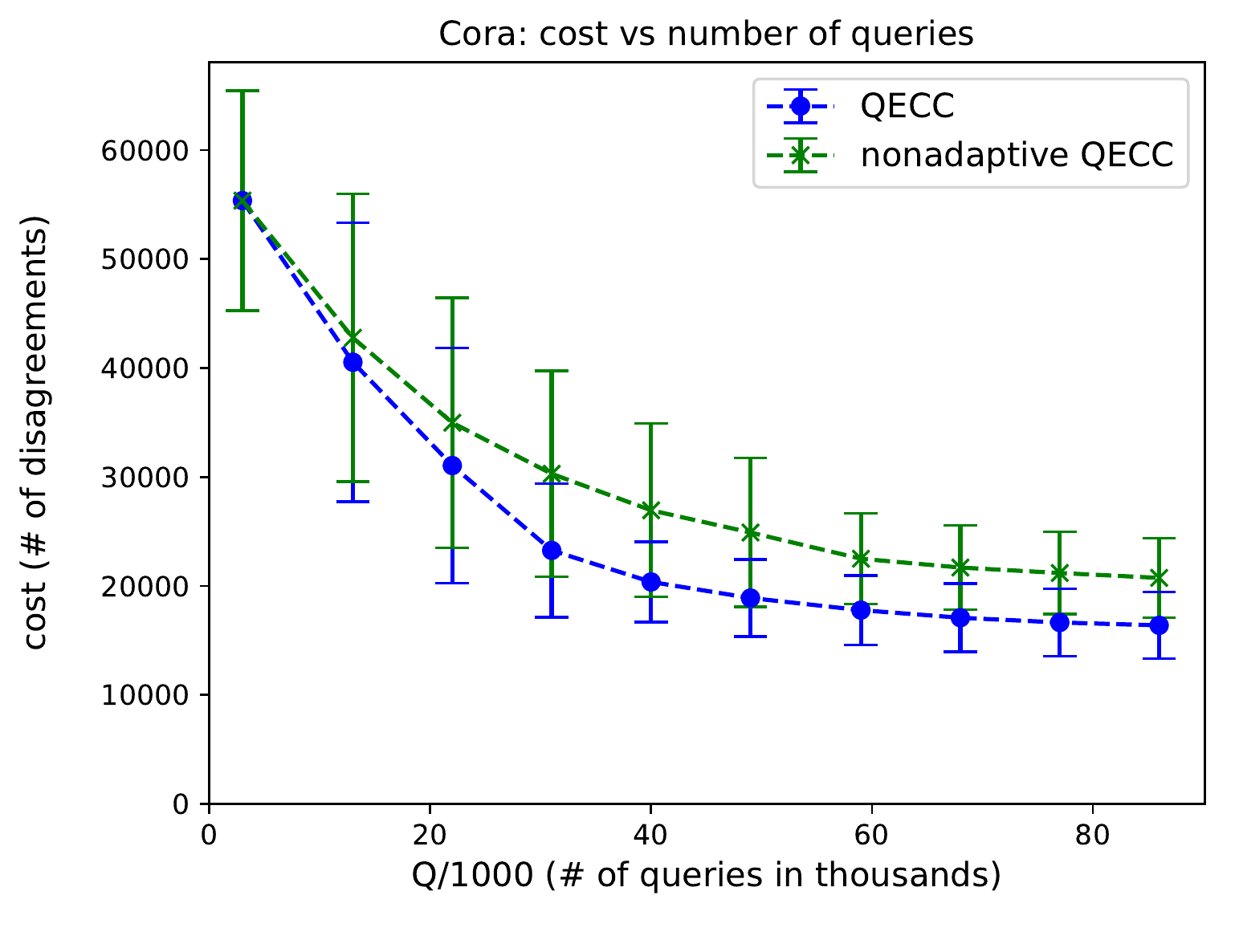} & \includegraphics[width=0.31\textwidth]{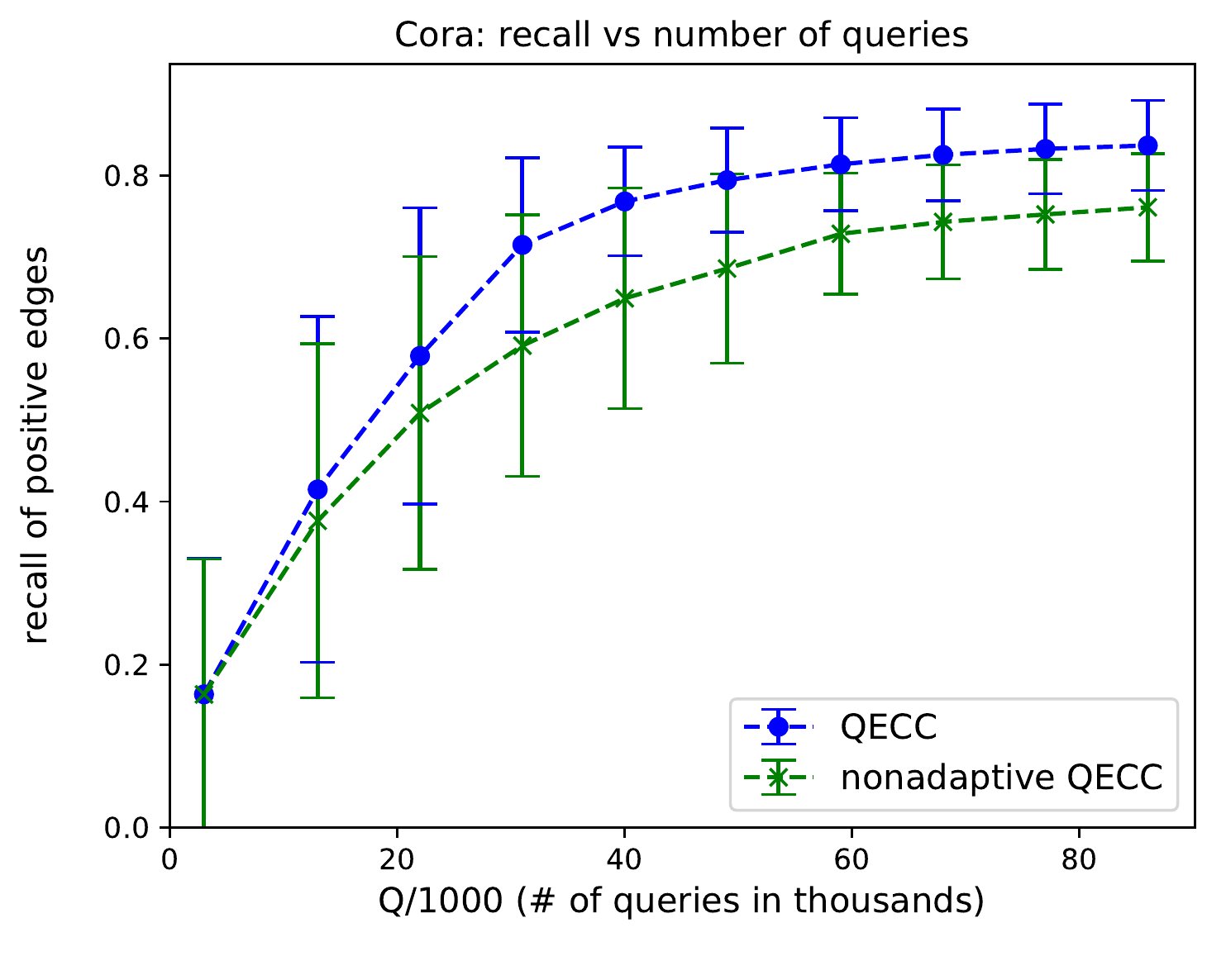} & \includegraphics[width=0.31\textwidth]{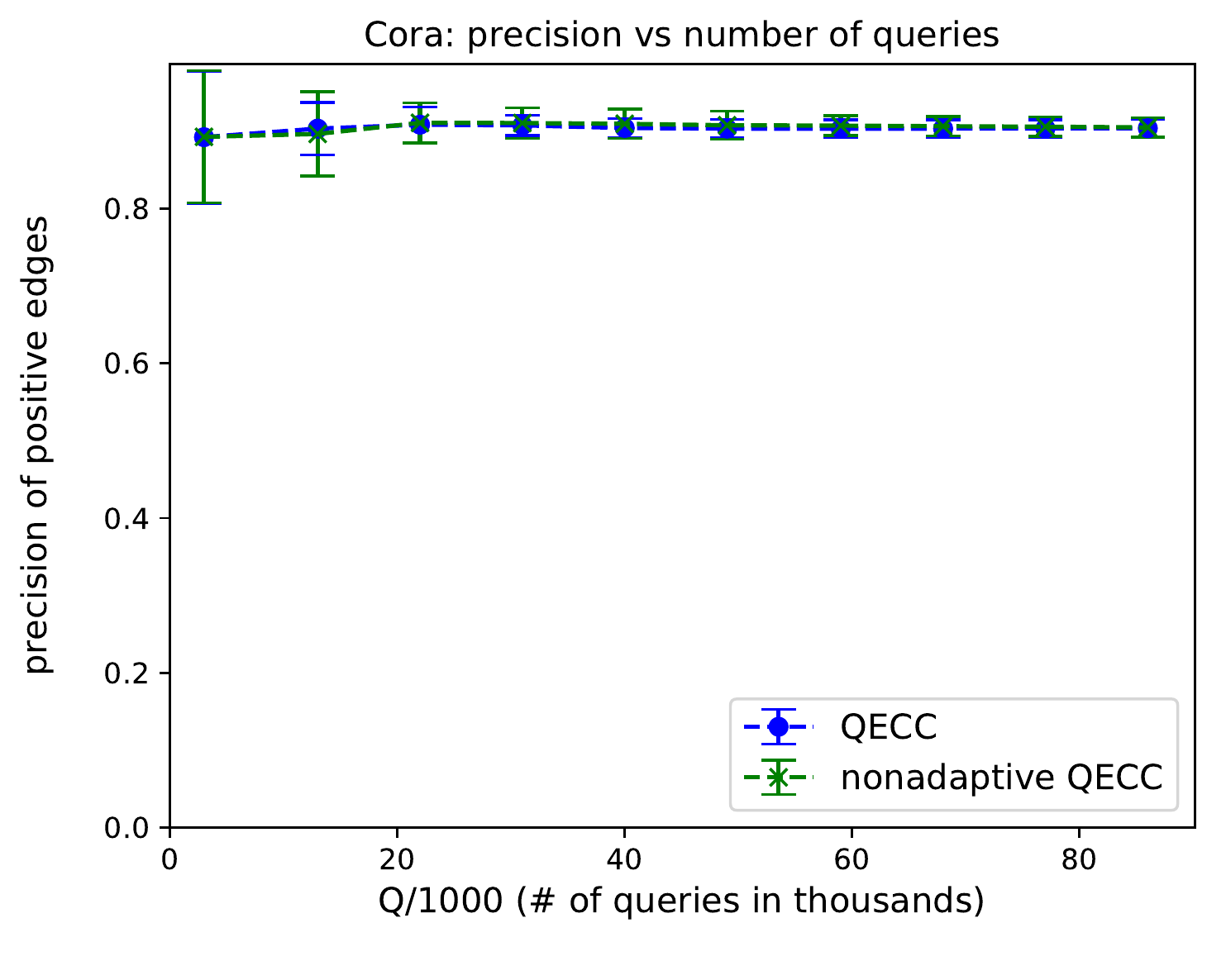}\\
  \end{tabular}
\caption{\label{fig:nonadaptive}
        Comparison of $\ouralg$ and its non-adaptive variant.
}
\end{figure*}
Table~\ref{tab:datasets} summarizes the datasets we tested.
Figure~\ref{fig:findings} shows the measured clustering cost against the number of queries $Q$ performed by $\ouralg$ and $\ourheur$ in the synthetic graph $S(2000, 20, 0.15, 2)$
and the real-world $\textsc{Cora}$, $\textsc{Citeseer}$ and
$\textsc{Mushrooms}$ datasets.

\spara{Comparison with the baseline.} It is clearly visible that both $\ourheur$ and $\ouralg$ perform noticeably better than the baseline for all query budgets $Q$.
As expected, all accuracy measures are improved with higher query budgets.
The number of non-singleton clusters found by $\ourheur$ and $\ouralg$ increases with higher values of $Q$, but decreases when using
the affinity-propagation-based baseline.
We do not show this value for the baseline on Mushrooms because it is of the order of hundreds; in this case the ground truth number of clusters is just two,
 and  $\balls$, $\ouralg$ and $\ourheur$ need very few queries (compared to $n$) to find the clusters quickly.

\spara{$\balls$ vs $\ouralg$ and $\ourheur$.}
At the limit, where $Q$ equals the average
number $A$ of queries made by $\balls$, both $\ouralg$ and $\ourheur$ perform as well as $\balls$.
%and, interestingly, find a number of clusters close to that of the ground truth.
In our synthetic dataset, the empirical average cost of $\balls$
is roughly 2.3 times the cost of the ground truth, suggesting that it is nearly a worst-case instance for our algorithm since $\balls$ has an expected 3-approximation guarantee.
Remarkably, in the real-world dataset
$\textsc{cora}$, $\ourheur$ can find a solution just as good as the ground truth with just 40,000 queries. Notice that this is half what $\balls$ needs and much smaller than the
$\binom{n}{2} \approx 1.7$ million queries that full-information methods  for correlation clustering such as~\cite{near_opt} require.

\spara{Effect of graph characteristics.}
As expected, total cost and recall improve with the number of queries on all datasets (Figure~\ref{fig:findings}); precision, however, remains mainly constant throughout
a wide range of query budgets.
To evaluate the impact of the graph and noise parameters on the performance of our algorithms, we perform additional tests on synthetic datasets where we fix all parameters to those of $S(2000, 20, 0.15,
        2)$ except for the one under study. Figure~\ref{fig:variation}
    shows the effect of graph size $n$ (1st row), number of clusters $k$, (2nd row), imbalance parameter $\alpha$ (3rd row) and noise parameter $\beta$ (4th row) on
    total cost and recall, in synthetic datasets. Here we used $Q = 15000$ for all three query-bounded methods: $\ouralg$, $\ourheur$ and the baseline.
    Naturally, $\balls$ gives the best results as it has no query limit. All other methods tested follow the same trends, most of which are intuitive:
\begin{itemize}
  \item  Cost increases with $n$, and recall decreases, indicating that more queries are necessary in larger graphs to achieve the same quality.
    Precision, however stays constant.
  \item Cost decreases with $k$ (because the graph has fewer positive edges). Recall stays constant for the ground truth and the unbounded-query method $\balls$ as it is
    essentially determined by the noise level, but it decreases
    with $k$ for the query-bounded methods. Again, precision remains constant except for the baseline, where it decreases with $k$.
  \item Recall increases with imbalance $\alpha$ because the largest cluster, which is the easiest to find, accounts for a larger fraction of the total number of edges $m$.
    Precision also increases.
    On the other hand, $m$ itself increases with imbalance, possibly explaining the increase in total cost.

\item Finally, cost increases linearly with the level of noise $\beta$, while recall and precision decrease as $\beta$ grows higher.
\end{itemize}

\spara{Effect of adaptivity.} Finally, we compare the adaptive $\ouralg$ with Non-adaptive $\ouralg$ as described at the end of Section~\ref{sec:main}.
Figure~\ref{fig:nonadaptive} compares the performance of both on the synthetic dataset and on $\textsf{Cora}$.
While both have the same theoretical
guarantees, it can be observed that the non-adaptive variant of $\ouralg$ comes at moderate increase in cost and, decrease in recall and precision.
%The adaptive algorithm can find more pivots (cluster centers) because it does not need to waste queries between a newly found pivot and the neighbors of previously found pivots.
\mycomment{

TODO
    Plots, real data:
        Balls, QEAffinity, QECC, heuristic with Q ranging:
            cora: 1 with error, 1 with recall, 1 with precision
            newsgroups? 1 with error, 1 with precision
    Plots,

    - plots, fix parameters, how?
    - some larger graph? string/Youtube?
    - plots with color names
    - plots with average number of clusters?
    \mycomment{
   datasets:
   pubmed https://linqs.soe.ucsc.edu/data
   NELL dataset can be found here at http://www.cs.cmu.edu/~zhiliny/data/nell_data.tar.gz
   https://linqs.soe.ucsc.edu/data

   from https://github.com/tkipf/gcn:

In this example, we load citation network data (Cora, Citeseer or Pubmed). The original datasets can be found here: http://linqs.cs.umd.edu/projects/projects/lbc/. In our version (see data folder) we use dataset splits provided by https://github.com/kimiyoung/planetoid (Zhilin Yang, William W. Cohen, Ruslan Salakhutdinov, Revisiting Semi-Supervised Learning with Graph Embeddings, ICML 2016).
    }
}

\section{Conclusions}\label{sec:conclusions}
This paper presents the first query-efficient correlation clustering algorithm with provable guarantees.
The trade-off between the running time of our algorithms
and the quality of the solution found  is nearly optimal. We also presented a more practical algorithm that consistently achieves higher recall values than  our theoretical
algorithm.
Both of our algorithms are amenable to simple implementations.

A natural question for further research would be to
obtain query-efficient algorithms based on the better LP-based approximation algorithms~\cite{near_opt}, improving the constant factors in our guarantee.  Another intriguing question is whether one can devise other graph-querying models that allow for improved theoretical results  while being reasonable from a practical viewpoint. The
reason
an additive term is needed in the error bounds is that, when the graph is very sparse, many queries are needed to distinguish it from an empty graph
(i.e., finding a positive edge).
We note
that if we allow neighborhood oracles (i.e., given $v$, we can obtain a linked list of the \emph{positive} neighbours of~$v$ in time linear in its length), then we can derive a
constant-factor approximation algorithm with $O(n^{3/2})$ neighborhood  queries, which can be significantly smaller than the number of edges.
Indeed, Ailon and Liberty~\cite{correlation_revisited} argue that with a neighborhood oracle, $\balls$ runs in time $O(n +
        OPT)$; if $OPT \le n^{3/2}$ this is $O(n^{3/2})$. 
On the other hand, if $OPT > n^{3/2}$ we can stop the algorithm after $r = \sqrt n$ rounds, and by Lemma~\ref{lem:indep}, we incur an additional cost of only $O(n^{3/2}) =
O(OPT)$.
This shows that more powerful oracles  allow for smaller query complexities.
Our heuristic $\ourheur$ also suggests that granting the ability to query a random positive edge may help.
These questions are particularly relevant to clustering graphs with many small clusters.

\begin{acks}
Part of this work was done while CT was visiting ISI Foundation. DGS, FB, and CT acknowledge support from Intesa Sanpaolo Innovation Center. The funders had no role in study design, data collection and analysis, decision to publish, or preparation of the manuscript.
\end{acks}

%%
%% The next two lines define the bibliography style to be used, and
%% the bibliography file.
\bibliographystyle{ACM-Reference-Format}
\bibliography{bibliography}

%\medskip 

%\begin{mdframed}[innerbottommargin=3pt,innertopmargin=3pt,innerleftmargin=6pt,innerrightmargin=6pt,backgroundcolor=gray!10,roundcorner=10pt]
%\spara{Acknowledgments.} Part of this work was done while CT was visiting ISI Foundation. DGS, FB, and CT acknowledge support from Intesa Sanpaolo Innovation Center. The funders had no role in study design, data collection and analysis, decision to publish, or preparation of the manuscript.
%\end{mdframed} 

\end{document}